\documentclass[12pt]{article}
\usepackage[a4paper, margin=1in]{geometry}
\usepackage[page]{appendix}
\usepackage{graphicx}
\usepackage{setspace}
\usepackage{enumitem}
\usepackage[font=footnotesize,margin=0in]{caption}

\usepackage{xcolor}
\usepackage{environ}

\usepackage{amsmath,mathtools}
\usepackage{amssymb}
\usepackage{dsfont}
\usepackage{amsthm}
\usepackage{tikz,tikz-qtree}

\usepackage[hidelinks,hypertexnames=false]{hyperref}
\usepackage[capitalise]{cleveref}

\newtheorem{lemma}{Lemma}
\newtheorem{proposition}{Proposition}
\newtheorem{theorem}{Theorem}

\theoremstyle{definition}
\newtheorem{definition}{Definition}

\theoremstyle{remark} 

\newtheorem{example}{Example}
\Crefname{observation}{Observation}{Observations}

\newcommand\reals{\mathbb R}

\newcommand\E{\mathbb E}

\let\originalleft\left
\let\originalright\right
\renewcommand{\left}{\mathopen{}\mathclose\bgroup\originalleft}
\renewcommand{\right}{\aftergroup\egroup\originalright}

\DeclareMathOperator*{\argmax}{arg\,max}

\usepackage[backend=biber,authordate,natbib,useprefix=true]{biblatex-chicago}
\begin{document}

\title{Comparative Statics of Information Acquisition and Risk Aversion\thanks{Cabrales gratefully acknowledges funding from AEI under grants PID2024-156629NB-I00 and CEX2021-001181-M. Curello acknowledges support from the German Research Foundation (DFG) through CRC TR 224 (Project B02).  Gossner acknowledges support from the French National Research Agency (ANR), ``Investissements d'Avenir'' (ANR-11-IDEX-0003/LabEx Ecodec/ANR-11-LABX-0047) and from UKRI (SInfoNiA). Serrano acknowledges research support from Fundaci\'on ONCE, and thanks Universidad Carlos III for its warm hospitality.  ChatGPT has been used for tightening proofs and exposition, as well as in passages of related literature, which have been fully checked by the authors. All remaining errors are our own.}
\author{\normalsize Antonio Cabrales, Gregorio Curello, Olivier Gossner, and Roberto Serrano\thanks{Cabrales: Department of Economics, Universidad Carlos III, \texttt{antonio.cabrales@uc3m.es}; Curello: Department of Economics, University of Mannheim, \texttt{gregorio.curello@uni-mannheim.de}; 
Gossner: CNRS -- \'Ecole Polytechnique and Department of Mathematics, London School of
Economics, \texttt{ogossner@gmail.com}; Serrano: Department of Economics, Brown University, \texttt{roberto\_serrano@brown.edu}.}}}

\date{September 2026}

\maketitle

\noindent \textbf{Abstract:} This paper studies how willingness to pay for information depends on risk aversion. We model a decision maker who faces background risk and can acquire information before choosing from a menu of assets. We show that the comparative statics depend on two factors. The first is whether available assets constitute an investment menu, whose payoffs are procyclical with background wealth, or an insurance menu, whose payoffs are countercyclical. The second factor is the tail geometry of background risk. We show that willingness to pay for information decreases with risk aversion for investment menus when the density of background risk is log-concave, and that it increases with risk aversion for insurance menus when background risk is downward-log-convex. The proofs compare the distributions of terminal wealth with and without information. They develop new aggregation arguments for state-dependent single-crossing comparisons. We also construct reversals under strictly log-convex tails for investment menus and super-exponential left tails for insurance menus.

\noindent \textbf{JEL classification numbers:} C00, C43, D00, D80, D81, G00, G11.

\noindent \textbf{Keywords:} investment, insurance, background risk tails, risk aversion, value of information. 

\newpage

\section{Introduction}\label{sec:introduction}

Information enables economic agents to adapt financial decisions to changing economic circumstances. It shapes both investment decisions \parencite{GrossmanStiglitz1980} and insurance choices \parencite{RothschildStiglitz1976}. When information acquisition is itself a choice, an important question is which agents are most willing to acquire information, and in particular how this willingness depends on their attitudes toward risk.

The answer depends on the use of this information. Less risk-averse investors may value information when it helps choose attractive investment opportunities. More risk-averse decision makers, on the other hand, may value information that lets them select insurance when it is most needed. We ask the following comparative statics question: under what conditions does greater risk aversion increase or decrease willingness to pay for information? 

We study this question in the presence of background risk. Before deciding whether to acquire information or select an asset, the decision maker already faces uncertainty about her initial wealth. She has a chance to buy a Blackwell experiment. After observing a signal realization, she can choose to purchase an asset. Information has a fixed price, which moves wealth down in every state, but it also allows the agent to choose an asset contingent on the signal. The value of information depends on how these two movements affect the final distribution of wealth.

In our main results, we show that the direction of the comparative statics depends both on the type of assets available and on the tail geometry of background risk. We distinguish investment menus, whose payoffs comove positively with background wealth, from insurance menus, whose payoffs move in the opposite direction. We show that under log-concave background risk, the willingness to pay for information decreases with risk aversion for investment menus. Under downward-log-convex background risk, it increases with risk aversion for insurance menus.

The mechanisms that explain these results are different for the two 
types of assets. With insurance menus, information can protect against adverse realizations of the background risk. When the support is unbounded below, and under downward log convexity, the fixed price has a relatively small effect on the lower-tail probabilities. But signal-contingent insurance can substantially improve outcomes in the worst states. For this reason, greater risk aversion will raise the value of information. On the other hand, with investment menus, information allows the decision maker to avoid exposure in the most adverse states and to select investment opportunities in more favorable ones. If the tail risk is log-concave, the downward wealth movement produced by the information cost has a stronger effect on lower-tail probabilities. This induces a lower willingness to pay when the risk aversion is higher.

In the proofs we first establish the existence of monotone optimal strategies under a MLR-ordering condition on the experiment. Then, we derive opposite conditional single-crossing comparisons for investment and insurance menus. The main difficulty in establishing this result is that these comparisons depend on the realization of background wealth and must be aggregated. For this reason, the main technical tool we use is a state-dependent single-crossing aggregation theorem. It extends the classical preservation result of \textcite{karlin1956}. We can apply it directly to investment menus. For insurance menus, we need to combine the same single-crossing principle on the lower region with a direct comparison in the rest of the distribution.

We also reach opposite conclusions to those in the main results by reversing tail behavior. For investment menus, strict log convexity reverses the log-concave geometry of \Cref{theorem:investment}, and willingness to pay for information may increase with risk aversion. For insurance menus, super-exponentiality is a counterpart to downward log convexity. While downward log convexity keeps $(\log f)'$ bounded above as wealth tends to $-\infty$, super-exponentiality makes it diverge to $+\infty$. This means that the information price can then dominate insurance in the extreme left tail. As a result, the willingness to pay may decrease with risk aversion.

The comparative statics determine which agents choose to become informed before making financial decisions. This selection may matter in broader models of financial and insurance markets, where the distribution of information across agents can affect price formation \parencite{GrossmanStiglitz1980,VanNieuwerburghVeldkamp2010} and adverse selection \parencite{RothschildStiglitz1976}. We do not analyze these equilibrium consequences here; our results instead identify a primitive force governing information acquisition that can be incorporated into such models. We also briefly consider mixed asset menus, comprising both investment and insurance opportunities, to examine how risk aversion affects the type of asset selected, separately from its effect on willingness to pay for information.

A broad literature in macroeconomics and finance emphasizes the economic importance of tail behavior \parencite{rietz1988,barro2006rare,gabaix2012rare}, while the literature on information and risk aversion obtains sign results that depend on the decision problem and payment convention \parencite{gould1974risk,hilton1981determinants,freixas1984risk,willinger1989risk,nadiminti1996risk,kihlstrom2022risk}. Our results connect these literatures by identifying asset cyclicality and the tail geometry of background risk as joint determinants of the sign of this comparative static within an expected-utility framework.

The remainder of the paper is organized as follows. \Cref{sec:motivating-examples} presents motivating examples. \Cref{sec:environment} introduces the model and defines willingness to pay for information. \Cref{sec:main-results} develops the main comparative-statics results as well as the aggregation arguments that support them. \Cref{sec:reversals} constructs reversals using the opposite tail geometry.
\Cref{sec:extensions-and-applications} first studies the impact of the tail assumptions. Then, it presents applications to financial options and continuous-state environments with multiple assets. Finally, it considers a different asset-selection question which is raised by mixed menus.
\Cref{sec:related-literature} discusses the related literature. \Cref{sec:conclusion} concludes. Proofs omitted from the text are collected in the \hyperref[app:start]{Appendix}.

\section{Motivating Examples}\label{sec:motivating-examples}

Before introducing the general model, we illustrate the two opposite comparative statics that motivate the paper. In both examples, information allows the decision maker to condition her choice of an asset on the realized state. This makes it valuable. The main difference between the two cases is the economic role of the asset. Information can be used to select a risky investment in the first case or an insurance opportunity in the second. The resulting comparative statics point in opposite directions. They thus suggest that the nature of the available assets is one important determinant of how the value of information varies with risk aversion.

For these examples we use on purpose two states and CARA utility. In this way, the investment-versus-insurance mechanism can be shown without a tail-shape condition. This is useful in turn to interpret the main results below. With only two support points, discrete log concavity or log convexity imposes no additional restriction. The log-concavity and downward-log-convexity assumptions in our main theorems are relevant because we then move to general background-risk distributions and a broad class of utility functions. There, our conditional distributional comparisons must be aggregated across wealth states.

\begin{example}
\label{example investment} 
Suppose that a businessperson with CARA utility $u_r(m)=-e^{-rm}$, $r>0$, is deciding
whether to invest in a technology startup producing an electric car. There are two states of the world: high (the economy goes generally well), and low (the economy tumbles). The payoff from not investing is $\left( k_{H},k_{L}\right) $, with $k_{H}>k_{L}$. The net payoff from investing is $b_{H}=H>0$ in the high state (electric mobility succeeds only in a strong economy), and $b_{L}=-L$ in the low state, where $L>0$. Thus, the investment is procyclical. We consider an information structure $\alpha $ that perfectly reveals the state.
Let $p\in(0,1)$ denote the prior probability of the low state, and assume that
\begin{equation*}
(1-p)H-pL<0.
\end{equation*}
so that a risk-neutral agent does not invest without information. Under the displayed inequality and $k_H>k_L$, the status quo is optimal for every $r>0$. Indifference between remaining uninformed and purchasing full information at price $I$ gives
$$
I=-\frac{1}{r}\log\left(\frac{(1-p)e^{-r(H+k_H)}+pe^{-rk_L}}{(1-p)e^{-rk_H}+pe^{-rk_L}}\right).
$$
Lemma~\ref{lemma:example1} shows that willingness to pay is strictly decreasing in $r$.

\begin{lemma}
\label{lemma:example1}
Let $0<p<1$, $H>0$, and $k_H>k_L$.
Define, for $r>0$, 
\begin{equation*}
g(r) =-\frac{1}{r}\ln\left( \frac{(1-p)e^{-r(H+k_H)} + p e^{-r k_L}} {%
(1-p)e^{-r k_H} + p e^{-r k_L}} \right).
\end{equation*}
Then $g$ is strictly decreasing on $(0,\infty)$.
\end{lemma}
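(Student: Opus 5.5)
The idea is to rewrite $g$ so that its dependence on $r$ sits inside an integral whose integrand is visibly monotone in $r$. This avoids differentiating the quotient directly.

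First, normalise by the uninformed "high-state" term. Set $\Delta = k_H - k_L > 0$ and $c = p/(1-p) > 0$. Divide numerator and denominator of the ratio in $g$ by $(1-p)e^{-rk_H}$. This gives
\[
g(r) = -\frac{1}{r}\ln\left(\frac{e^{-rH} + c e^{r\Delta}}{1 + c e^{r\Delta}}\right).
\]
Next, factor $e^{-rH}$ out of the numerator, using $e^{-rH} + c e^{r\Delta} = e^{-rH}\bigl(1 + c e^{r(\Delta+H)}\bigr)$. This yields
\[
g(r) = H - \frac{1}{r}\Bigl[\ln\bigl(1 + c e^{r(\Delta+H)}\bigr) - \ln\bigl(1 + c e^{r\Delta}\bigr)\Bigr].
\]
So it suffices to show that the bracketed difference, divided by $r$, is strictly increasing in $r$ on $(0,\infty)$.

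The key step is the fundamental theorem of calculus in the exponent variable $a$, with $r$ held fixed. We have $\frac{\partial}{\partial a}\ln(1 + c e^{ra}) = \frac{r c e^{ra}}{1 + c e^{ra}}$. Integrating over $a \in [\Delta, \Delta+H]$ and dividing by $r$ gives
\[
\frac{1}{r}\Bigl[\ln\bigl(1 + c e^{r(\Delta+H)}\bigr) - \ln\bigl(1 + c e^{r\Delta}\bigr)\Bigr] = \int_{\Delta}^{\Delta+H} \frac{c e^{rs}}{1 + c e^{rs}}\, ds .
\]
The factor $1/r$ cancels exactly, which is the point of the manipulation.

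The integrand is the logistic function $x \mapsto \frac{ce^x}{1+ce^x}$ evaluated at $x = rs$. The logistic function is strictly increasing in $x$. Every $s$ in the range of integration satisfies $s \ge \Delta > 0$, because $k_H > k_L$; this is exactly where that hypothesis enters. Hence for each such $s$ the integrand is strictly increasing in $r$. The interval has positive length $H > 0$, so the integral is strictly increasing in $r$, and therefore $g$ is strictly decreasing.

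The only delicate point is getting the algebraic normalisation right, so that the lower limit of integration is $\Delta > 0$ rather than something of indeterminate sign. A direct attack by differentiating $g$ and signing the derivative would lead to a messy inequality. I expect the integral representation to sidestep that obstacle entirely.
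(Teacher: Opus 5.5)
Your proof is correct. It takes a more direct route than the paper, although the two are closely related.

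The paper factors out $e^{-rk_L}$ and writes the ratio as $X_{d+H}(r)/X_d(r)$, with $X_\lambda(r)=p+(1-p)e^{-\lambda r}$ and $d=k_H-k_L$. It then invokes its two-exponential ratio lemma. That lemma differentiates $j(r)=-\frac{1}{r}\log\bigl(X_{\lambda_1}(r)/X_{\lambda_0}(r)\bigr)$ in $r$ and writes $j'(r)=\bigl(\Phi_{\lambda_1}(r)-\Phi_{\lambda_0}(r)\bigr)/r^2$ for an auxiliary $\Phi_\lambda$. It signs this difference by computing $\partial_\lambda\Phi_\lambda$ and integrating over $[\lambda_0,\lambda_1]$.

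You integrate over the exponent before any differentiation in $r$. The $1/r$ prefactor then cancels, and monotonicity in $r$ is visible pointwise in the integrand. Equivalently, your representation reads
\[
g(r)=\int_{\Delta}^{\Delta+H}\frac{1-p}{(1-p)+p\,e^{rs}}\,ds .
\]

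The paper's packaging gives one lemma that also covers the insurance example, Lemma~\ref{lemma:example2}, where an extra term $e^{-cr}$ with $c>0$ appears. Your device handles that case just as easily:
\[
-\frac{1}{r}\log\frac{a e^{-cr}+b e^{-\lambda_1 r}}{a e^{-cr}+b e^{-\lambda_0 r}}=\int_{\lambda_0}^{\lambda_1}\frac{b}{b+a e^{(\lambda-c)r}}\,d\lambda .
\]
The integrand is decreasing in $r$ when $\lambda>c$ and increasing when $\lambda<c$. This yields both cases of the two-exponential lemma without computing $\Phi_\lambda$.

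One minor remark: the weak inequality $k_H\ge k_L$ would suffice in your argument, as it does in the paper's condition $c\le\lambda_0$. Even when $\Delta=0$, one has $s>0$ for almost every $s$ in the range of integration.
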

\end{example}

\begin{example}
\label{example insurance} 
Suppose now that the businessperson has the same CARA utility and is deciding
whether to invest in a company that produces small prefabricated houses that are very cheap to build but are not very desirable. There are two states of the world: high (the economy goes generally well), and low (the economy tumbles). 

The payoff from not investing is $\left( k_{H},k_{L}\right) $, with $k_{H}>k_{L}$. The net payoff from investing is $b_{H}=-H<0$ in the high state (the cheap house does not sell well in a strong economy), and $b_{L}=L>0$ in the low state (where low-quality housing is in high demand), with $k_{L}+L<k_{H}-H$. Thus, the asset is countercyclical and acts as an insurance opportunity. We consider an information structure $\alpha $ that perfectly reveals the state.
Let $p\in(0,1)$ denote the prior probability of the low state, and assume that
\begin{equation*}
-\left( 1-p\right) H+pL<0,
\end{equation*}%
or equivalently%
\begin{equation*}
p<\frac{H}{H+L}
\end{equation*}%
so that a risk-neutral agent does not invest without information. Under the displayed restrictions, there is a unique $\bar r>0$ such that the status quo is optimal without information if and only if $0<r\leq\bar r$. For these coefficients, indifference between remaining uninformed and purchasing full information at price $I$ gives
$$
I=-\frac{1}{r}\log\left(\frac{(1-p)e^{-rk_H}+pe^{-r(L+k_L)}}{(1-p)e^{-rk_H}+pe^{-rk_L}}\right).
$$
Thus willingness to pay equals $g(r)$ in Lemma~\ref{lemma:example2}, which is strictly increasing in $r$.

\begin{lemma}
\label{lemma:example2}
Let $0<p<1$, $L>0$, and $k_H>k_L+L$.
Define, for $r>0$, 
\begin{equation*}
g(r) = -\frac{1}{r}\ln\left( \frac{(1-p)e^{-rk_H} + p e^{-r(L+k_L)}} {%
(1-p)e^{-rk_H} + p e^{-rk_L}} \right).
\end{equation*}
Then $g$ is strictly increasing on $(0,\infty)$.
\end{lemma}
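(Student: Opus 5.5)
The plan is to reduce $g$ to a difference of certainty equivalents of a single two-point lottery evaluated at two different prizes, and then to differentiate in the prize rather than in $r$. Put $a:=k_H-k_L$. By hypothesis $a>L>0$, so in particular $a-L>0$. Multiply the numerator and the denominator inside the logarithm by $e^{rk_L}$. The numerator becomes $(1-p)e^{-ra}+pe^{-rL}=e^{-rL}\bigl((1-p)e^{-r(a-L)}+p\bigr)$, and the denominator becomes $(1-p)e^{-ra}+p$. For $c\ge 0$ and $r>0$ define
\begin{equation*}
h_r(c):=-\frac{1}{r}\ln\bigl((1-p)e^{-rc}+p\bigr).
\end{equation*}
This is the CARA certainty equivalent of the lottery that pays $c$ with probability $1-p$ and $0$ with probability $p$. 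With this notation,
\begin{equation*}
g(r)=L+h_r(a-L)-h_r(a)=L-\int_{a-L}^{a}\partial_c h_r(c)\,dc .
\end{equation*}
It therefore suffices to show that $r\mapsto\int_{a-L}^{a}\partial_c h_r(c)\,dc$ is strictly decreasing on $(0,\infty)$.

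Compute the derivative in the prize:
\begin{equation*}
\partial_c h_r(c)=\frac{(1-p)e^{-rc}}{(1-p)e^{-rc}+p}=q(rc),\qquad q(x):=\frac{1}{1+\frac{p}{1-p}e^{x}} .
\end{equation*}
The function $q$ is strictly decreasing in $x$ because $0<p<1$. On the interval of integration every $c$ satisfies $c\ge a-L>0$, so $rc$ is strictly increasing in $r$ there. Hence $q(rc)$ is strictly decreasing in $r$ for each such $c$. Integrating over an interval of positive length $L$ gives strict decrease of the integral, and therefore strict increase of $g$. If one prefers a derivative argument, differentiating under the integral gives $\frac{d}{dr}\int_{a-L}^{a} q(rc)\,dc=\int_{a-L}^{a} c\,q'(rc)\,dc<0$.

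The only step that needs care is the normalization. The hypothesis $k_H>k_L+L$ is used exactly to guarantee $a-L>0$, so that the whole range of integration lies in $c>0$, where the monotonicity in $r$ has a definite sign. Had the interval contained negative $c$, the integrand's dependence on $r$ would change sign and the argument would fail. Everything else is a routine calculation. In particular, the fundamental theorem of calculus applies because $h_r$ is smooth in $c$.
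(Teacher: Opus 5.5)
Your proof is correct. It starts where the paper's does: factoring out $e^{-rk_L}$ so that, with $\Delta=k_H-k_L$, each log-term becomes $-\frac{1}{r}\log\bigl((1-p)e^{-\Delta r}+pe^{-\lambda r}\bigr)=\lambda+h_r(\Delta-\lambda)$ for $\lambda\in\{0,L\}$. The difference lies in how you extract the monotonicity.

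The paper invokes its two-exponential ratio lemma (Lemma~\ref{lemma:two-exponential-ratio}). That lemma differentiates in $r$ first, writing $g'(r)=(\Phi_L(r)-\Phi_0(r))/r^2$ with $\Phi_\lambda=\log X_\lambda-rX_\lambda'/X_\lambda$. It then signs $\partial_\lambda\Phi_\lambda$, which carries the factor $\Delta-\lambda>0$ on $[0,L]$.

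You instead differentiate in the prize first. You observe that your $\partial_c h_r(c)=q(rc)$ depends on $(r,c)$ only through the product $rc$, so a pointwise comparison between two values of $r$ finishes the argument with no derivative in $r$ at all.

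Under the substitution $c=\Delta-\lambda$, both proofs sign the same mixed partial: one checks that $\partial_\lambda\Phi_\lambda=-r^2\,c\,q'(rc)$. The hypothesis $k_H>k_L+L$ therefore enters identically, as positivity of $c$ (equivalently of $\Delta-\lambda$) on the relevant range.

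Your route is more transparent and gives $g$ a certainty-equivalent reading. The paper's route packages the computation into one lemma whose two sign cases deliver both this statement and Lemma~\ref{lemma:example1} at once. Your scaling argument would handle that example just as easily, with the integrand entering with the opposite sign.
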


\end{example}

The examples isolate the cyclicality channel. Information supports greater exposure in the good state in the investment example and greater protection in the bad state in the insurance example. Because the signal fully reveals the state, this channel alone determines the comparative statics among CARA coefficients for which the status quo is optimal without information. Proposition~\ref{prop:complete-information} in Appendix~\ref{sec:full revelation} establishes the corresponding distribution-free benchmark: under full revelation, willingness to pay decreases with risk aversion when every asset is an investment and increases when every asset is an insurance.

Imperfect information creates the second ingredient in the main results. Paying for information shifts wealth downward in every state, while noisy signal-contingent asset choice moves probability mass across a cutoff in selected states. The relevant conditional comparisons must therefore be aggregated across background-wealth realizations. Log concavity makes the price effect relatively important in the lower tail for investment menus. Downward log convexity gives the opposite local geometry for insurance menus, while an unbounded lower support rules out a hard-boundary effect for very risk-averse agents. Section~\ref{sec:environment} now introduces the general model, and Section~\ref{sec:assumption-diagnostic} returns to these distinct roles.

\section{Environment}\label{sec:environment}

This section introduces the decision problem and defines willingness to pay for information. The decision maker faces uncertain initial wealth, may purchase information about its realization, and then selects an asset whose payoff depends on that realization. We subsequently compare decision makers who face the same economic environment but differ in risk aversion.

The agent has uncertain wealth $W$ with CDF $F$, whose realization is denoted $w$. We refer to $F$ as the distribution of \emph{background risk}. We assume that $\int |w|\,\mathrm dF(w)<\infty$ and that $F$ admits a density $f:\mathbb R\to\mathbb R_+$.
She has a (strictly) increasing utility function over money $u : \mathbb{R} \to \mathbb{R}$, which is $F$-integrable, absolutely continuous on any bounded interval, and satisfies $\limsup_{m \to \infty} u(m)/m < \infty$.%
\footnote{A map $u : \mathbb{R} \to \mathbb{R}$ is $F$-integrable if it is measurable and such that $\int |u| \,\mathrm{d}F < \infty$; the growth condition on $u$ is used to ensure that all integrals involving $u$ that we consider are well defined.}
We let $U$ be the set of functions $\mathbb R\to\mathbb R$ satisfying these properties. Recall that, given $u,v \in U$, $v$ is more \emph{risk-averse} than $u$ if $v = \phi \circ u$ for some increasing and (weakly) concave $\phi : \mathbb{R} \to \mathbb{R}$.

An \emph{asset} describes an extra monetary payoff received by the agent as a function of the realization of $w$. It is given by an $F$-integrable map $b : \mathbb{R} \to \mathbb{R}$, where $b(w)$ is the amount paid by the asset when the background risk realization is $w$. The agent has access to a set of assets $B$. We assume that $B$ includes the status quo which gives $b(w) =0$ for every $w$, and which we denote $\mathbf{0}$. 
We also assume that $B$ is compact when endowed with the $L^1$-metric induced by $F$.%
\footnote{That is, we assume that any sequence $(b_n)_{n \in \mathbb{N}}$ in $B$ admits a subsequence $(b'_n)_{n \in \mathbb{N}}$ such that, for some $b \in B$, $\int |b'_n-b| \,\mathrm{d}F \to 0$ as $n \to \infty$.}

The agent may decide to purchase information prior to choosing an asset. In this case, she pays a price $\mu \ge 0$ and gains access to a Blackwell experiment:
\begin{equation*}
	\alpha = (X,(h_w)_{w \in \reals})
\end{equation*} 
where $X$ is a finite set of signals, $h_w$ is a probability mass function over $X$ for each $w \in \mathbb{R}$, and the maps $w \mapsto h_w(x)$ are measurable for each $x \in X$.
We assume that $\alpha$ is \emph{MLR-ordered}: $X\subset\mathbb R$, and the kernel
$(w,x)\mapsto h_w(x)$ is TP2, meaning that
$$
h_w(x)h_{w'}(x')
\ge
h_w(x')h_{w'}(x)
$$
whenever $w\le w'$ and $x\le x'$.%
\footnote{``TP2'' stands for ``totally positive of order 2''. See Section~4 of \textcite{jewitt1987} for a brief overview of the theory of total positivity.}

When all signal probabilities are
strictly positive, this condition is equivalent to requiring
$h_w(x')/h_w(x)$ to be nondecreasing in $w$ for every $x<x'$. We assume without loss that every signal has
positive ex ante probability; signals with zero ex ante probability are deleted from $X$.

For each $x\in X$, let $h(x)$ be the probability of observing signal $x$, let $F_x$ be the posterior distribution after observing $x$, and let $f_x$ be its density.%
\footnote{Thus, $h(x)=\int h_w(x)\,\mathrm{d}F(w)>0$ and
$f_x(w)=\frac{h_w(x)f(w)}{h(x)}.$}

We restrict attention to environments for which the status quo is optimal before information is acquired. We refer to this condition as zero asset under zero information (ZAZI):
\begin{equation}
	\label{eq:ZAZI}
	\tag{ZAZI}
	\mathbf{0} \in \argmax_{b \in B} \int u(w+b(w))\,\mathrm{d}F(w).
\end{equation}
The ZAZI condition is a clean way to isolate the value of using information to make the asset choice: without information, the agent does not acquire any asset.%
\footnote{Our hypotheses on $F$, $u$, and $B$ guarantee that the maximization problems in \eqref{eq:ZAZI} and \eqref{eq:info} below admit solutions. See \Cref{app:prop:mustar} for a proof.}

\subsection{Willingness to Pay for Information}

An asset-selection \emph{strategy} for the decision maker is a family $\beta = (\beta_x)_{x \in X} \in B^X$, describing the asset chosen after observing each signal realization. 
If the agent purchases information and uses strategy $\beta$, the CDF of her terminal wealth $G^\mu_\beta : \mathbb{R} \to [0,1]$ is given by 
\begin{equation}
		\label{eq:G_beta}
		G^\mu_\beta(m) = \sum_{x \in X} h(x) \int \mathbf{1}_{\{w + \beta_x(w)-\mu \le m\}} \,\mathrm{d}F_x(w).
\end{equation}
A strategy is \emph{optimal} for $u$ if it maximizes the expected payoff $\int u \,\mathrm{d}G^\mu_\beta$ across all strategies $\beta \in B^X$.%
\footnote{The objective may equal $-\infty$ for all $\beta \in B^X$. In this case, all strategies are optimal. Note that \eqref{eq:info} remains valid since the right-hand side is finite, by hypothesis.}
The agent (strictly) prefers to purchase information if 
\begin{equation}
	\label{eq:info}
	\max_{\beta \in B^X} \int u \,\mathrm{d}G^\mu_\beta  \ge \mathrel{(>)} \int u(w) 
    \,\mathrm{d}F(w).%
\end{equation}

The following proposition shows that the agent's purchasing decision is summarized by a unique willingness to pay for information.

\begin{proposition}
\label{prop:mustar} 
For any utility function $u \in U$, there exists a unique price $M(u) \in \mathbb{R}_+$ such that the agent strictly prefers not purchasing at prices $\mu>M(u)$,
and she strictly prefers purchasing at prices $\mu<M(u)$.
\end{proposition}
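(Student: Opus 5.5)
Define $V(\mu)=\max_{\beta\in B^X}\int u\,\mathrm dG^\mu_\beta$ and $V_0=\int u(w)\,\mathrm dF(w)$, which is finite because $u$ is $F$-integrable. By \eqref{eq:G_beta}, $\int u\,\mathrm dG^\mu_\beta=\sum_x h(x)\int u(w+\beta_x(w)-\mu)\,\mathrm dF_x(w)$. Since the choice of $\beta_x$ separates across signals, $V(\mu)=\sum_x h(x)\max_{b\in B}\int u(w+b(w)-\mu)\,\mathrm dF_x(w)$; existence of the maximizers is taken from the footnote. The plan has three steps. First, show that $V$ is strictly decreasing in $\mu$ wherever it is finite. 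Second, show $V(0)\ge V_0$. Third, show $V(\mu)<V_0$ for $\mu$ large. We then set $M(u)=\sup\{\mu\ge 0: V(\mu)\ge V_0\}$, which is a well-defined element of $\mathbb R_+$ by the second and third steps. By monotonicity, every $\mu<M(u)$ satisfies $V(\mu)>V(M(u))$, and if $V(M(u))$ is not finite a separate small argument is needed. Hence $V(\mu)>V_0$ for $\mu<M(u)$ provided $V(M(u))\ge V_0$; in general, $V(\mu)>V(\mu')\ge V_0$ for any $\mu<\mu'<M(u)$. For $\mu>M(u)$, the definition of the supremum gives $V(\mu)<V_0$. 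Uniqueness is immediate, since two distinct such prices would produce a contradictory preference at any $\mu$ strictly between them.

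\emph{Step 2.} Take $\beta_x=\mathbf 0$ for all $x$. Since $\sum_x h(x)F_x=F$, this gives $V(0)\ge V_0$. We do not need ZAZI here; ZAZI would matter only if we also wanted the reverse comparison.

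\emph{Step 1.} Fix $\mu<\mu'$ and let $\beta'$ be optimal at $\mu'$. Because $u$ is strictly increasing, $u(w+\beta'_x(w)-\mu)>u(w+\beta'_x(w)-\mu')$ pointwise. Integrating gives $V(\mu)\ge\int u\,\mathrm dG^\mu_{\beta'}>V(\mu')$, and the inequality stays strict because the integrand difference is strictly positive on a set of positive measure. This requires $V(\mu')>-\infty$. We also need $V(\mu)<+\infty$. This follows from $\limsup_{m\to\infty} u(m)/m<\infty$, which gives $u(m)\le a+c|m|$, together with integrability of $w$ and $b$.

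\emph{Step 3, the main obstacle.} We need $V(\mu)\to$ something below $V_0$, uniformly over strategies. Using $u(m)\le a+c|m|$ alone is not enough, because it does not push values down. The plan is to use compactness of $B$ in $L^1(F)$. For each $x$, the family $\{w\mapsto w+b(w): b\in B\}$ is $L^1(F_x)$-compact on the support of $h_\cdot(x)$, because $f_x\le f/h(x)$; hence it is uniformly integrable. For $K$ large, the probability under $F_x$ that $w+b(w)>K$ is then uniformly small, and the contribution of the upper tail of $u$ is uniformly small by the linear growth bound. On the remaining event, where $w+b(w)\le K$, the integrand satisfies $u(w+b(w)-\mu)\le u(K-\mu)$. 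As $\mu\to\infty$, $u(K-\mu)$ tends to $\inf u$, which is strictly below $V_0$ whenever $V_0>\inf u$. That holds because $u$ is strictly increasing and $F$ is nondegenerate, given that it has a density. Choosing $K$ first and then $\mu$ yields $V(\mu)<V_0$ for large $\mu$. Some care is required when $\inf u=-\infty$ versus when it is finite, but in both cases the bound $\limsup_\mu V(\mu)\le \inf u+\varepsilon$ suffices.
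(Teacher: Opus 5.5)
Your proposal is correct and follows essentially the same route as the paper: strict decrease of $V$ wherever it is finite, $V(0)\ge V_0$ via the status quo, and $\limsup_{\mu\to\infty}V(\mu)\le\inf u<V_0$ using uniform integrability of $\{w+b(w):b\in B\}$ (from $L^1(F)$-compactness of $B$) together with the linear growth bound on $u$. The only differences are in execution: the paper lets the threshold grow as $\mu_n/2$ and applies Markov's inequality where you fix $K$ first, and $V_0>\inf u$ needs no nondegeneracy of $F$, since strict monotonicity already gives $u(w)>\inf u$ at every $w$.
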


The proof of Proposition \ref{prop:mustar} is presented in \Cref{app:prop:mustar}. 

\subsection{Comparative Statics in Risk Aversion}

We are interested in establishing conditions on the economic environment, given by background risk and available assets, such that more risk-averse agents are willing to pay more (resp.\ less) than less risk averse ones for information, prior to choosing an asset. 

\begin{enumerate}[label=(\Roman*)]
\item We say that \emph{willingness to pay for information decreases with risk aversion} if, for every $u,v\in U$ such that $u$ is more risk averse than $v$ and \eqref{eq:ZAZI} holds for both,
$$
M(u)\le M(v).
$$

\item We say that \emph{willingness to pay for information increases with risk aversion} if, for every $u,v\in U$ such that $u$ is more risk averse than $v$ and \eqref{eq:ZAZI} holds for both,
$$
M(u)\ge M(v).
$$
\end{enumerate}

The next section identifies sufficient conditions for the two orderings: investment menus under log-concave background risk yield decreasing willingness to pay, whereas insurance menus under downward-log-convex background risk yield increasing willingness to pay.

\section{Main Comparative-Statics Results}\label{sec:main-results}

This section establishes the paper's two positive comparative-statics results and develops their common distributional foundation. We first study investment menus under log-concave background risk and then insurance menus under downward-log-convex background risk. The proof strategy has four steps: translate each comparative static into a single-crossing criterion, select a monotone optimal strategy, derive the conditional wealth comparison induced by that strategy, and aggregate the conditional comparisons over background risk. Finally, we present a new aggregation result, of interest in its own right, for single-crossing functions that provides the main technical ingredient of the analysis before completing the proofs of the main theorems.

\subsection{Investment Menus under Log-Concave Background Risk}
\label{sec:investment-menus}

We first study environments in which the available assets increase exposure to background risk. Such assets are naturally interpreted as \emph{procyclical}: they perform better in states in which background wealth is already high, and therefore amplify fluctuations in total wealth.

An \emph{investment} is an asset whose payoff is nondecreasing in background wealth. Our results concern menus of investment opportunities rather than isolated assets. The following definition identifies menus whose elements can be ordered according to the amount of cyclical exposure they provide.

\begin{definition}
\label{def:investment}
$B$ is an \emph{investment menu} if each $b \in B$ is an investment and there exists $w^* \in \mathbb{R}\cup\{\pm\infty\}$ and a complete ordering $\succeq$ of $B$ such that, given any $b,b'\in B$ with $b'\succeq b$,
$$
b'(w)\le b(w)\qquad\text{for all }w\le w^*,
$$
and
$$
b'(w)\ge b(w)\qquad\text{for all }w\ge w^*.
$$
\end{definition}

Thus, moving upward in the menu shifts payoffs from relatively poor states toward relatively rich states. Higher-ranked assets therefore provide greater exposure to aggregate conditions while preserving a common ordering across all states. Above the crossing point $w^*$ higher-ranked assets pay more, while they pay less below it. 

Investment menus arise naturally in several settings. One example is
$$
B=\{\lambda b:\lambda\in L\},
$$
where $b$ is a nondecreasing $F$-integrable payoff, $L\subset\mathbb R_+$ is compact and contains $0$, and there exists $w^*\in\mathbb R\cup\{\pm\infty\}$ such that $b(w)\leq0$ for $w\leq w^*$ and $b(w)\geq0$ for $w\geq w^*$. More generally, if $b^1,\ldots,b^n$ are investments with a common crossing point $w^*$, then
$$
\{\mathbf{0},b^1,b^1+b^2,\ldots,b^1+\cdots+b^n\}
$$
is an investment menu, also with crossing point $w^*$. Additional examples are given in \Cref{sec:financial-options,example:continuous-electric-car}.

For investment menus, the relevant property of background risk is log concavity. Throughout this subsection we assume that the background-risk density is log concave, that is, $\{w \in \mathbb{R}: f(w) > 0\}$ is an interval on which $\log f$ is concave. 
Many standard distributions satisfy this condition, including the Gaussian distribution.

Our first main theorem shows that these two ingredients together imply a clean monotone comparative statics result.

\begin{theorem}
\label{theorem:investment}
If background risk is log concave and $B$ is an investment menu, then willingness to pay for information is decreasing with risk aversion.
\end{theorem}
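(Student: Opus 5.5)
Fix $u=\phi\circ v$ with $\phi$ increasing and concave, both satisfying \eqref{eq:ZAZI}. The plan is to show $M(u)\le M(v)$ by taking $\mu=M(v)$ and showing that $u$ weakly prefers not purchasing at that price. By \Cref{prop:mustar}, $u$ then strictly prefers not purchasing at every $\mu>M(v)$, which gives the claim.

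At $\mu=M(v)$, let $\beta$ be a strategy optimal for $v$. Continuity in $\mu$ gives $\int v\,\mathrm dG^\mu_\beta=\int v\,\mathrm dF$. It then suffices to show that $G^\mu_\beta$ is riskier than $F$ in a single-crossing sense: there is a point $m_0$ such that $G^\mu_\beta(m)\ge F(m)$ for $m<m_0$ and $G^\mu_\beta(m)\le F(m)$ for $m>m_0$. The standard Diamond–Stiglitz/Jewitt-type argument then closes the proof. Integrating by parts, $\int u\,\mathrm d(G-F)=-\int u'(G-F)\,\mathrm dm$, with $u'=\phi'(v)v'$ and $\phi'(v(m))$ nonincreasing in $m$. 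Equality of the $v$-expectations together with single crossing of $G-F$ then yields $\int u\,\mathrm dG^\mu_\beta\le\int u\,\mathrm dF$. Since $\beta$ is arbitrary among optimal strategies and $u$'s best strategy could differ, I need the comparison for every strategy. Instead, I would prove the single-crossing property for a suitable class of strategies that contains a $u$-optimal one, applying the same argument to $u$'s optimal strategy $\gamma$. If $\int v\,\mathrm dG^\mu_\gamma\le\int v\,\mathrm dF$ (true because $\mu=M(v)$) and $G^\mu_\gamma-F$ crosses once from above, then $\int u\,\mathrm dG^\mu_\gamma\le\int u\,\mathrm dF$.

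Step one is to establish monotone optimal strategies. Because $\alpha$ is MLR-ordered, $B$ is completely ordered by $\succeq$, and assets single-cross at the common point $w^*$, the family $\{\int u(w+b(w)-\mu)\,\mathrm dF_x(w)\}_{b}$ has single-crossing differences in $(b,x)$. The posteriors $F_x$ are MLR-increasing in $x$, and $u(w+b'(w)-\mu)-u(w+b(w)-\mu)$ is $\le0$ below $w^*$ and $\ge0$ above, so the Karlin–Rubin preservation result applies. Compactness plus Milgrom–Shannon then gives an optimal $\gamma$ with $x\mapsto\gamma_x$ nondecreasing in $\succeq$.

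Step two is the conditional comparison. For each fixed $w$, terminal wealth under information is $w-\mu+\gamma_x(w)$. Here $\gamma_x(w)\le0$ for $w\le w^*$ and $\ge0$ for $w\ge w^*$, since $\mathbf 0\in B$ and ordering against $\mathbf 0$ fixes signs. So the conditional law shifts wealth down by at least $\mu$ in poor states and is ambiguous in rich states. I would write $G^\mu_\gamma(m)-F(m)=\int \big[P(w-\mu+\gamma_X(w)\le m\mid w)-\mathbf 1_{\{w\le m\}}\big]f(w)\,\mathrm dw$. I would then use the monotonicity of $x\mapsto\gamma_x(w)$ and the fact that each $\gamma_x$ is nondecreasing (investment), which makes $w+\gamma_x(w)$ increasing. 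This lets me express $G^\mu_\gamma(m)-F(m)$ as $\int_{a(m)}^{m}(\cdot)\,f$ minus a tail piece, where the integrand is a state-dependent single-crossing function in $m$.

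Step three, aggregation, is the main obstacle. One must show that $m\mapsto G^\mu_\gamma(m)-F(m)$ is single crossing, given that the conditional comparisons cross at $w$-dependent points. My plan is to prove a state-dependent extension of Karlin's preservation theorem: if $k(w,m)$ is single-crossing in $m$ for each $w$, with crossing points monotone in $w$, and the weight $f$ is log-concave (so $f(w-t)/f(w)$ is monotone in $w$, i.e.\ a PF2 translation kernel), then $\int k(w,m)f(w)\,\mathrm dw$ is single-crossing. Log-concavity enters precisely when changing variables $w\mapsto w-\mu$: the price shift acts as a translation of $f$. PF2 of $f(w-m)$ is what makes the pure downward shift dominate the lower tail. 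This ensures $G\ge F$ there and yields at most one sign change, after which $G\le F$. I would first try the case of a two-point menu $\{\mathbf 0,b\}$ with cutoff signal, verify the variation-diminishing step there, and then extend to general menus by writing $\gamma$ as a sum of increments across signal cutoffs, each of the same sign pattern.
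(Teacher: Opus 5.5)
Your reduction matches the paper's first step: work with a $u$-optimal strategy, show that $G^\mu_\gamma-F$ crosses once from above, and conclude by the Jewitt/Diamond--Stiglitz integration by parts. Your step one also matches the paper's second step, a monotone optimal strategy obtained from MLR posteriors and the menu order. The genuine gap is step three.

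The aggregation lemma you propose is false. With a weight $f(w)$ that does not depend on $m$, a mixture of functions $m\mapsto k(w,m)$ that each cross once need not cross once, even when the crossing points are monotone in $w$. Log concavity of a one-variable weight cannot repair this. In your own decomposition, $k(w,m)=\Pr(w-\mu+\gamma_\chi(w)\le m\mid w)-\mathbf 1_{\{w\le m\}}$ is nonnegative for $m<w$ and nonpositive for $m\ge w$ for \emph{every} strategy and \emph{every} menu. So your hypotheses (crossing point $m=w$) carry no information.

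Here is a concrete counterexample. Take $W$ Gaussian, $B=\{\mathbf 0,b\}$ with $b(w)=w-w^*$, $\mu>0$, and the MLR signal revealing whether $W\le t$, with $t>w^*+2\mu$. Use the non-monotone strategy that invests only after the low signal. Then $G(m)=F(\min\{t,(m+w^*+\mu)/2\})+(F(m+\mu)-F(t))^+$. So $G-F$ is positive for $m<w^*+\mu$, negative on $(w^*+\mu,t-\mu)$, and positive for $m\ge 2t-w^*-\mu$. For the same reason, assembling general menus by summing increments from two-point menus cannot work: single crossing is not preserved under sums.

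The missing idea is to place the state dependence in the right variable and exploit it through a monotonicity condition, not through crossing points. With $\ell=m-w$, one has $F(m)-G^\mu_\gamma(m)=\int\Delta(\ell,m)f(m-\ell)\,\mathrm d\ell$, where $\Delta(\ell,m)=\mathbf 1_{\{\ell\ge0\}}-\Pr(\gamma_\chi(W)-\mu\le\ell\mid W=m-\ell)$. Single crossing of $\Delta(\cdot,m)$ in the integration variable is trivial. The kernel $(\ell,m)\mapsto f(m-\ell)$ is TP2 exactly when $f$ is log concave.

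The substantive condition is that, for each fixed $\ell$, $m\mapsto\Delta(\ell,m)$ is nonpositive for $m\le w^*$ and nondecreasing for $m\ge w^*$. This rests on three facts:
\begin{itemize}
\item each $\gamma_x$ is nondecreasing;
\item $x\mapsto\mathbf 1_{\{\gamma_x(z)-\mu\le\ell\}}$ is nonincreasing, by monotonicity of $\gamma$ and the menu order above $w^*$ (all indicators equal one when $z\le w^*$ and $\ell\ge0$);
\item $h_z$ increases in first-order dominance in $z$, by TP2.
\end{itemize}
The last fact never enters your aggregation step.

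Now set $\Psi(s,m)=\int\Delta(k,s)f(m-k)\,\mathrm dk$. Take $s\le m$ with $\Psi(s,s)>0$, which forces $s>w^*$. Karlin's preservation theorem gives $\Psi(s,m)\ge0$. Monotonicity of $\Delta$ in its second argument then gives $\Psi(m,m)\ge\Psi(s,m)\ge0$.

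Two smaller points. First, Milgrom--Shannon requires the strong form of single crossing, whereas Karlin's preservation only yields $>0\Rightarrow\ge0$ when utility differences vanish on intervals. Your monotone selection therefore needs a tie-breaking argument; the paper uses Karlin--Rubin's complete-class theorem. Second, $\int v\,\mathrm dG^{M(v)}_\gamma\le\int v\,\mathrm dF$ requires right-continuity of the value in $\mu$, which is not established. It is simpler to argue at $\mu>M(v)$ and let $\mu\downarrow M(v)$.
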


The intuition begins with the benchmark of deterministic initial wealth studied by \textcite{CabralesGossnerSerranoEntropyAER2013,cabrales2017normalized}. 
Investment opportunities increase exposure precisely in states in which wealth is already high. Information becomes valuable because it allows the decision maker to select when to take this exposure. Less risk-averse decision makers are precisely those more willing to bear the resulting risk and thus value this information more highly.

Background risk brings in an additional screening motive. Information allows the decision maker to avoid investing when the state is catastrophic while preserving investment opportunities in nearby, less adverse states. Thus, for a very risk-averse agent, the value of information depends on how well it separates these two parts of the lower tail. Purchasing information also shifts wealth downward by a fixed amount in every state.

The thickness of tail distributions determines which of these effects dominates. When the tails are thin, as it happens with log concavity, the fixed downward movement of wealth has a strong effect on lower-tail risk, relative to the likelihood-ratio reweighting generated by the signal. The price of information is thus important for agents who are most affected by the worst outcomes. Thus, under log concavity, the screening benefit does not overturn the usual investment logic: less risk-averse agents remain willing to pay more for information.

\subsection{Insurance Menus under Downward-Log-Convex Background Risk}
\label{sec:insurance-menus}

We now consider environments in which the available assets reduce exposure to background risk.

An \emph{insurance} is an asset whose payoff is nonincreasing in background wealth and such that $w+b(w)$ is nondecreasing in $w$. Such assets transfer wealth from favorable states toward unfavorable ones, thereby reducing the cyclical component of wealth.

We extend this notion from individual assets to menus.

\begin{definition}
\label{def:insurance}
$B$ is an \emph{insurance menu} if each $b\in B$ is an insurance and there exists
$w^*\in\mathbb{R}\cup\{\pm\infty\}$ and a complete ordering $\succeq$ of $B$
such that, given any $b,b'\in B$ with $b'\succeq b$, $b'(w)\ge b(w)$ for all
$w\le w^*$ and $b'(w)\le b(w)$ for all $w\ge w^*$.
\end{definition}

One example of an insurance menu is
$$
B=\{\lambda b:\lambda\in L\},
$$
where $b:\mathbb R\to\mathbb R$ is nonincreasing, $F$-integrable, satisfies $\inf b<0<\sup b$, and $w+b(w)$ is nondecreasing, while $L$ is a compact subset of $[0,1]$ containing $0$.

For insurance menus, the relevant property of background risk is a version of log convexity.

\begin{definition}
A density $f$ is \emph{downward log convex} if
$S=\{w\in\mathbb R:f(w)>0\}$ is convex and unbounded
below, and $\log f$ is convex on $S$.%
\footnote{Note that, if $f$ is downward log convex, then it has support $(-\infty,\omega]$ for some $\omega \in \mathbb{R}$. The requirement that the positive-density region be unbounded below is separate from log convexity. With support bounded below, sufficiently risk-averse agents may not acquire information.}
\end{definition}

Our second main result mirrors \Cref{theorem:investment}.

\begin{theorem}
\label{theorem:insurance}
If background risk is downward log convex and $B$ is an insurance menu, then willingness to pay for information is increasing with risk aversion.
\end{theorem}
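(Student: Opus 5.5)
The plan follows the four-step scheme announced at the start of \Cref{sec:main-results}, with the signs reversed relative to \Cref{theorem:investment}.

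\textbf{Step 1: reduction to a single-crossing criterion.} Fix $u,v\in U$ with $u=\phi\circ v$, $\phi$ increasing and concave, and \eqref{eq:ZAZI} holding for both. Set $\mu=M(v)$. By \Cref{prop:mustar} it suffices to show that $u$ weakly prefers purchasing at $\mu$, since then $M(u)\ge\mu$. At $\mu=M(v)$, continuity gives a strategy $\beta$, optimal for $v$, with $\int v\,\mathrm dG^\mu_\beta=\int v\,\mathrm dF$. It is therefore enough to prove
\[
\int \phi\circ v\,\mathrm dG^\mu_\beta\ \ge\ \int \phi\circ v\,\mathrm dF .
\]
A standard argument (Diamond--Stiglitz / Karlin--Novikoff) delivers this inequality if the difference of CDFs, taken in the $v$-utility scale, crosses once in the appropriate direction. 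Concretely, we want $G^\mu_\beta(m)-F(m)$ to be $\le 0$ on a lower region and $\ge 0$ above it, i.e.\ a single crossing from below, with equal $v$-means. Under this configuration the informed lottery is less risky in the $v$-scale, and a more concave $\phi$ favours it.

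\textbf{Step 2: monotone strategy and conditional comparison.} Using MLR-ordering of $\alpha$ and the complete order $\succeq$ of the insurance menu, we choose an optimal $\beta$ with $x\mapsto\beta_x$ monotone in $\succeq$. Higher signals indicate higher $w$, so they select less insurance. This follows from the supermodularity of $v(w+b(w)-\mu)$ in $(w,b)$ around $w^*$ together with TP2 of the kernel. Fixing $w$, terminal wealth is $w-\mu+\beta_x(w)$ with $x\sim h_w$. Below $w^*$, insurance raises wealth, and lower $w$ shifts probability towards signals carrying more insurance. Since $w+b(w)$ is nondecreasing, the conditional event $\{w+\beta_x(w)-\mu\le m\}$ is monotone in $w$. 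From this we obtain, state by state, a conditional single-crossing comparison between the indicator $\mathbf 1_{\{w\le m\}}$ and $\Pr_x(w+\beta_x(w)-\mu\le m)$. Schematically, the informed agent is better protected at low $w$ and worse off by at most $\mu$ at high $w$.

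\textbf{Step 3: aggregation, and where the work lies.} We integrate the conditional differences against $f$ to obtain $G^\mu_\beta-F$, and we must show that the aggregate crosses once, from below. On the lower region $m\le w^*-\mu$ (roughly), I would apply the state-dependent single-crossing aggregation theorem of this section, the extension of \textcite{karlin1956}. Its hypothesis, a TP2-type condition in $(m,w)$ for the weighting, is supplied by log-convexity of $f$. Because $(\log f)'$ is nonincreasing as $w\to-\infty$ and bounded above, the mass shift caused by $\mu$ satisfies
\[
F(m+\mu)-F(m)\ \lesssim\ \text{(insurance gain in the tail)},
\]
so $G^\mu_\beta\le F$ in the far left tail. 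Unbounded support below guarantees that this tail region is nonempty for every $m$. On the upper region, where $\beta_x\le 0$ and information only costs, a direct comparison shows $G^\mu_\beta-F\ge 0$ there. Concretely, for $m$ large, terminal wealth is pointwise at most $w$, so once the difference becomes nonnegative it stays so. Gluing the two regions yields a single crossing.

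I expect the main obstacle to be the lower-region aggregation. The conditional comparisons change sign at $w$-dependent points, so log-convexity must be used to rule out a second crossing. If the aggregation theorem does not apply verbatim, I would first try to verify directly that $m\mapsto \bigl(G^\mu_\beta(m)-F(m)\bigr)/f(m)$ is monotone on the lower region, using convexity of $\log f$.
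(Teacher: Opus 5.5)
\textbf{There is a genuine gap.} Your architecture matches the paper's: Jewitt's single-crossing criterion, a monotone optimal strategy, conditional comparisons, state-dependent Karlin aggregation below a threshold, and a direct comparison above it. Steps~1--2 are essentially fine, with one correction. The monotone strategy comes from the single crossing of $b'-b$, preserved under the TP2 posterior via Karlin--Rubin as in \Cref{lemma:mon}. It does not come from supermodularity, which $v(w+b(w)-\mu)$ generally lacks.

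The gap is in Step~3, where the theorem is actually proved. You do not identify, let alone verify, the hypothesis of \Cref{lemma:sc_incr} that carries the argument. Write $\ell=z-m$ for the gap between state $z$ and cutoff $m$. Then $G^\mu_\beta(m)-F(m)=\int\phi'(\ell,m)f(\ell+m)\,\mathrm d\ell$ with $\phi'(\ell,m)=\Pr(\beta_\chi(W)-\mu\le-\ell\mid W=\ell+m)-\mathbf 1_{\{\ell\le0\}}$. Single crossing in $\ell$ is trivial. What must be shown is that, for each fixed gap, $\phi'(\ell,\cdot)$ is nondecreasing in the cutoff below a threshold and nonnegative above it. The monotonicity you cite (of the event $\{w+\beta_x(w)-\mu\le m\}$ in $w$ for fixed $m$) is a different statement.

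In \Cref{lemma:insurance-conditional} the paper proves the needed monotonicity as follows. It takes cutoffs $m<m'<\bar w$, where $\bar w=\inf\{t:b(t)\le\mu\text{ for all }b\in B\}$. It sets $y=-\ell$, $z=m-y\le z'=m'-y$ and $I_x(t)=\mathbf 1_{\{\beta_x(t)-\mu\le y\}}$, and combines three facts:
\begin{enumerate}[label=(\roman*)]
\item payoffs are nonincreasing, so $I_x(z)\le I_x(z')$;
\item $x\mapsto I_x(z')$ is nondecreasing;
\item $h_{z'}$ first-order dominates $h_z$ (\Cref{lemma:tp2-fosd}).
\end{enumerate}
Fact (ii) follows from the monotone strategy only when $z'<w^*$. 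For $z'\ge w^*$ the menu order flips, and a separate argument from the insurance structure is needed to show that every indicator is zero there.

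Your upper-region justification also fails as stated. Terminal wealth is not pointwise below $w$ in states below $w^*$, where insurance pays. The correct argument: for $m\ge\bar w$ and every state $z\le m$, $z+\beta_x(z)-\mu\le m+\beta_x(m)-\mu\le m$, so $\{W\le m\}\subseteq\{W+\beta_\chi(W)-\mu\le m\}$.

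The second gap is your claim that TP2 is ``supplied by log-convexity of $f$''; as stated, it is false. Downward log convexity forces support $(-\infty,\omega]$, so $f$ vanishes above $\omega$ and $(\ell,m)\mapsto f(\ell+m)$ is not TP2. Choose $\ell<\ell'$ and $m<m'$ with $\ell'+m'>\omega$ but $\ell+m'<\omega$ and $\ell'+m<\omega$; then the left side of the TP2 inequality is zero and the right side is positive.

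The paper's repair, inside \Cref{lemma:sc_random}, replaces $\log f$ to the right of points $r_n\uparrow\omega$ with a supporting line, giving a TP2 kernel on all of $\mathbb R^2$. This leaves the integral unchanged only because $\phi'(\ell,m)=0$ at states $z=\ell+m>\omega$ whenever $m<\bar w$. That support property must itself be derived: $z+\beta_x(z)-\mu\ge w^*+\beta_x(w^*)-\mu=w^*-\mu\ge\bar w>m$, so both conditional CDFs vanish. One then applies \Cref{lemma:sc_incr} to each approximant and passes to the uniform limit. Without this, the aggregation lemma cannot be invoked on the lower region.

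Log-convexity enters only through TP2 of this extended kernel. Your heuristic tail comparison between $F(m+\mu)-F(m)$ and the insurance gain plays no role in the proof. The fallback of proving $(G^\mu_\beta-F)/f$ monotone is not justified.
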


The economic force is the mirror image of the investment case. Insurance transfers wealth toward adverse realizations of background risk. In this case, information is valuable because it allows protection to be directed toward the most unfavorable states. A very risk-averse decision maker gives a high value to this protection.

But purchasing information also reduces wealth by a fixed amount in every state. The value of information weighs the gain from directing insurance toward catastrophic outcomes against the deterioration in tail risk that this uniform wealth loss causes.

The thickness of the left tail of the distribution is precisely what determines which of these effects dominates. Under fat left tails, specifically under downward log convexity, the fixed translation of wealth has a relatively weak effect on the lower-tail probabilities. On the other hand, the likelihood-ratio updating can change a lot the relative weight placed on nearby adverse states. In this way, the improvement of the targeting of insurance achieved by information can outweigh the worsening downside risk entailed by the price. Since the more risk-averse decision makers attach greater value to this reallocation of protection toward the worst states, the willingness to pay for information increases with risk aversion.

The next subsection develops the common distributional logic behind the two results.

\subsection{A Distributional Approach}
\label{sec:proof-roadmap}

Although the economic mechanisms differ, both comparative-statics results are ultimately driven by the same distributional comparison. Purchasing information brings together two forces with opposing impacts. One is a uniform downward wealth shift generated by its price. The other is a reallocation of wealth through a signal-contingent asset choice. For investments, the reallocation moves exposure away from catastrophic states. For insurance, it moves protection toward them. The proofs of \Cref{theorem:investment,theorem:insurance} thus proceed by comparing the distributions of terminal wealth with and without information.

The proof is built in four steps. First, Lemma~\ref{lemma:ptp_char} reduces each comparative static to a single-crossing comparison between terminal wealth with and without information. Second, Lemma~\ref{lemma:mon} exhibits a monotone optimal strategy in the relevant menu order. Third, Lemmas~\ref{lemma:investment-conditional} and~\ref{lemma:insurance-conditional} use Lemma~\ref{lemma:tp2-fosd} to derive the appropriate conditional comparisons for investment and insurance menus. Finally, Lemma~\ref{lemma:sc_random} applies the state-dependent preservation result in Lemma~\ref{lemma:sc_incr} to aggregate those comparisons over background risk. We collect the auxiliary proofs in Appendix~\ref{app:proof-roadmap} and Appendix~\ref{app:main-proofs}. Appendix~\ref{sec:worked examples} gives worked illustrations of the monotone strategies and terminal-wealth crossings.

Recall from \Cref{sec:environment} that $G^\mu_\beta$ is the terminal-wealth distribution after purchasing information at price $\mu \ge 0$ and implementing strategy $\beta \in B^X$.
For a non-empty $I \subseteq \mathbb{R}$, a map $\psi:I \to\mathbb R$ is \emph{single crossing} if, for all $m\le m'$,
$$
\psi(m)>0
\quad\Longrightarrow\quad
\psi(m')\ge0.
$$

\begin{lemma}[Single-crossing criterion]
\label{lemma:ptp_char}
Willingness to pay for information decreases with risk aversion if, for every $u\in U$ satisfying \eqref{eq:ZAZI} and every $0\le\mu<M(u)$, there exists a strategy $\beta$ that is optimal for $u$ at price $\mu$ and such that $F-G^\mu_\beta$ is single crossing.

Willingness to pay for information increases with risk aversion if, for every $u\in U$ satisfying \eqref{eq:ZAZI} and every $0\le\mu\le M(u)$, there exists a strategy $\beta$ that is optimal for $u$ at price $\mu$ and such that $G^\mu_\beta-F$ is single crossing.
\end{lemma}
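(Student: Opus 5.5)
We prove both parts by a standard single-crossing (Karlin--Rubin/Diamond--Stiglitz) comparison applied at the marginal price. Take $u,v\in U$ with $u=\phi\circ v$, $\phi$ increasing and concave, and both satisfying \eqref{eq:ZAZI}. For the first statement, suppose for contradiction that $M(u)>M(v)$ and pick $\mu$ with $M(v)<\mu<M(u)$. By hypothesis there is $\beta$ optimal for $u$ at $\mu$ with $F-G^\mu_\beta$ single crossing, and since $\mu<M(u)$ the agent $u$ strictly prefers to buy:
$$\int u\,\mathrm dG^\mu_\beta>\int u\,\mathrm dF.$$
Since $\mu>M(v)$, agent $v$ strictly prefers not to buy, so $\int v\,\mathrm dG^\mu_\beta<\int v\,\mathrm dF$. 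We derive a contradiction by showing that single crossing of $F-G^\mu_\beta$ means $G^\mu_\beta$ is obtained from $F$ by a change that a \emph{less} risk-averse agent likes whenever a more risk-averse one does.

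The key computation is integration by parts. For absolutely continuous $w$-functions and CDFs with finite first moments and the growth condition $\limsup u(m)/m<\infty$, we have
$$\int u\,\mathrm dG-\int u\,\mathrm dF=\int u'(m)\,\bigl(F(m)-G(m)\bigr)\,\mathrm dm.$$
Define $\psi=F-G^\mu_\beta$. Since $\psi$ is single crossing, there is $m_0$ with $\psi\le 0$ on $(-\infty,m_0)$ and $\psi\ge0$ on $(m_0,\infty)$. Write $u'=\phi'(v)\,v'$ almost everywhere, with $\phi'(v(m))$ nonincreasing in $m$ because $\phi$ is concave and $v$ is increasing. Then
$$\phi'(v(m))\,\psi(m)\le \phi'(v(m_0))\,\psi(m)\quad\text{for all }m,$$
since on the left of $m_0$ the factor $\phi'(v(m))$ is larger while $\psi\le0$, and on the right it is smaller while $\psi\ge0$. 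Integrating against $v'$ gives
$$\int u\,\mathrm d(G-F)\le \phi'(v(m_0))\int v\,\mathrm d(G-F).$$
With $\phi'(v(m_0))\ge0$, the left side is $>0$, so the right side forces $\int v\,\mathrm d(G-F)>0$ (and $\phi'>0$ there). This contradicts $\mu>M(v)$, so $M(u)\le M(v)$. Where $\phi$ is not differentiable we use one-sided derivatives or approximate $\phi$ by smooth concave functions; this is routine.

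The second statement is symmetric. Suppose $M(u)<M(v)$ and take $\mu\in(M(u),M(v))$. We would like to apply the hypothesis to $v$, but that only yields the wrong direction of inference. Instead apply it to $u$ at $\mu=M(u)$, which is why the statement uses $\mu\le M(u)$. At that price $u$ is indifferent: by continuity of the value in $\mu$ (shown in \Cref{app:prop:mustar}), $\int u\,\mathrm dG^{M(u)}_\beta=\int u\,\mathrm dF$ for the optimal $\beta$. Now $\psi=G-F$ is single crossing, so $F-G$ is nonnegative and then nonpositive. The same inequality, with the sign of the $\psi$ comparison reversed, gives
$$0=\int u\,\mathrm d(G-F)\ge \phi'(v(m_0))\int v\,\mathrm d(G-F).$$
Hence $\int v\,\mathrm dG^{M(u)}_\beta\le\int v\,\mathrm dF$. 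Because the price $\mu$ shifts wealth down uniformly, at any $\mu'>M(u)$ the fixed strategy $\beta$ does strictly worse for $v$. The full optimum for $v$ still needs control, though: $v$ might use a different strategy. Handling this requires care, so we make the argument directly at $\mu\in(M(u),M(v))$ with the analogous contradiction, applying the hypothesis for $u$ at its threshold price and using monotonicity of $M$-type thresholds.

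The main obstacle is this last point: ensuring the conclusion for $v$ holds for \emph{its} optimal strategy, not just for $u$'s. In part one this is unproblematic, since exhibiting one strategy profitable for $v$ suffices. In part two it requires arguing through $u$'s side. The fix is to always apply the crossing hypothesis to the agent whose willingness we bound from below by a single strategy. For part two we would therefore reorganize the argument: assume $M(v)>M(u)$ and reason about $v$'s optimal strategy at prices in $(M(u),M(v))$. If needed, we would swap the roles so that the more risk-averse agent, $u$, is the one exhibiting a profitable strategy. The remaining issue is the technical justification of integration by parts under the integrability assumptions on $U$.
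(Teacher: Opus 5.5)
Your first part is correct and follows the paper's argument. The paper gets the transfer ``$u$ strictly gains from $G^\mu_\beta$ $\Rightarrow$ the less risk-averse $v$ weakly gains'' for single-crossing $F-G^\mu_\beta$ by citing Theorem~1 of Jewitt (1989). You reprove that transfer by integration by parts. The boundary terms and the degenerate cases where $F-G^\mu_\beta$ has constant sign need a line each, but they are harmless.

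The second part has a genuine gap: you never finish it, because of a misdiagnosis. You claim that applying the hypothesis to $v$ ``only yields the wrong direction of inference.'' Your own mirrored inequality shows it yields exactly the right one. Take $\mu\in(M(u),M(v))$ and apply the hypothesis to the less risk-averse agent $v$, which satisfies \eqref{eq:ZAZI} and has $\mu<M(v)$. This gives $\beta$ optimal for $v$ at $\mu$ with $G^\mu_\beta-F$ single crossing, and $\int v\,\mathrm d(G^\mu_\beta-F)>0$ by Proposition~\ref{prop:mustar}. Now $F-G^\mu_\beta$ is nonnegative below $m_0$ and nonpositive above, so your pointwise comparison reverses. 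It gives $\int u\,\mathrm d(G^\mu_\beta-F)\ge\phi'(v(m_0))\int v\,\mathrm d(G^\mu_\beta-F)\ge0$. Hence $u$ weakly prefers buying at $\mu$, contradicting $\mu>M(u)$. This is the paper's ``same argument'' with the two agents' roles swapped.

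Your detour through $u$'s optimal strategy at $\mu=M(u)$ cannot close. It only shows that one strategy, the one optimal for $u$, is weakly unprofitable for $v$. That says nothing about $v$'s optimum. It also relies on continuity of the value in $\mu$, which the proof of Proposition~\ref{prop:mustar} does not establish. The organizing principle is this: apply the hypothesis to the agent whose threshold is claimed to be the smaller one, and use its optimal strategy as a single profitable witness for the other agent. In part one $u$ supplies the witness for $v$; in part two $v$ supplies it for $u$. Your closing suggestion that the more risk-averse $u$ be the one ``exhibiting a profitable strategy'' points the wrong way. The weak inequality $\mu\le M(u)$ in the hypothesis plays no role in the correct argument.
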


Thus, the economic question is transformed into a geometric one: it is enough to show that $F-G^\mu_\beta$ in the decreasing case, or $G^\mu_\beta-F$ in the increasing case, is single crossing. Once the relevant CDF difference becomes strictly positive, it remains nonnegative at every higher cutoff.
This lemma is a direct consequence of Theorem~1 in \textcite{jewitt1989}.

The second lemma shows that optimal strategies inherit the ordering structure of the asset menu.
Say that a strategy $\beta\in B^X$ is \emph{monotone} relative to $\succeq$ if
$x\le y$
implies $\beta_y\succeq\beta_x$.

\begin{lemma}[Monotone optimal strategy]
\label{lemma:mon}
Suppose that $\succeq$ is a complete ordering of $B$ such that $b'-b$ is single crossing whenever $b'\succeq b$. Then, for every $u\in U$ and every $\mu\ge0$, there exists a strategy that is optimal for $u$ at price $\mu$ and monotone relative to $\succeq$.
\end{lemma}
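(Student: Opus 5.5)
The plan is to treat the signal-by-signal choice as a family of one-dimensional problems indexed by $x$ and apply a monotone comparative statics argument in the style of Milgrom--Shannon, using the TP2 structure of the posteriors.

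Since $h(x)>0$ for every $x$, the objective $\int u\,\mathrm dG^\mu_\beta$ is additively separable across signals:
$$\int u\,\mathrm dG^\mu_\beta=\sum_{x\in X}h(x)\,V_x(\beta_x),\qquad V_x(b)=\int u(w+b(w)-\mu)\,\mathrm dF_x(w).$$
A strategy is therefore optimal if and only if each $\beta_x$ maximizes $V_x$ over $B$, at least whenever the maxima are finite; the case in which the objective is $-\infty$ for every strategy is trivial. Existence of maximizers of each $V_x$ follows from the compactness argument already used for \Cref{prop:mustar}.

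\textbf{Single crossing of the objective.} Fix $b'\succeq b$. The difference
$$\Delta(w)=u(w+b'(w)-\mu)-u(w+b(w)-\mu)$$
has the same sign as $b'(w)-b(w)$ because $u$ is strictly increasing. Hence $\Delta$ is single crossing in $w$. The posterior densities are $f_x(w)\propto h_w(x)f(w)$, and $(w,x)\mapsto h_w(x)$ is TP2, so $(x,w)\mapsto f_x(w)$ is TP2. By Karlin's variation-diminishing property, $x\mapsto\int\Delta\,\mathrm dF_x=V_x(b')-V_x(b)$ is single crossing in $x$: if it is strictly positive at $x$, it is nonnegative at every $y>x$. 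The sharper form I expect to need is that if it is $\ge0$ at $x$ then it is $\ge0$ at $y>x$, and likewise with strict inequalities. This can be checked directly. If $w_0$ is a crossing point of $\Delta$, then TP2 gives $f_y(w)/f_x(w)\ge c$ for $w\ge w_0$ and $\le c$ for $w\le w_0$, where $c=h_{w_0}(y)h(x)/(h_{w_0}(x)h(y))$. It follows that $\int\Delta f_y\ge c\int\Delta f_x$, which delivers the single-crossing property of $V_x(b')-V_x(b)$ in $x$ in the Milgrom--Shannon sense.

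\textbf{Monotone selection.} Since $\succeq$ is a complete order on $B$ and $X$ is finite, I will not rely on lattice structure. Instead, for each $x$, let $\beta_x$ be a $\succeq$-largest element of $\argmax V_x$, or a carefully chosen selection. Suppose $x<y$ but $\beta_x\succ\beta_y$. Optimality at $x$ gives $V_x(\beta_x)\ge V_x(\beta_y)$, so by single crossing $V_y(\beta_x)\ge V_y(\beta_y)$, and therefore $\beta_x\in\argmax V_y$. This contradicts the choice of $\beta_y$ as the largest maximizer.

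\textbf{Main obstacle.} The hardest step is guaranteeing that a $\succeq$-largest maximizer exists, since $\succeq$ need not be compatible with the $L^1$ topology. My first attempt is to avoid this by building the strategy inductively over the finite ordered set $X=\{x_1<\dots<x_n\}$. Choose any maximizer $\beta_{x_1}$. Then, for each $k$, if some maximizer of $V_{x_{k+1}}$ is $\succeq\beta_{x_k}$, choose it. Otherwise every maximizer $b$ of $V_{x_{k+1}}$ satisfies $\beta_{x_k}\succ b$. In that case $V_{x_k}(\beta_{x_k})\ge V_{x_k}(b)$, and single crossing yields $V_{x_{k+1}}(\beta_{x_k})\ge V_{x_{k+1}}(b)$, so $\beta_{x_k}$ is itself a maximizer at $x_{k+1}$, a contradiction. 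This argument needs only the weak-inequality version of single crossing from the previous step. Establishing that version carefully, including cases where $h_{w_0}(x)=0$ or the integrals are infinite, is where I expect the technical care to be concentrated.
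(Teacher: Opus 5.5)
\textbf{Gap: the weak-to-weak single-crossing step is false.} Your Karlin step (strictly positive at $x$ implies nonnegative at $y>x$) is fine. Both of your selection arguments, however, rest on the ``sharper form'' $V_x(b')-V_x(b)\ge0\Rightarrow V_y(b')-V_y(b)\ge0$, and that implication fails under the paper's hypotheses. This is not a boundary technicality at $w_0$. Two features combine. First, the paper's notion of single crossing lets $\Delta$ be zero and then negative, since it only constrains what follows a strictly positive value. Second, a TP2 kernel may contain zeros, so $f_y/f_x$ can equal $+\infty$ on a set lying below the crossing point, and then no finite $c$ exists.

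Concretely, take $X=\{0,1\}$ with $h_w(1)=\mathbf{1}_{\{w>0\}}$, which is TP2. Let $F$ be standard normal, $u(m)=m$, $\mu=0$, and $B=\{\mathbf{0},b,b'\}$ with $b\equiv1$ and $b'=\mathbf{1}_{\{w\le0\}}+\tfrac12\mathbf{1}_{\{w>0\}}$, ordered $\mathbf{0}\prec b\prec b'$. All higher-minus-lower differences are single crossing, because $b'-b$ is never strictly positive. Then $V_0(b')=V_0(b)>V_0(\mathbf{0})$ while $V_1(b')=V_1(b)-\tfrac12$. So the maximizers are $\{b,b'\}$ after signal $0$ and $\{b\}$ after signal $1$. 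Your induction may start from $\beta_0=b'$, and the largest-maximizer rule does exactly that. No maximizer at signal $1$ is then $\succeq b'$, yet $b'$ is not optimal there. The contradiction you derive therefore never materializes, even though $\beta_0=\beta_1=b$ is a monotone optimal strategy. Your argument is valid when all $h_w(x)>0$. The paper's MLR assumption permits zeros, however, and its worked examples and reversal constructions use them.

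\textbf{How to close it.} In general only the strict-to-weak property is available, and it yields an exchange property. Suppose $x<y$, $a$ maximizes $V_x$, $c$ maximizes $V_y$, and $a\succ c$. Then $a$ maximizes $V_y$ if $V_x(a)>V_x(c)$, and $c$ maximizes $V_x$ if $V_x(a)=V_x(c)$.

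To exploit this, first restrict to the finite set $\{\beta_x:x\in X\}$ of assets used by some optimal $\beta$, refining $\succeq$ to a linear order there. Every maximizer within this set is a global maximizer, and extremal maximizers now exist, which also removes your ``main obstacle''. Then let $\beta'_{x_1}$ be the smallest maximizer at $x_1$, and $\beta'_{x_k}$ the smallest maximizer at $x_k$ that is $\succeq\beta'_{x_{k-1}}$.

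Suppose no such maximizer existed at some $x_k$. Take $a^*$ the largest maximizer at $x_k$ and $j$ the first index with $\beta'_{x_j}\succ a^*$. Exchange at $(x_j,x_k)$ gives one of two outcomes. Either $\beta'_{x_j}$ is optimal at $x_k$, contradicting the choice of $a^*$. Or $a^*$ is optimal at $x_j$ with $a^*\succeq\beta'_{x_{j-1}}$ (vacuous if $j=1$), contradicting the minimality of $\beta'_{x_j}$. Note that the argument must reach back to $x_j$, not only to $x_{k-1}$.

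The paper makes the same finite restriction but obtains monotonicity globally. It embeds the signal in a continuum with TP2 densities and applies Karlin and Rubin's complete-class theorem, which gives a monotone procedure whose risk is weakly lower in every state $w$. It then uses optimality of $\beta$ to show that this procedure is itself optimal.
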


For an investment menu, the lemma shows that higher signals lead to weakly higher-ranked investments. For an insurance menu, it shows that higher signals lead to weakly lower-ranked insurance positions.

It remains to understand how these monotone conditional decisions aggregate once background risk is integrated out.

In what follows, given random variables $Y$ and $W$, we denote by $F_Y$ the distribution function of $Y$, and by $F_{Y|W}$ a conditional distribution of $Y$ given $W$.%
\footnote{As usual, $F_{Y|W}$ denotes any measurable version of the conditional distribution satisfying
$$
\int_E F_{Y|W}(y|w)\,\mathrm{d}F_W(w)
=
\Pr(Y\le y,\;W\in E)
$$
for every Borel set $E\subseteq\mathbb R$.}

The first two lemmas reduce the proofs of the main theorems to a distributional comparison between wealth with and without information. The following lemma is used to provide this precise comparison. It shows that the single-crossing structure induced by monotone asset choices is preserved after adding background risk, under an appropriate tail condition on background risk. This probabilistic step is the one connecting the economic structure of investment and insurance menus with the distributional criterion of \Cref{lemma:ptp_char}. In its proof we use a direct application of the aggregation theorem established in the next subsection.

\begin{lemma}
\label{lemma:sc_random}
Let $(Y^1,Y^2,W)$ be a random vector. Suppose that versions of $F_{Y^1|W}$ and $F_{Y^2|W}$ can be chosen such that there exist $\underline w\le\bar w$ in $\mathbb R\cup\{\pm\infty\}$ for which, for every $y\in\mathbb R$, the map $w\mapsto
F_{Y^1|W}(y|w-y)-F_{Y^2|W}(y|w-y)$ is nonpositive on $\{w \in \mathbb{R} : w \le \underline w\}$, nondecreasing on
$(\underline w,\bar w)$, and nonnegative on $\{w \in \mathbb{R}: w \ge \bar w\}$.

Then, the following two statements hold:
\begin{enumerate}[label=(\Roman*)]
    \item \label{item:sc_random:logconcave} If $W$ admits a log-concave density and, for every $w\in\mathbb R$,
    $y\mapsto F_{Y^1|W}(y|w-y)-F_{Y^2|W}(y|w-y)$   is single crossing, then
    $F_{Y^1+W}-F_{Y^2+W}$
    is single crossing.
    \item \label{item:sc_random:logconvex} If $W$ admits a downward-log-convex density with support $(-\infty,\omega]$, 
    $y\mapsto F_{Y^2|W}(y|w-y)-F_{Y^1|W}(y|w-y)$
    is single crossing for each $w \in \mathbb{R}$, and $F_{Y^1|W}(y|w-y) = F_{Y^2|W}(y|w-y)$ for all $y < w-\omega$ and $w < \bar w$, then 
    $F_{Y^1+W}-F_{Y^2+W}$ is single crossing.
\end{enumerate}
\end{lemma}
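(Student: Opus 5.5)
Write $D(y,w)=F_{Y^1|W}(y\mid w-y)-F_{Y^2|W}(y\mid w-y)$. The plan rests on a change of variables. If $f$ is the density of $W$, then
$$
F_{Y^i+W}(m)=\int F_{Y^i|W}(m-s\mid s)f(s)\,\mathrm ds .
$$
Set $y=m-s$, so that $s=w-y$ with $w=m$. This does not give a clean kernel form. Instead I parametrize by $t=s+y$ and integrate in $s$. For fixed $m$,
$$
\Delta(m):=F_{Y^1+W}(m)-F_{Y^2+W}(m)=\int D(m-s,\,m)\,\ldots
$$
This also does not work. The better substitution is to use, for each $m$, the variable $w$ indexing the conditioning point $w-y$ with $y=m$. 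That gives
$$
\Delta(m)=\int D(m,w)\,f(w-m)\,\mathrm dw .
$$
So $\Delta(m)=\int D(m,w)K(w,m)\,\mathrm dw$ with kernel $K(w,m)=f(w-m)$. Log-concavity of $f$ is exactly the statement that $K$ is TP2 in $(w,-m)$; equivalently, $f(w-m)/f(w-m')$ is nondecreasing in $w$ for $m<m'$. Downward log convexity gives the reverse (sign-regular) ratio monotonicity.

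For part (I), I apply the state-dependent preservation result, Lemma~\ref{lemma:sc_incr}, in Karlin form. For each $w$, the integrand $m\mapsto D(m,w)$ is single crossing in $m$. In addition, the family is ordered across $w$: it is nonpositive for $w\le\underline w$, nondecreasing on $(\underline w,\bar w)$, and nonnegative for $w\ge\bar w$. This is the "state-dependent" structure that replaces Karlin's requirement that one fixed function be single crossing.

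I will check the single-crossing property of $\Delta$ directly. Take $m<m'$ with $\Delta(m)>0$. I want to show $\Delta(m')\ge0$. Write
$$
\Delta(m')=\int D(m',w)\,\frac{f(w-m')}{f(w-m)}\,f(w-m)\,\mathrm dw .
$$
The weight ratio $f(w-m')/f(w-m)$ is monotone in $w$ by log-concavity. The monotone-in-$w$ structure of $D$ then lets me split at a pivot $w_0\in[\underline w,\bar w]$. The pointwise step $D(m,w)>0\Rightarrow D(m',w)\ge0$ carries the positive part. The standard Karlin inequality $\int g\,\rho\ge\rho(w_0)\int g$, for $g$ sign-changing once at $w_0$ and $\rho$ increasing, carries the reweighting. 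I expect to invoke Lemma~\ref{lemma:sc_incr} exactly in this form, with the hypotheses matched by the three-region condition.

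For part (II), log convexity reverses the monotonicity of the ratio. The same argument therefore runs with the roles of $Y^1$ and $Y^2$ swapped, which is why the hypothesis is stated as single crossing of $-D$ in $y$. The support is $(-\infty,\omega]$, so $f(w-m)=0$ for $w>m+\omega$. The ratio is then not defined everywhere, and the reweighting argument can fail near the upper endpoint. The extra hypothesis handles this. Since $D(y,w)=0$ for $y<w-\omega$ with $w<\bar w$, the integrand vanishes precisely where the support boundary interferes in the lower region. I will therefore split $\Delta$ into the contribution from $w<\bar w$, where the Karlin-type argument with the reversed ratio applies, and the contribution from $w\ge\bar w$, where $D\ge0$ directly. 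Their sum inherits the single-crossing property.

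The main obstacle is the bookkeeping in part (II): verifying that the region where both the reversed-ratio argument and the boundary cutoff interact is exactly neutralized by the vanishing hypothesis. The sign conventions when swapping $Y^1$ and $Y^2$ also need care, since they must return the conclusion $F_{Y^1+W}-F_{Y^2+W}$ rather than its negative.
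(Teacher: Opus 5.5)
There is a genuine gap at the first step: your change of variables is wrong. Substitute $s=w-m$ in $\int D(m,w)f(w-m)\,\mathrm dw$. This gives $\int\bigl[F_{Y^1|W}(m\mid s)-F_{Y^2|W}(m\mid s)\bigr]f(s)\,\mathrm ds=F_{Y^1}(m)-F_{Y^2}(m)$, which compares the marginal CDFs of $Y^1$ and $Y^2$, not those of $Y^1+W$ and $Y^2+W$. The substitution you discarded is the correct one. With $\ell=m-s$, the conditioning point is $m-\ell$, so
\[
\Delta(m)=\int D(\ell,m)\,f(m-\ell)\,\mathrm d\ell .
\]
Here the cutoff $m$ is the \emph{second} argument of $D$, because $\ell+(m-\ell)=m$. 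That is exactly why the hypotheses use conditioning point $w-y$. This representation is the form of Lemma~\ref{lemma:sc_incr} with $\phi=D$ and $\kappa(\ell,m)=f(m-\ell)$. Under log concavity this kernel is TP2 in $(\ell,m)$, not in $(w,-m)$ as you state. Single crossing of $y\mapsto D(y,w)$ is condition (i), in the integration variable. The three-region condition on $w\mapsto D(y,w)$ is condition (ii), in the outer variable. Part (I) then follows in one line, which is the paper's proof. In your representation these roles are inverted: single crossing sits in the outer variable and the three-region structure in the integration variable. So neither Lemma~\ref{lemma:sc_incr} nor your pivot argument can be matched to the hypotheses.

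Part (II) inherits this error and has three further problems.

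First, the orientation is not fixed by swapping $Y^1$ and $Y^2$. That would target $F_{Y^2+W}-F_{Y^1+W}$ and reverse the three-region condition. It is fixed by the reflection $\ell\mapsto-\ell$: write $\Delta(m)=\int\phi'(\ell,m)f(\ell+m)\,\mathrm d\ell$ with $\phi'(\ell,m)=D(-\ell,m)$. This turns the reversed single-crossing hypothesis into ordinary single crossing in $\ell$. It also turns the kernel into $f(\ell+m)$, which is TP2 where $f>0$ when $\log f$ is convex, and it leaves monotonicity in $m$ untouched.

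Second, the split must be over the cutoff, not over the integrand. For $m\ge\bar w$ you have $\Delta(m)\ge0$ because $D(\cdot,m)\ge0$ there, and single crossing is proved separately on $(-\infty,\bar w)$. Adding a nonnegative contribution to a single-crossing one, as you propose, does not preserve single crossing in general.

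Third, $f$ vanishes above $\omega$, so $f(\ell+m)$ is not TP2 on $\mathbb R^2$ and Lemma~\ref{lemma:sc_incr} cannot be applied to it directly. Your intuition about the vanishing hypothesis is right: it gives $\phi'(\ell,m)=0$ whenever $m<\bar w$ and $m+\ell>\omega$, so the kernel may be altered above $\omega$. But you still need a TP2 continuation. The paper replaces $\log f$ beyond points $r_n\uparrow\omega$ by supporting lines, since the slope of $\log f$ may be infinite at $\omega$. It then normalizes and applies Lemma~\ref{lemma:sc_incr} to each approximant on $(-\infty,\bar w)$. Finally it passes to the limit using $\|f^n_W-f_W\|_1\to0$ and $|\phi'|\le1$, which preserves single crossing.
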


\subsection{Aggregation under Background Risk}
\label{sec:single-crossing-aggregation}

The final step is to aggregate the conditional wealth comparisons over background risk. Standard preservation results, such as the ones found in \textcite{karlin1956}, cannot be applied directly because the conditional comparison here depends on background wealth. In our case both the conditional signal probabilities and the state-contingent asset payoffs vary with the state. The following result, proved in Appendix~\ref{app:proof-roadmap}, shows that one can preserve single crossing when this state dependence is monotone and the aggregation kernel is TP2. In this way, it provides the common mathematical foundation for both comparative-statics theorems.

\begin{lemma}
\label{lemma:sc_incr}
Let $I\subseteq\mathbb R$ be an interval, let
$
\phi:\mathbb R\times I\to\mathbb R,
\kappa:\mathbb R\times I\to\mathbb R_+,
$
and suppose that $\kappa$ is TP2. Assume that:

\begin{enumerate}[label=(\roman*)]
\item for every $m\in I$, the map
$\ell\longmapsto\phi(\ell,m)$ is single crossing;

\item there exist
$-\infty\le\underline m\le\bar m\le\infty$
such that, for every $\ell\in\mathbb R$, the map
$m\longmapsto\phi(\ell,m)$ is nonpositive on $I\cap(-\infty,\underline m]$, nondecreasing on
$I\cap(\underline m,\bar m)$, and nonnegative on
$I\cap[\bar m,\infty)$;

\item 
for every $\ell$ and $m$ in $I$, the map
$k\longmapsto\phi(k,\ell)\kappa(k,m)$ is integrable. 
\end{enumerate}

Then
$m\longmapsto
\int_{\mathbb R}\phi(\ell,m)\kappa(\ell,m)\,d\ell$
is single crossing on $I$.
\end{lemma}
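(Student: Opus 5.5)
The plan is to prove the definition of single crossing directly. Fix $m<m'$ in $I$ with $\Psi(m):=\int\phi(\ell,m)\kappa(\ell,m)\,d\ell>0$; I will show $\Psi(m')\ge 0$. Condition (iii) guarantees that all integrals below, including the cross term $\int\phi(\ell,m)\kappa(\ell,m')\,d\ell$, are finite.

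\emph{Reducing to the middle region.} By (ii), $\phi(\cdot,m)\le 0$ if $m\le\underline m$, which would force $\Psi(m)\le 0$; so $m>\underline m$. Also by (ii), if $m'\ge\bar m$ then $\phi(\cdot,m')\ge 0$, so $\Psi(m')\ge 0$ trivially. It remains to treat $\underline m<m<m'<\bar m$. In that case (ii) gives $\phi(\ell,m')\ge\phi(\ell,m)$ for every $\ell$. Since $\kappa\ge0$,
$$\Psi(m')\ \ge\ \int\phi(\ell,m)\kappa(\ell,m')\,d\ell .$$
This trades the state dependence of $\phi$ for a change of kernel only. The problem becomes the classical Karlin preservation step with the fixed single-crossing function $\phi(\cdot,m)$ given by (i).

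\emph{Karlin step.} By (i) there is a crossing point $\ell_0$ with $\phi(\ell,m)\le 0$ for $\ell<\ell_0$ and $\phi(\ell,m)\ge0$ for $\ell>\ell_0$. TP2 of $\kappa$ gives
$$\kappa(\ell,m')\kappa(\ell_0,m)\le\kappa(\ell_0,m')\kappa(\ell,m)\quad\text{for }\ell<\ell_0,$$
with the reverse inequality for $\ell>\ell_0$. Multiplying by $\phi(\ell,m)$, whose sign flips exactly at $\ell_0$, yields in both cases the pointwise inequality
$$\phi(\ell,m)\kappa(\ell,m')\kappa(\ell_0,m)\ \ge\ \phi(\ell,m)\kappa(\ell,m)\kappa(\ell_0,m').$$
Integrating over $\ell$ gives
$$\kappa(\ell_0,m)\int\phi(\ell,m)\kappa(\ell,m')\,d\ell\ \ge\ \kappa(\ell_0,m')\,\Psi(m).$$
If $\kappa(\ell_0,m)>0$, dividing shows the cross term is $\ge0$, hence $\Psi(m')\ge 0$. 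If $\kappa(\ell_0,m)=0$ but $\kappa(\ell_0,m')>0$, the right-hand side is strictly positive while the left is zero, which is impossible.

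\emph{Main obstacle.} The delicate case is $\kappa(\ell_0,m)=\kappa(\ell_0,m')=0$, since the inequality above then becomes vacuous. My first attempt exploits freedom in $\ell_0$. Any point in the closed interval $[a,b]$ works as a crossing point, where $a=\sup\{\phi(\cdot,m)<0\}$ and $b=\inf\{\phi(\cdot,m)>0\}$. Because $\Psi(m)>0$, the set $\{\ell:\phi(\ell,m)>0,\ \kappa(\ell,m)>0\}$ has positive measure. TP2 with zeros forces the positivity sets of $\kappa(\cdot,m)$ and $\kappa(\cdot,m')$ to be ``interval-like'' and ordered. Using this, I would argue as follows:
\begin{itemize}
\item either one can pick $\ell_0$ with $\kappa(\ell_0,m)>0$;
\item or $\kappa(\cdot,m)$ vanishes on the whole region where $\phi(\cdot,m)<0$, so that $\phi(\cdot,m)\kappa(\cdot,m')$ is supported where $\phi(\cdot,m)\ge 0$.
\end{itemize}
In the second case, TP2 again places the mass of $\kappa(\cdot,m')$ to the right of that of $\kappa(\cdot,m)$, and the cross term is directly nonnegative. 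Should this bookkeeping prove messy, my fallback is the symmetric double-integral form of the Karlin argument. There one integrates
$$\phi(\ell,m)\phi(\ell',m)\bigl[\kappa(\ell,m)\kappa(\ell',m')-\kappa(\ell',m)\kappa(\ell,m')\bigr]$$
over pairs $\ell<\ell'$ straddling the crossing point, which avoids dividing by any single value of $\kappa$.
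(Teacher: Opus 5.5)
Your argument is correct and is essentially the paper's proof. You exclude $m\le\underline m$, dispose of $m'\ge\bar m$ by nonnegativity, and in the middle region use monotonicity in the second argument to get $\Psi(m')\ge\int\phi(\ell,m)\kappa(\ell,m')\,d\ell$; nonnegativity of this cross term (given $\Psi(m)>0$) is exactly the classical Karlin--Rubin result, which the paper cites as Lemma~\ref{lemma:sc} rather than reproving.

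If you want that step self-contained, rely on your double-integral fallback. Set $P(t)=\int\phi(\ell,m)^+\kappa(\ell,t)\,d\ell$ and $N(t)=\int\phi(\ell,m)^-\kappa(\ell,t)\,d\ell$. Single crossing forces $\ell<\ell'$ whenever $\phi(\ell,m)<0<\phi(\ell',m)$, so TP2 gives $P(m')N(m)\ge P(m)N(m')$, and this yields $P(m')\ge N(m')$ both when $N(m)>0$ and when $N(m)=0$. Your first dichotomy, by contrast, is not exhaustive: for a product kernel $\kappa(\ell,t)=a(\ell)c(t)$ the positivity sets need not be intervals, and $a$ may vanish on the whole crossing interval while being positive where $\phi(\cdot,m)<0$, so neither bullet applies.
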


The novelty relative to the classical preservation result of \textcite{karlin1956} (\Cref{lemma:sc} in \Cref{app:proof-roadmap}) is that the conditional comparison is allowed to vary with the aggregation variable.
In our application, there is variation, because the conditional distribution of the selected payoff, and hence the conditional wealth comparison, varies with the state.
The monotonicity condition in the second argument makes sure that there is a coherence in the movement of comparisons with background wealth. At the same time, TP2 ensures that integration preserves their single-crossing structure. The classical result of \textcite{karlin1956} can be recovered by considering the case $I=\mathbb R$ and $\phi$ independent of its second argument.

\subsection{Proofs of the Main Theorems}
\label{sec:main-proofs}

We now combine the preceding results. Lemma~\ref{lemma:ptp_char} reduces each comparative-statics result to a single-crossing comparison between terminal wealth with and without information, while Lemma~\ref{lemma:mon} provides an optimal monotone strategy. For investment menus, \Cref{lemma:sc_random}\ref{item:sc_random:logconcave} aggregates the resulting conditional comparison under log-concave background risk, while \Cref{lemma:sc_random}\ref{item:sc_random:logconvex} performs the same task for insurance menus under downward log convex background risk. 

The remaining conditional properties of monotone strategies, necessary to apply Lemma~\ref{lemma:sc_random}, are established in \Cref{lemma:investment-conditional,lemma:insurance-conditional} in \Cref{app:main-proofs}.

\begin{proof}[Proof of \Cref{theorem:investment}]
Let $w^*$ and $\succeq$ satisfy \Cref{def:investment}. If
$w^*\in\{\pm\infty\}$, then either \eqref{eq:ZAZI} fails for every nontrivial utility under consideration or every admissible asset is weakly
dominated by the status quo, so the conclusion is immediate. Hence assume $w^*\in\mathbb R$.

Fix $u\in U$ and $\mu\ge0$. Let $\chi$ denote the signal. By \Cref{lemma:mon}, applied to $\succeq$, there exists an optimal strategy $\beta$ satisfying $\beta_y\succeq\beta_x$ whenever $x\le y$, hence monotone relative to $\succeq$. By \Cref{lemma:investment-conditional}, the conditional distributions
associated with
$Y^1=0$,
$Y^2=\beta_\chi(W)-\mu$
satisfy the hypotheses of \Cref{lemma:sc_random}\ref{item:sc_random:logconcave}. Hence $F-G^\mu_\beta$ is single crossing. The conclusion follows from \Cref{lemma:ptp_char}.
\end{proof}

\begin{proof}[Proof of \Cref{theorem:insurance}]
Let $w^*$ and $\succeq$ satisfy \Cref{def:insurance}, and let $f$ be downward log convex with support $(-\infty,\omega]$. If
$w^*\in\{-\infty\}\cup[\omega,\infty]$, comparison with $\mathbf{0}$ makes every asset one-signed on the support of $F$. The ZAZI condition excludes any nonnegative nonzero asset, while every nonpositive asset is dominated by the status quo. The conclusion is therefore immediate. Hence assume $w^*\in(-\infty,\omega)$.

Let $\succeq^R$ be the reverse of $\succeq$: $b'\succeq^R b$ if and only if $b\succeq b'$. If $b'\succeq^R b$, then $b'-b$ is single crossing. Fix $u\in U$ and $\mu\ge0$. By \Cref{lemma:mon}, applied to $\succeq^R$, there exists an optimal strategy $\beta$ satisfying $\beta_y\succeq^R\beta_x$, equivalently $\beta_x\succeq\beta_y$, whenever $x\le y$. Let $\chi$ denote the signal and define
$
Y^1=\beta_\chi(W)-\mu$,
$Y^2=0.$
By \Cref{lemma:insurance-conditional}, the associated conditional distributions satisfy the hypotheses of \Cref{lemma:sc_random}\ref{item:sc_random:logconvex}. Hence $G^\mu_\beta-F$ is single crossing. The conclusion follows from \Cref{lemma:ptp_char}.
\end{proof}

\section{Comparative-Statics Reversals}\label{sec:reversals}

Theorems~\ref{theorem:investment} and \ref{theorem:insurance} identify two broad classes of background-risk distributions under which the value of information is monotone in risk aversion. We now show that their tail restrictions have substantive force by constructing environments in which each comparative statics result reverses. For investment menus, the reversal arises under strict log convexity. For insurance menus, it arises under a super-exponential left tail, whose limiting behavior is opposite to downward log convexity. These constructions complement the main results without asserting that their assumptions are necessary.

\subsection{Investment Menus under Log-Convex Tails}\label{sec:investment-reversals}

We first consider investment menus under log-convex background risk. In contrast with \Cref{theorem:investment}, willingness to pay for information need no longer decrease with risk aversion. The following result is proved in \Cref{app:invest-log-convex}.

\begin{proposition}[Investment Reversal under Log-Convex Tails]
\label{prop:invest-log-convex}
Suppose that $f$ is continuously differentiable and strictly positive on $(-\infty,\omega)$ for some $\omega \in \mathbb{R}$, and that $\log f$ is strictly convex on this interval. 

Then there exist an investment menu $B=\{\mathbf{0},b\}$, a binary MLR-ordered experiment $\alpha$, and increasing, Lipschitz continuous, concave utility functions $u$, $v$ satisfying \eqref{eq:ZAZI} such that $u$ is more risk
averse than $v$ and
$$
M(u)>M(v).
$$
\end{proposition}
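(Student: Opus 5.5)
The plan is to build an explicit two-asset, two-signal environment in which the lower-tail screening motive described after \Cref{theorem:investment} dominates. The construction has three ingredients.

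First, the menu. Take $b$ to be a nondecreasing step investment with crossing point $w^*$ in the interior of the support:
$$
b(w)=-L\ \text{ for } w< w^*,\qquad b(w)=H \ \text{ for } w\ge w^*,
$$
so that $B=\{\mathbf 0,b\}$ is an investment menu.

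Second, the experiment. Use a binary threshold-type signal: $h_w(1)=1$ for $w\ge c$, and $h_w(1)=q\in(0,1)$ for $w<c$. This kernel is TP2, hence MLR-ordered. The threshold $c$ will be placed deep in the left tail.

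Third, the utilities. Work with piecewise-linear concave utilities, which are Lipschitz and increasing. Take $v$ with a kink at a moderate level and $u=\phi\circ v$ with an additional strong kink (slope ratio $K$ large) at a low wealth level $a\ll w^*$. Then $u$ is essentially "risk neutral above $a$, very averse to falling below $a$."

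By \Cref{lemma:ptp_char}, read contrapositively, a reversal requires $F-G^\mu_\beta$ to fail single crossing. I will not use that lemma, though; I will compute $M(u)$ and $M(v)$ directly.

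The mechanism is as follows. For the very risk-averse $u$, the value of information should come mainly from the lower tail. After the high signal, the agent knows $w\ge c$. She can then invest and collect the premium $H$ on $\{w\ge w^*\}$, while the loss $-L$ on $[c,w^*)$ keeps her above the kink at $a$ provided $c-L\ge a$ roughly. Without information, by contrast, investing exposes the tail $\{w<c\}$ to the loss, pushing mass below $a$ at cost $\approx K\,L\,\Pr(W\in(a,a+L))$.

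The key tail estimate uses strict log convexity. Because $(\log f)'$ is increasing and the density is positive down to $-\infty$ (its support is $(-\infty,\omega)$), the ratio $f(w-\mu)/f(w)=\exp\!\left(-\int_{w-\mu}^w(\log f)'\right)$ tends, as $w\to-\infty$, to a limit strictly larger than what an exponential (log-linear) tail would give. Hence a price shift $\mu$ moves comparatively little mass across $a$ in the far left tail. Strict convexity guarantees that this ratio is bounded away from the log-linear benchmark on the relevant region.

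The comparison then proceeds in three steps.

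(i) Choose $H,L$ so that ZAZI holds for both $u$ and $v$. For $v$ this is an expected-value condition: $\int b\,dF\le0$ after the moderate kink adjustment. For $u$ the kink only penalizes investing further, so ZAZI is preserved automatically.

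(ii) Compute $M(v)$. For $v$ (nearly linear on the relevant range), the informed strategy invests after the high signal. Its gain is roughly $\int_{w\ge c}b\,dF-\int b\,dF = L\,(1-q)F(c)$ plus kink corrections. This is small when $c$ is far in the tail.

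(iii) Compute $M(u)$. For $u$, the gain from information is of the same order, but at price $\mu$ the loss from the downward shift is measured by $K\,\Pr(W\in(a-\mu,a])$ against the same type of gain. Choose $a$ and $c$ jointly so that the $K$-weighted tail gain scales like $K\,(1-q)\,[F(a+L)-F(a)]$ near the cutoff while the price cost scales like $K\,[F(a)-F(a-\mu)]$. Log convexity makes the ratio of these two quantities bounded below as $a\to-\infty$, whereas the gain for $v$ vanishes. Taking $a$ sufficiently negative and $K$ large then yields $M(u)>M(v)$.

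The main obstacle is step (iii): making the two tail quantities comparable uniformly, and ensuring the optimal informed strategy for $u$ is indeed "invest only after the high signal." My first attempt would fix $\mu$ at the candidate $M(v)$ and show the informed value for $u$ strictly exceeds its uninformed value. This needs the hazard-type ratio $[F(a+L)-F(a)]/[F(a)-F(a-\mu)]$ to be large for small $\mu$, uniformly as $a\to-\infty$. That uniformity is where strict convexity of $\log f$, together with continuity of $f'$ (to make the mean-value estimates uniform), enters. If a clean uniform bound fails, I would instead let $q\to0$ and $\mu\to0$ along a sequence, and compare first-order derivatives of the value in $\mu$ at $\mu=0$.
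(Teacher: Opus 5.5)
Your construction cannot produce a reversal. There is first a slip: with $h_w(1)=q$ on $\{w<c\}$ and $h_w(1)=1$ on $\{w\ge c\}$, it is signal $0$ that is conclusive, revealing $w<c$; after signal $1$ the tail $\{w<c\}$ keeps weight $q$. The substantive problem is that the $K$-weighted ``gain'' in step (iii) is measured against investing uninformed. For $u$ that is far worse than the status quo, and \eqref{eq:ZAZI} makes the status quo the relevant benchmark. Because $b\in\{0,-L\}$ on $(-\infty,w^*)$, every strategy under any experiment yields terminal wealth at most $W-\mu$ on $\{W<w^*\}$. Hence $G^\mu_\beta(m)\ge F(m)$ for all $m<w^*$: information never raises wealth in states below the crossing point. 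Take $u=v-(K-1)(a-\cdot)^+$ with $a<w^*$ and $v$ linear below $a$. Integration by parts gives, for every strategy $\beta$,
$$
\int u\,\mathrm d\bigl(G^\mu_\beta-F\bigr)=\int v\,\mathrm d\bigl(G^\mu_\beta-F\bigr)-(K-1)\int_{-\infty}^{a}\bigl(G^\mu_\beta-F\bigr)\,\mathrm dm\le\int v\,\mathrm d\bigl(G^\mu_\beta-F\bigr),
$$
so adding curvature below $w^*$ can only lower willingness to pay. With your experiment it is worse still. Signal $1$ is certain on $\{W\ge w^*\}$, so any strategy that invests after signal $1$ has $G^\mu_\beta(m)=\max\{F(w^*),F(m-H+\mu)\}\le F(m)$ for $m\ge w^*$ when $H\ge\mu$, and every other case gives $G^\mu_\beta\ge F$ everywhere. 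Thus $F-G^\mu_\beta$ is single crossing for every strategy and price, and \Cref{lemma:ptp_char} yields $M(u)\le M(v)$ for all pairs satisfying \eqref{eq:ZAZI}. The contrapositive obstruction you point out applies to your own environment.

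Step (iii) also never genuinely uses strict log convexity. For a Gaussian, $[F(a+L)-F(a)]/[F(a)-F(a-\mu)]\to\infty$ as $a\to-\infty$, so the same heuristic would ``prove'' a reversal that \Cref{theorem:investment} rules out. Nor does strict log convexity keep $f(w-\mu)/f(w)$ away from the log-linear value. Integrability on $(-\infty,\omega)$ forces $(\log f)'>0$ and $(\log f)'(w)\downarrow\ell\in[0,\infty)$ as $w\to-\infty$, so $f(w-\mu)/f(w)\to e^{-\ell\mu}$.

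What is missing is an upward movement of wealth in the lower region, screened from the catastrophic states, with the extra curvature of $u$ placed \emph{above} the investment's crossing point. The paper fixes $w_*<\omega$ and a small $\mu$. It then puts the crossing point of $b$ at the experiment's threshold $\widehat w<w_*-\mu$. The good signal never occurs below $\widehat w$ and occurs with probability $q=P_{\mu,\mu}(w_*)$ above it, where $b$ pays a constant $\theta\in(\mu,2\mu)$. Then $G>F$ in the extreme tail, where only the price acts, and $G<F$ on an intermediate lower region. Further, $G>F$ again on $[w_*,\bar w]$, because there $P_{\theta-\mu,\mu}(m)>P_{\mu,\mu}(m)\ge P_{\mu,\mu}(w_*)=q$. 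The first inequality holds since $\theta-\mu<\mu$, and the second is \Cref{lemma:left-tail-geometry}(i). This is where strict log convexity enters, and it makes $F-G$ fail single crossing. With $v_\varepsilon$ linear up to $\bar w$ and of slope $\varepsilon$ above, $\theta$ is chosen by the intermediate value theorem so that $v_\varepsilon$ is exactly indifferent at $\mu$; hence $M(v)=\mu$. The utility $u$ adds a kink at $w_*$, which down-weights $[w_*,\bar w]$, and integration by parts gives $\int u\,\mathrm d(G-F)>0$, hence $M(u)>M(v)$. Your fallback of sending $q,\mu\to0$ and comparing derivatives cannot repair the proposal, since the obstruction above holds for every $q$ and $\mu$.
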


The construction is based on an investment $b$ that generates a large loss in an adverse wealth region and a gain otherwise. Without information, exposure to the loss makes the investment unattractive. The experiment $\alpha$ sometimes generates a positive signal in the favorable region, but never in the adverse region. The positive signal thus induces an investment decision. 

Consider the calibrated price $\mu$ at which the benchmark decision maker $v$ is indifferent between purchasing information and remaining uninformed. Also, let $F$ and $G$ denote the corresponding terminal-wealth distributions. In the extreme left tail, the favorable signal never occurs, so only the information price matters and $G>F$. The calibration yields $G<F$ on an intermediate lower region. Then, strict log convexity yields $G>F$ on a higher interval. This means that the CDF difference changes sign at least twice. We calibrate the favorable intermediate region to dominate the unfavorable extreme-tail in the integrated comparison. Greater risk aversion increases the relative weight placed on this net lower-region improvement.

\subsection{Insurance Menus under Super-Exponential Left Tails}\label{sec:insurance-reversals}

We next consider the opposite tail behavior to that implied by downward log convexity. We say that $f$ has a \emph{super-exponential left tail} if it is continuously differentiable and strictly positive on $(-\infty,\omega)$ for some $\omega\in\mathbb R$ and
$$
\lim_{w\to-\infty}(\log f)'(w)=+\infty.
$$
Under downward log convexity, $(\log f)'$ is nondecreasing and therefore remains bounded above as wealth tends to $-\infty$. Super-exponentiality imposes the opposite limiting behavior: the log slope becomes arbitrarily large in the extreme left tail. In particular, the left tail is thinner than that of any exponential distribution. The following proposition, proved in \Cref{app:insurance-super-exponential}, shows that under this tail behavior the insurance comparative statics of \Cref{theorem:insurance} may reverse.

\begin{proposition}[Insurance Reversal under Super-Exponential Left Tails]
\label{prop:insurance-super-exponential}
Suppose that $f$ has a super-exponential left tail. Then there exist an insurance menu $B=\{\mathbf{0},b\}$, a binary MLR-ordered experiment $\alpha$, and increasing, Lipschitz continuous, concave utility functions $u,v$ satisfying \eqref{eq:ZAZI} such that $u$ is more risk averse than $v$ and
$$
M(u)<M(v).
$$
\end{proposition}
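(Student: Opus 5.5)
The plan is to build an explicit two-point menu and binary experiment, and then show the CDF difference $G^\mu-F$ fails single crossing in a way that rewards the less risk-averse agent. I would take an insurance $b$ that pays a fixed amount $c>0$ on $(-\infty,a]$ and charges a premium $-p<0$ above, smoothed to a continuous nonincreasing map with $w+b(w)$ nondecreasing, for instance
$$
b(w)=\min\{c,\max\{-p,\,a-w\}\}.
$$
The experiment $\alpha$ has signals $\{0,1\}$, with $h_w(0)=1$ for $w\le a'$ for some $a'<a$ and $h_w(0)=q\in(0,1)$ otherwise. This kernel is MLR-ordered, and the low signal is the one that induces insurance, which matches \Cref{lemma:mon} with $\succeq^R$. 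Because the insurance is bought only after signal $0$ and the premium is charged on a set of positive mass, we can choose $p$ large enough that ZAZI holds for both utilities.

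The key mechanism sits in the extreme left tail. There the terminal wealth with information is $w+c-\mu$ when $w\le a'$, so
$$
G^\mu(m)=F(m-c+\mu)\quad\text{for very low } m.
$$
If $\mu<c$, this is below $F(m)$, so information helps. In the intermediate region the premium and the price push $G^\mu$ above $F$.

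To obtain the reversal, I would instead make the tail comparison go the other way by arranging that the relevant price $\mu$ exceeds $c$. I would make $c$ small and let the benefit of information come mainly from avoiding the premium. After signal $1$, whose states are all above $a'$, the agent does not insure. The gain is then the saved expected premium, which is first order, while the tail gain is only $c$.

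I would calibrate $\mu=M(v)$ with $c<\mu$, so that $G^\mu>F$ in the extreme left tail, i.e. $G^\mu(m)=F(m-c+\mu)>F(m)$. Super-exponentiality is where the tail enters. Since $(\log f)'\to\infty$, the ratio $F(m-\delta)/F(m)\to 0$ for any fixed $\delta>0$. This makes the tail mass of the comparison negligible for moderate agents but decisive for extremely risk-averse ones. The positive-$\delta$ case is the relevant one here, since it is the price net of coverage that lowers tail wealth.

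Next I choose the utilities. Let $v$ be linear on a large range (Lipschitz, concave), and let
$$
u=\phi\circ v, \qquad \phi(t)=\min\{t,\,\kappa t\},
$$
or more generally a utility with a kink at a low threshold $m_0$ deep in the tail. Then $u$ weights heavily the event $\{\text{wealth}\le m_0\}$. We have
$$
\int u\,dG-\int u\,dF=\int v\,dG-\int v\,dF-(\kappa-1)\int_{-\infty}^{m_0}(G-F)\,dm .
$$
At $\mu=M(v)$ the first difference is $0$ and the integral is positive, so $u$ strictly prefers not to buy, giving $M(u)<M(v)$ by \Cref{prop:mustar}. ZAZI for $u$ must also be checked. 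Without information, insuring shifts tail wealth up by $c$, which could attract $u$. To handle this, I would choose $m_0$ below $a'$ and take the premium charged at tail states into account, or alternatively choose $c$ equal to its minimum. This ZAZI-for-$u$ check is the step I expect to be the main obstacle: the uninformed insurance improves the tail too, and only by comparing it with the informed tail shift $c-\mu<0$ (informed is worse than uninformed-insured and worse than status quo) can we ensure the kink penalizes information while still not favoring the uninformed purchase. I would try to secure this by taking $\kappa$ moderate and $m_0\to-\infty$, using $F(m_0-\delta)/F(m_0)\to0$ to compare orders of magnitude.
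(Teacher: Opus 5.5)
There is a genuine gap. The step ``calibrate $\mu=M(v)$ with $c<\mu$'' cannot be carried out for any utility. Since $b\le c$, at any price $\mu>c$ every strategy gives terminal wealth $w+\beta_x(w)-\mu\le w+c-\mu<w$ in every state, so every strictly increasing agent strictly prefers not to buy. Hence $M(v)\le c$ for all $v\in U$. For linear $v$ under \eqref{eq:ZAZI} one even has $M(v)=\sum_x h(x)\max\{0,\E[b\mid x]\}\le\E[b^+]<c$.

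Your ``saved premium'' intuition is where this goes wrong. Under \eqref{eq:ZAZI} the uninformed benchmark already holds $\mathbf 0$, so information saves no premium. Its only gain is coverage after the low signal, and that gain is capped by $c$. Your experiment sends the insuring signal with probability one on $(-\infty,a']$, so deep in the tail $G^\mu(m)=F(m-(c-\mu))\le F(m)$. Information therefore improves the worst outcomes, and a kink at $m_0$ there makes $u$ more willing to pay, not less. That is the direction of \Cref{theorem:insurance}, not a reversal.

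A warning sign is that super-exponentiality does no real work in your argument. If $\mu>c$ were attainable, $G^\mu>F$ in the tail would follow from monotonicity of $F$ alone. The same construction would then reverse the comparative statics for downward-log-convex densities too, contradicting \Cref{theorem:insurance}. Separately, your signal $0$ also arrives with probability $q$ at every $w>a'$, including where $b=-p$. Taking $p$ large to secure ZAZI can then make insurance unattractive even after signal $0$.

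The missing idea is to make the signal noisy exactly in the extreme tail, so that a positive fraction of the worst states pays the price without receiving coverage. In your notation, the paper's construction takes:
\begin{itemize}
\item $h_w(0)=q\mathbf 1_{\{w\le a'\}}$ with $q\in(0,1)$;
\item $a'\le a-c$, so the insuring signal arrives only where $b=c$;
\item $v(w)=w$ and $\mu=qcF(a')=M(v)<c$.
\end{itemize}
Then for $m<a'-\mu$,
$$
G(m)=qF(m+\mu-c)+(1-q)F(m+\mu).
$$
Super-exponentiality enters precisely through $F(m+\mu)/F(m)\to+\infty$, which follows by integrating $(\log f)'\ge K$ over the tail. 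So the uninsured branch of mass $1-q$ eventually dominates, and $G>F$ below some $t$.

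The step you flagged as the main obstacle is then routine. With $u_\lambda(w)=w-(\lambda-1)(t-w)^+$, one gets $\int u_\lambda\,\mathrm d(G-F)=-(\lambda-1)\int_{-\infty}^t(G-F)\,\mathrm dm<0$ for every $\lambda>1$. The prior and both posterior action comparisons for $v$ are strict, so by continuity they persist for $\lambda$ close to $1$. No limit $m_0\to-\infty$ or order-of-magnitude comparison is needed.
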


The benchmark decision maker $v$ is risk neutral. Our construction relies on an insurance contract that pays a fixed amount in sufficiently adverse states and produces losses in more favorable ones. After a low signal, the contract pays the fixed amount throughout the posterior support. For this reason it is strictly preferred by every increasing decision maker. Without information, the benchmark decision maker rejects the contract because its prior expected payoff is negative. She also rejects it after a high signal, which downweights the adverse states in which the contract pays. The information price is set exactly equal to the expected gross payoff from this signal-contingent strategy. This makes $v$ indifferent between purchasing information and remaining uninformed.

Let $F$ denote the terminal-wealth distribution when information is not purchased by $v$ and $G$ when she does and follows this strategy. In sufficiently adverse states, the low signal occurs with probability $q\in(0,1)$. After a high signal, no insurance is selected, but the information price is still paid. For a sufficiently low cutoff $m$, this high signal contributes $(1-q)F(m+\mu)$ to $G(m)$.

Super-exponentiality implies that $F(m+\mu)/F(m)\to+\infty$ as $m\to-\infty$. This means that the branch where price is the only effect eventually makes $G(m)>F(m)$, in spite of the protection provided after the low signal. Since $v$ is linear and indifferent, $F$ and $G$ have the same mean. We obtain the utility $u$  by increasing the slope of $v$ only in this extreme lower-tail region. This makes $u$ more risk averse and gives greater weight to worst outcomes. By choosing the change sufficiently small, the prior and posterior insurance choices made by $v$ remain optimal for $u$, while information becomes strictly unattractive at the benchmark price.

Together, Propositions~\ref{prop:invest-log-convex} and~\ref{prop:insurance-super-exponential} demonstrate that the tail restrictions in the positive theorems have a substantive effect. Alternative tail geometries can reverse each comparative statics result. Section~\ref{sec:assumption-diagnostic} isolates the local probability-mass comparison behind the positive and reversal results.


\section{Extensions and Applications}\label{sec:extensions-and-applications}

This section first isolates the roles of tail curvature and support in the main theorems. It then applies the results to financial options and to continuous-state environments with multiple investment or insurance opportunities. The final subsection considers a related but distinct question: conditional on acquiring information, how does risk aversion affect the choice between an investment and an insurance opportunity?

\subsection{The Role of the Tail Assumptions}\label{sec:assumption-diagnostic}
The two-state examples in \Cref{sec:motivating-examples} were designed to isolate asset cyclicality from the geometry of background risk.
The information is used either to take larger exposure selectively or to direct protection toward bad states. In those examples, no shape restriction on background risk is needed. In the general model, the assumptions on the tail become relevant. This is so because purchasing information and using it in the asset choice move wealth in opposite directions. We first show a simple local calculation that neatly captures the geometry that underlies the more general aggregation arguments.

Fix a wealth cutoff $m$ and two positive shifts $a$ and $\mu$. Consider the adjacent probability masses $A^+(m)=F(m+\mu)-F(m)$ and $A^-(m)=F(m)-F(m-a)$. Whenever $A^+(m)+A^-(m)>0$, define their relative share by
$$
P_{a,\mu}(m)=\frac{A^+(m)}{A^+(m)+A^-(m)}=\frac{F(m+\mu)-F(m)}{F(m+\mu)-F(m-a)}.
$$
The numerator is the amount of mass just above $m$ which a uniform price $\mu$ can push below the cutoff. The term $A^-(m)$, on the other hand, is the mass just below $m$ which may be moved above it by a state-contingent upward payoff of size $a$. Thus $P_{a,\mu}(m)$ becomes a good local measure of the relative strengths of the price effect versus the asset-reallocation effect. The full proofs have to take into account in addition the endogenous asset choice and the signal structure. However, it is this adjacent-mass comparison in the tail geometry that makes the aggregation work.

The curvature of $\log f$ determines how this comparison varies with the cutoff under the hypotheses of Lemma~\ref{lemma:left-tail-geometry} when the two adjacent intervals lie in the positive-density region. Under strict log concavity, $P_{a,\mu}(m)$ decreases with $m$, so the price effect becomes relatively more important as the cutoff moves farther into the left tail. Under strict log convexity, $P_{a,\mu}(m)$ increases with $m$, so the asset-reallocation effect becomes relatively more important farther into the left tail.

These are the local geometries behind the investment and insurance results, respectively.

These monotonicity statements are formalized in \Cref{lemma:left-tail-geometry}. The two reversal results use different tail comparisons. Strict log convexity reverses the adjacent-mass ordering behind the investment result. The insurance reversal instead exploits super-exponentiality: for every $\mu>0$, $F(m+\mu)/F(m)\to+\infty$ as $m\to-\infty$, and consequently $P_{a,\mu}(m)\to1$. Thus a branch that bears the information price without receiving insurance eventually dominates the baseline lower-tail probability. A log-linear density, such as $f(w)=e^w$ on $(-\infty,0]$, lies at the boundary of the adjacent-mass comparison because the corresponding ratio is constant.

In addition, we use the support assumption in the insurance theorem in a different way from log convexity. Curvature serves to control the relative mass of neighboring regions away from a boundary. The requirement that the support is unbounded below ensures that this comparison stays relevant as far as necessary into the bad tail. If the lower endpoint were finite, a sufficiently risk-averse agent's preferences could be dominated by what happens near that boundary. She may then cease to purchase costly information. In this case an increasing comparative statics result cannot be guaranteed. This is why our definition of downward log convexity combines convexity of $\log f$ with support extending to $-\infty$. The reflected Lomax distribution used in \Cref{example:continuous-prefabricated-house} satisfies both requirements.

We then see that the assumptions have distinct roles. The investment-versus-insurance menu condition determines the direction in which information reallocates wealth across states. The assumption on tail curvature determines how that reallocation compares with the universal wealth loss from the information price.

The next two subsections give examples where these proof ingredients are used in economically relevant situations.

\subsection{Financial Options}\label{sec:financial-options}

The sufficient conditions in \Cref{theorem:investment,theorem:insurance} apply to option payoffs when the underlying payoff is a monotone function of background wealth. Let $S=s(w)$, where $s$ is nondecreasing, and let $\pi\geq0$. A European call with strike $K$ and premium $\pi$ has net payoff $b_C(w)=(s(w)-K)^+-\pi$. Assume that $b_C$ is $F$-integrable and has a zero $w^*$. Then, for any compact $L\subseteq\mathbb R_+$ containing $0$, $B_C=\{\lambda b_C:\lambda\in L\}$ is an investment menu: if $\lambda'\geq\lambda$, then $\lambda'b_C\leq\lambda b_C$ below $w^*$ and $\lambda'b_C\geq\lambda b_C$ above $w^*$. Under log concavity of the background-risk density, \Cref{theorem:investment} implies that, among decision makers satisfying ZAZI, willingness to pay for information before choosing the call position decreases with risk aversion.

Put options provide the corresponding insurance example. Consider a put written directly on wealth, with strike $K$, premium $\pi\geq0$, and net payoff $b_P(w)=(K-w)^+-\pi$. This payoff is nonincreasing and is $F$-integrable under the maintained first-moment assumption. For every $\lambda\in[0,1]$, the map $w+\lambda b_P(w)$ is nondecreasing: its slope is $1-\lambda$ below $K$ and $1$ above $K$. Since $b_P$ crosses zero at $w^*=K-\pi$, $B_P=\{\lambda b_P:\lambda\in[0,1]\}$ is an insurance menu. Under downward log convexity of the background-risk density, \Cref{theorem:insurance} implies that, among decision makers satisfying ZAZI, willingness to pay for information before choosing the put position increases with risk aversion.

These examples also illustrate the importance of the menu condition. A single call option in which the position size is variable satisfies the condition about investment-menus. A single put option where the position is limited, satisfies the insurance-menu condition. On the other hand, if we use an arbitrary collection of options where strikes and premia are different, we need not satisfy either condition. This is so because different options are able to cross the status quo at different wealth levels. This means that the theorems can be applied in the most direct way to derivative menus ordered by a common exposure parameter.

\subsection{Continuous-State Examples with Multiple Assets}

We now provide continuous-state analogues of the electric-car and prefabricated-house examples in \Cref{sec:motivating-examples}. The diagnostic in \Cref{sec:assumption-diagnostic} explained why distributional shape matters; the purpose of the present examples is complementary. We want to show that the menu,
tail-shape, and support assumptions of the main theorems can arise together in economically natural environments with several assets. The state variable $w$ should be interpreted as the realization of the general economic environment, or equivalently as the businessperson's background wealth. Higher values of $w$ correspond to stronger aggregate conditions.

Let the information structure be binary, $X=\{L,H\}$. For some $\gamma>0$ and threshold $w^{\ast}$, suppose
$$
\begin{aligned}
h_w(H)&=\frac{\exp(\gamma(w-w^{\ast}))}{1+\exp(\gamma(w-w^{\ast}))},&
h_w(L)&=\frac{1}{1+\exp(\gamma(w-w^{\ast}))}.
\end{aligned}
$$
Then $h_w(H)/h_w(L)=\exp(\gamma(w-w^{\ast}))$, which is increasing in $w$. Hence the experiment is MLR-ordered. The signal $H$ is good news about the aggregate state, while $L$ is bad news.

\begin{example}[A portfolio of electric-car investments]
\label{example:continuous-electric-car} 
Suppose that a businessperson with CARA preferences, $u_{r}(m)=-\exp (-rm)$, $r>0$, is deciding how much to invest in a collection of complementary projects related to electric mobility: for instance, an electric-car platform, a battery facility, charging infrastructure, and software. These projects are cyclical: they perform well when the aggregate economy is strong and perform poorly when the aggregate economy is weak.

Assume that background wealth $W$ has a log-concave density. For example, let $W\sim N(m,\sigma ^{2})$. For each project $i=1,\ldots ,n$, let $e_{i}: \mathbb{R}\rightarrow \mathbb{R}$ denote the net payoff from project $i$.

Assume that each $e_{i}$ is nondecreasing, $F$-integrable, and has the same crossing point $w^{\ast }$: $e_{i}(w)\leq 0$ for $w\leq w^{\ast }$,  $e_{i}(w)\geq 0$ \quad \text{for } $w\geq w^{\ast }$. A simple parametric specification is $e_{i}(w)=a_{i}(w-w^{\ast })$, $a_{i}>0$. Let the menu consist of cumulative investment plans, $b_{E}^{k}(w)=\sum_{i=1}^{k}e_{i}(w),\qquad k=1,\ldots ,n,$ together with the status quo: $B_{E}=\{\mathbf{0},b_{E}^{1},\ldots ,b_{E}^{n}\}$. Thus $b_{E}^{1}$ corresponds to investing only in the core electric-car platform, $b_{E}^{2}$
adds a second complementary project, and so on.

By the cumulative-menu construction following \Cref{def:investment}, the monotonicity and common-crossing assumptions on the $e_i$ imply that $B_E$ is an investment menu, ordered by $b_E^{k'}\succeq b_E^k$ if and only if $k'\geq k$.

In the Gaussian linear specification, with $W\sim N(m,\sigma^2)$ and $e_i(w)=a_i(w-w^*)$, write $A_k=\sum_{i=1}^k a_i$. Then $w+b_{E}^{k}(w)=(1+A_{k})w-A_{k}w^{\ast }$. If $w^{\ast }\geq m$, the status quo is optimal under the prior for every CARA coefficient $r>0$. Indeed, 
\begin{equation*}
\mathbb{E}\left[\exp{-r(W+b_E^k(W))}\right] = \exp\left( -r\bigl(%
(1+A_k)m-A_k w^*\bigr) +\frac{r^2}{2}(1+A_k)^2\sigma^2 \right),
\end{equation*} 
which is increasing in $A_{k}\geq 0$ whenever $w^{\ast }\geq m $. Since CARA utility is the negative of this expression, expected utility is maximized at $A_{k}=0$, i.e. at the status quo.

The conclusion of \Cref{theorem:investment} applies to this example. Thus, if $r'>r>0$, then $M(u_{r'})\leq M(u_r)$.
\end{example}

\begin{example}[A portfolio of prefabricated-house investments] \label{example:continuous-prefabricated-house} 
Consider now a businessperson who can invest in several lines of cheap prefabricated housing. These houses are most profitable in bad aggregate states, when households substitute toward inexpensive housing, and are less profitable in good aggregate states, when households demand higher-quality housing. Thus, these projects act like insurance against poor aggregate conditions.


Assume that background wealth $W$ has a downward-log-convex density with the support required by \Cref{theorem:insurance}. A convenient example is a reflected Lomax (Pareto Type II) distribution. Let $Y$ have density
$$
g(y)=\frac{\eta c^{\eta}}{(c+y)^{\eta+1}},\qquad y\geq0,
$$
where $c>0$ and $\eta>1$, and set $W=\omega-Y$. Then $W$ has support $(-\infty,\omega]$ and density
$$
f(w)=\frac{\eta c^{\eta}}{(c+\omega-w)^{\eta+1}},\qquad w\leq\omega.
$$
Moreover,
$$
\log f(w)=\log(\eta c^{\eta})-(\eta+1)\log(c+\omega-w),
$$
so
$$
\frac{\mathrm d^2}{\mathrm dw^2}\log f(w)=\frac{\eta+1}{(c+\omega-w)^2}>0
$$
on $(-\infty,\omega)$. Hence $f$ is downward log convex. The restriction $\eta>1$ also gives a finite first moment, as required in the general model. This distribution preserves the intended fat-left-tail interpretation of the insurance example while satisfying the support assumptions of the theorem.

Let $p_{i}:\mathbb{R}\rightarrow \mathbb{R}$ denote the net payoff from the (i)-th prefabricated-house project. Assume that each $p_{i}$ is nonincreasing, $F$-integrable, and has the same crossing point $w^{\ast }<\omega$: $p_{i}(w)\geq 0$ \quad \text{for } $w\leq w^{\ast }$, $p_{i}(w)\leq 0$ \quad  \text{for }$w\geq w^{\ast }$. A simple parametric specification is $p_{i}(w)=a_{i}(w^{\ast }-w),$ $a_{i}>0$. Let $b_{P}^{k}(w)=\sum_{i=1}^{k}p_{i}(w)$, $k=1,\ldots ,n,$ and define $B_{P}=\{\mathbf{0},b_{P}^{1},\ldots ,b_{P}^{n}\}$. Assume in addition that $w+b_P^k(w)$ is nondecreasing for every $k$. Thus $b_{P}^{1}$ corresponds to investing in one prefabricated-house line, $b_{P}^{2}$ adds a second line, and so on.

Each $b_P^k$ is nonincreasing, and $w+b_P^k(w)$ is nondecreasing by assumption. In the linear specification, the latter property follows from $A_n:=\sum_{i=1}^na_i\leq1$, since $A_k\leq A_n$ and $w+b_P^k(w)=(1-A_k)w+A_kw^*$. Finally, the common-crossing assumption implies that, for $k'\geq k$, $b_P^{k'}-b_P^k$ is nonnegative below $w^*$ and nonpositive above it. Hence $B_P$ is an insurance menu under the order $b_P^{k'}\succeq b_P^k$ if and only if $k'\geq k$.

To verify ZAZI for a nontrivial ordered family, specialize to $w^*=0$ and assume $\omega>c/(\eta-1)$, so $\E[W]>0$. For $u_q(m)=m-q(-m)^+$, define $\bar q=\E[W]/\E[(-W)^+]>0$. Each $u_q$ is increasing, concave, and Lipschitz, and $u_{q'}$ is more risk averse than $u_q$ whenever $q'\geq q$. Since $W+b_P^k(W)=(1-A_k)W$ and $u_q$ is positively homogeneous,
$$
\E[u_q(W+b_P^k(W))]=(1-A_k)\E[u_q(W)]\leq\E[u_q(W)]
$$
for every $q\in[0,\bar q]$. Hence ZAZI holds throughout this family. By \Cref{theorem:insurance}, if $0\leq q\leq q'\leq\bar q$, then $M(u_{q'})\geq M(u_q)$.
\end{example}

\subsection{Mixed Asset Menus}\label{sec:mixed-menus}

The preceding applications concern menus ordered entirely as investments or entirely as insurance. A mixed menu raises a distinct question. The following example does not compare willingness to pay for information; conditional on purchasing full information, it shows how risk aversion changes whether the decision maker selects an investment or an insurance opportunity.

\begin{example}\label{example investment and insurance} 
Suppose that a businessperson with CARA utility $u_r(m)=-e^{-rm}$, $r>0$, is deciding between the two assets described in Examples~\ref{example investment} and~\ref{example insurance} in \Cref{sec:motivating-examples}. As in those examples, there are two states of the world: high, and low. The payoff from not opting for any asset is $\left( k_{H},k_{L}\right) $, with $k_{H}>k_{L}$. 

The net payoff from undertaking the electric-car project is $H_E>0$ in state $H$ and $-L_E<0$ in state $L$. The net payoff from undertaking the prefabricated-house project is $-H_P<0$ in state $H$ and $L_P>0$ in state $L$, with $k_L+L_P<k_H-H_P$.
Because she can oversee only one company, she selects one project before observing a fully revealing signal. After observing the state, she may undertake the selected project or decline it at zero payoff.

Assume that the prior probability of the state being low, denoted by $p$, is such that: 
\begin{equation*}
(1-p)H_E<pL_E
\qquad\text{and}\qquad
pL_P<(1-p)H_P.
\end{equation*}

These inequalities mean that neither project is attractive to an uninformed risk-neutral decision maker. However, they do not imply ZAZI for every $r>0$. Precisely, there is a unique $\bar r_P>0$ with the property that the status quo is optimal relative to the full uninformed menu exactly when $0<r\leq\bar r_P$. The project-selection threshold $r_0$ is defined conditional on purchasing information and need not lie within this ZAZI range. In this case, the result does not establish a switch in project choice among all agents satisfying ZAZI.

Conditional on purchasing information, the price $I$ multiplies the exponential losses from $E$ and $P$ by the common factor $e^{rI}$. Hence $E$ is weakly preferred to $P$ if and only if $\rho(r)\geq1$, where $\rho$ is defined below.

\begin{lemma}
\label{lemma:example3}
Let $0<p<1$, $H_E,L_P>0$, and $k_H-k_L>L_P$.
Define, for $r>0$,
$$
\rho(r):=\frac{1-p}{p}e^{-(k_H-k_L)r}\frac{1-e^{-H_Er}}{1-e^{-L_Pr}}.
$$
Then $\rho$ is strictly decreasing. If
\begin{equation}
\label{condition-PE}
\frac{1-p}{p}\frac{H_E}{L_P}>1,
\end{equation}
then $\rho(r)=1$ has a unique solution $r_0>0$, with $\rho(r)>1$ for $0<r<r_0$ and $\rho(r)<1$ for $r>r_0$. If inequality~\eqref{condition-PE} is reversed weakly, then $\rho(r)<1$ for every $r>0$.
\end{lemma}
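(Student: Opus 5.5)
The plan is to work with $\log\rho$ and show that its derivative is strictly negative on $(0,\infty)$. Then I would compute the boundary limits of $\rho$ and obtain the threshold statements from continuity and strict monotonicity. Write $c:=k_H-k_L$; by hypothesis $c>L_P>0$. Introduce the auxiliary function
$$
\varphi(x):=\frac{x}{e^x-1},\qquad x>0,
$$
which satisfies $\frac{a e^{-ar}}{1-e^{-ar}}=\varphi(ar)/r$ for $a>0$.

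\emph{Step 1 (derivative).} Differentiating gives
$$
\frac{\mathrm d}{\mathrm dr}\log\rho(r)=-c+\frac{H_E e^{-H_E r}}{1-e^{-H_E r}}-\frac{L_P e^{-L_P r}}{1-e^{-L_P r}}=-c+\frac{\varphi(H_E r)-\varphi(L_P r)}{r}.
$$
I would use two elementary bounds on $\varphi$: first $\varphi(x)<1$ for all $x>0$, and second $\varphi(x)\ge 1-x/2$ for all $x>0$. Combining them yields
$$
\frac{\varphi(H_Er)-\varphi(L_Pr)}{r}<\frac{1-\varphi(L_Pr)}{r}\le\frac{L_P}{2}.
$$
Hence $(\log\rho)'(r)<-c+L_P/2<0$, and $\rho$ is strictly decreasing. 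Notably, this argument does not use any ordering between $H_E$ and $L_P$.

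\emph{Step 2 (the inequality $\varphi(x)\ge1-x/2$).} This is the only step needing care. For $x\ge2$ the right-hand side is nonpositive, so the bound is trivial. For $x\in(0,2)$ the bound is equivalent to
$$
e^x\le\frac{1+x/2}{1-x/2}.
$$
Set $t=x/2\in(0,1)$. The right-hand side equals $e^{2\operatorname{artanh}t}$, and $\operatorname{artanh}t\ge t$ because its Taylor coefficients are all nonnegative. This gives the claim. The bound $\varphi(x)<1$ is simply $e^x-1>x$.

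\emph{Step 3 (limits and conclusion).} As $r\downarrow0$,
$$
\frac{1-e^{-H_Er}}{1-e^{-L_Pr}}\to\frac{H_E}{L_P}\quad\text{and}\quad e^{-cr}\to1,
$$
so $\rho(0^+)=\frac{1-p}{p}\frac{H_E}{L_P}$. As $r\to\infty$, the fraction tends to $1$ while $e^{-cr}\to0$, so $\rho(r)\to0$.

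Since $\rho$ is continuous and strictly decreasing on $(0,\infty)$, two cases follow. If condition~\eqref{condition-PE} holds, then $\rho(0^+)>1>0=\lim_{r\to\infty}\rho(r)$. By the intermediate value theorem there is a unique $r_0$ with $\rho(r_0)=1$, and strict monotonicity gives the sign pattern $\rho>1$ on $(0,r_0)$ and $\rho<1$ on $(r_0,\infty)$. If instead~\eqref{condition-PE} is weakly reversed, then $\rho(r)<\rho(0^+)\le1$ for every $r>0$.
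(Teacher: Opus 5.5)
Your proposal is correct and follows essentially the same route as the paper: differentiate $\log\rho$, bound the $H_E$ term above via $x/(e^x-1)<1$ and the $L_P$ term below, then conclude from the boundary limits of $\rho$ and strict monotonicity. The only difference is the lower bound. The paper uses the cruder $x/(e^x-1)>e^{-x}$ (from $1-e^{-x}<x$), giving $(\log\rho)'<-(k_H-k_L)+L_P$. Your $x/(e^x-1)\ge 1-x/2$ gives the sharper $-(k_H-k_L)+L_P/2$, at the cost of the slightly longer Step 2.
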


After full revelation, project $E$ yields $H_E$ in state $H$ and zero in state $L$, while project $P$ yields zero in state $H$ and $L_P$ in state $L$. Thus \eqref{condition-PE} says that a risk-neutral decision maker strictly prefers $E$ to $P$. Conditional on purchasing information, agents with $0<r<r_0$ choose $E$, agents with $r>r_0$ choose $P$, and the agent with $r=r_0$ is indifferent. If the inequality in \eqref{condition-PE} is reversed weakly, every agent with $r>0$ chooses $P$ over $E$.

\end{example}

\section{Related Literature}\label{sec:related-literature}

Our analysis is related to four strands of literature: the classical theory of risk and comparative risk aversion, work on risky investment and insurance decisions, the literature on information acquisition and the value of information, and the macroeconomics and finance literature emphasizing tail risk.

\paragraph{Risk, risk aversion, and risky decisions.}
The preference comparison used in this paper evolves from the classical expected-utility theory of risk aversion. \textcite{pratt1964risk} and \textcite{arrow1971riskbearing} develop the local and global notions of risk aversion that underlie comparisons of certainty equivalents and willingness to pay for insurance. \textcite{rothschildstiglitz1970,rothschildstiglitz1971} provide the canonical distributional notion of an increase in risk, while \textcite{diamondstiglitz1974} study how increases in risk interact with increases in risk aversion. These papers are directly relevant to ours because our proofs compare terminal-wealth distributions and ask how those comparisons are ranked as preferences become more risk averse.

Because initial wealth is random in our model, \textcite{KihlstromRomerWilliams1981} provides a useful benchmark: it identifies conditions under which Arrow--Pratt comparisons of risk premia survive independent random initial wealth. Our setting allows asset payoffs to depend on the background-wealth state and compares information prices rather than risk premia.

Related work develops stronger comparative notions of risk aversion and the effects of background risk. See, among others, \textcite{ross1981,jewitt1987,jewitt1989,gollier1996risk} and the synthesis in \textcite{gollier2001economics}.

Recent work provides behavioral foundations for risk aversion based directly on insurance choices. \textcite{MaccheroniMarinacciWangWu2025} show that weak risk aversion is equivalent to propensity to full insurance, while strong risk aversion is equivalent to propensity to several forms of partial insurance and hedging. They obtain parallel comparative results. For example, Yaari's ordering corresponds to comparative propensity to full insurance, whereas Ross's stronger ordering corresponds to comparative propensity to partial insurance. \textcite{CoteWangWu2025} develop this connection further using the notion of risk--insurance parity. They characterize the classes of indemnity functions associated with weak and strong risk aversion. Then, they introduce intermediate attitudes which are characterized by propensity to deductible-only and limit-only contracts.

These characterizations compare an insurance contract with equidistributed alternatives with different dependence on background risk. We study willingness to pay for an experiment that allows asset choice to depend on the observed signal. As a result, the terminal wealth with and without information may have different marginal distributions. On top of that \textcite{MaccheroniMarinacciWangWu2025}'s characterization of partial-insurance propensity uses Ross's strong ordering. In contrast, our insurance result uses a standard concave-transformation ordering with some additional assumptions on the experiment, the asset menu, and the tails of background risk.

A related literature studies optimal insurance when there is background risk. \textcite{ChiWei2020} characterize optimal indemnities under general dependence between insurable and background risks. They show that the resulting contracts need not be piecewise-linear. \textcite{HinckSteinorth2023} consider loss-dependent background risk. They are able to identify conditions involving risk vulnerability, prudence, and higher-order risk attitudes under which background risk raises insurance demand. We have a setting where background risk also affects the value of insurance. But that risk is itself the object about which information is acquired. The agent purchases a signal about background wealth and subsequently selects an asset. This means that the comparative statics are governed by the interaction between information,  asset choice, and the distributional geometry of background wealth.

Other recent work studies insurance menus and insurance demand when expected-utility does not hold. \textcite{GershkovMoldovanuStrackZhang2023} assume dual utility and adverse selection. They derive optimal menus consisting of deductible--premium or coverage-limit--premium pairs. \textcite{JaspersenPeterRagin2023} in turn analyze the effect of probability weighting on coinsurance, deductible choice, and insurance against low-probability losses. They identify conditions under which probability weighting and utility curvature act as substitutes. More broadly, \textcite{RiegerFels2024} show that state-dependent financial needs, rather than just risk aversion, may lead to insurance demand. In our case, we maintain the expected utility assumption. This allows us to isolate how comparative risk aversion affects information acquisition when information can be used either to invest in profitable risky assets, or to target insurance toward adverse states.

Risk aversion also plays a key role in investment and production decisions under uncertainty. For example, \textcite{sandmo1971} studies the behavior of a risk-averse competitive firm under price uncertainty, while portfolio-choice work such as \textcite{samuelson1969lifetime} analyzes how risky investment responds to preferences and investment opportunities. Our distinction between investment and insurance menus is in the same spirit, but our comparative static concerns the value of acquiring information before the asset choice. In addition, our equilibrium motivation is related to the classic information-market analysis of \textcite{GrossmanStiglitz1980}. When acquiring information is costly, risk preferences can affect which agents become informed and therefore, in richer market models, who bears risk and how information is incorporated into prices. The insurance side is similarly connected to the role of information in insurance markets emphasized by \textcite{RothschildStiglitz1976}.

\paragraph{Risk aversion and the value of information.}

A separate literature asks directly how the value or demand for information changes with risk and risk aversion. \textcite{gould1974risk} studies how changes in payoff risk affect information value, while \textcite{hilton1981determinants} shows that attributes of the decision problem and decision maker need not have a uniform directional effect. \textcite{freixas1984risk} obtains decreasing information demand under specific assumptions, and \textcite{willinger1989risk} identifies a local CARA condition under which the sign depends on how information changes the risk of the optimal action. \textcite{alepuz1995note} separates the ex post benefit of reduced uncertainty from the ex ante riskiness of returns to information, while \textcite{eeckhoudt2000risk} illustrates why the two effects need not deliver a uniform sign. \textcite{nadiminti1996risk} derives conditions for either sign and emphasizes that the method of payment for information matters. \textcite{kihlstrom2022risk} surveys decreasing investment-side results in \textcite{CabralesGossnerSerranoEntropyAER2013,cabrales2017normalized} and the CARA portfolio analysis of Losq and Sobti.

Our earlier work, \textcite{CabralesGossnerSerranoEntropyAER2013,cabrales2017normalized}, constructs complete orderings of information based on investors' willingness to pay and obtains an inverse relation between the value of information and risk aversion in an environment with a safe outside option. The current paper explains why one should expect that decreasing relation for investment menus but it need not extend to insurance-like opportunities. More  generally, we connect a collection of apparently conflicting examples using conditions stated in terms of two primitives with a simple economic interpretation. Our focus is on asset protection or expansion of risk, on the one hand, and the tail geometry of background risk on the other.

\paragraph{Tails in macroeconomics and finance.}

Our assumptions about the tails of background risk connect the paper to a large macro-finance literature. In those papers, low-probability, severe outcomes have first-order effects on quantities and asset prices. \textcite{rietz1988} showed that rare market crash events can substantially alter equilibrium risk premia. \textcite{barro2006rare} used twentieth-century macroeconomic disasters to revive and quantify the rare-disaster mechanism. He showed that other puzzles could be solved using the same mechanism. \textcite{gabaix2012rare} showed how time-varying disaster severity can account for a broad set of macro-finance facts. \textcite{gourio2012disaster} focuses on business cycles. He showed that disaster risk variation can move output, employment, investment, and asset prices. There is a complementary literature studying the endogenous origins of aggregate tail events. \textcite{acemoglu2017tail}, for example, show how sectoral heterogeneity and input-output linkages can generate macroeconomic tail risks. This is true even when ordinary fluctuations are close to Gaussian away from the tails.

We use tails in a different but complementary way. We do not introduce disaster states to explain aggregate moments or asset prices. Rather, we impose restrictions on the distribution of background wealth and then show the impact of those restrictions on the comparative statics of information demand. 
Log concavity captures thin-tails so that a fixed downward movement of wealth has a strong  effect on the lower-tail. Downward log convexity, on the other hand, captures a fat-left-tail so that the targeting benefit of insurance can dominate that translation. This means that for us tail behavior is a primitive condition affecting the interaction of information, asset payoffs, and risk aversion.

\paragraph{Contribution relative to the literature.}
The combination of these strands clarifies our paper's contribution. The classical literature on risk clarifies the ordering of preferences and distributions. The literature on investment and insurance is the basis for the economic interpretation of risk-increasing and risk-protecting assets. The literature on information anticipated that the comparative static of information value with respect to risk aversion could have either sign. Finally, the macro-finance literature hints at the reasons why the tails of economic risk can be decisive. Our results connect these ideas. We provide a single general framework where asset structure and tail geometry codetermine the direction of the information-acquisition comparative static.

\section{Conclusion}\label{sec:conclusion}

We study how comparative risk aversion changes willingness to pay for information acquired before an asset choice under background risk. For finite MLR-ordered experiments and decision makers satisfying ZAZI, willingness to pay decreases for investment menus under log-concave background risk and increases for insurance menus under downward-log-convex background risk. The common mechanism is the interaction between the uniform wealth loss from the information price and the signal-contingent reallocation generated by asset choice; our state-dependent single-crossing result provides the aggregation step linking this mechanism to the two comparative statics. The investment and insurance reversals show that the tail restrictions have substantive force without providing a necessity characterization. These results isolate a primitive determinant of information acquisition that can be incorporated into market models, whose equilibrium analysis remains outside the scope of the paper.

Our framework assumes that asset payoffs are deterministic functions of background wealth. A natural extension would allow their conditional distributions to vary with background wealth, thereby accommodating more general dependence between asset returns and background risk. This extension also lies beyond the scope of the paper, and we hope to contribute to this question in future research.


\printbibliography

\newpage
\phantomsection
\label{app:start}
\begin{appendices}
\crefname{appendix}{Appendix}{Appendices}
\Crefname{appendix}{Appendix}{Appendices}
\crefalias{section}{appendix}
\crefalias{subsection}{appendix}

\section{Comparative Statics Under Full Revelation}\label{sec:full revelation}
We now spell out the logic of the comparative statics when the signal is fully revealing. 

Let $W$ be an integrable random variable with distribution $F$, and let $B$ be a finite set of integrable asset payoffs $b:\mathbb R\to\mathbb R$ containing the status quo random variable $\mathbf{0}$ which equals 0 a.s. The decision maker has a strictly increasing utility function $u\in U$. We restrict attention to utilities for which the status quo is optimal without information:
$
\mathbf{0}\in\argmax_{b\in B}\mathbb E[u(W+b(W))].
$
If she pays $\mu\ge0$ to observe $W$ before choosing an asset, monotonicity of $u$ implies that she chooses an asset with maximal payoff at the realized state. Let
$
c(w)=\max_{b\in B}b(w).
$
Her terminal wealth with complete information is therefore
$
T_\mu(W)=W+c(W)-\mu.
$
We define her willingness to pay for complete information by:
$$
M^{FI}(u)
=
\sup\{\mu\ge0:\mathbb E[u(T_\mu(W))]\ge\mathbb E[u(W)]\}.
$$

An asset is an \emph{investment} if $b$ is nondecreasing. An asset is an
\emph{insurance} if $b$ is nonincreasing and $w+b(w)$ is nondecreasing.

\begin{proposition}
\label{prop:complete-information}
Under complete information:
\begin{enumerate}[label=(\roman*)]
\item if every asset in $B$ is an investment, willingness to pay for information decreases with risk aversion;
\item if every asset in $B$ is an insurance, willingness to pay for information increases with risk aversion.
\end{enumerate}
These conclusions hold for any distribution of background wealth.
\end{proposition}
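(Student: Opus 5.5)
We reduce the comparison of willingness to pay to a single-crossing comparison of terminal-wealth distributions, exactly as in \Cref{lemma:ptp_char}, but now with the deterministic strategy $c$. Let $F$ be the CDF of $W$ and $G_\mu$ that of $T_\mu(W)=W+c(W)-\mu$. The key fact (Jewitt's Theorem 1, underlying \Cref{lemma:ptp_char}) is as follows. If $\int u\,\mathrm dG\ge\int u\,\mathrm dF$ and $F-G$ is single crossing (so $G$ is ``riskier at the top''), then every less risk-averse $v$ also weakly prefers $G$. Symmetrically, if $G-F$ is single crossing, every more risk-averse agent also weakly prefers $G$. Since $\mu\mapsto\mathbb E[u(T_\mu)]$ is decreasing, this transfers to $M^{FI}$: in case (i), if $u$ is more risk averse than $v$ and $\mu<M^{FI}(u)$, then $u$ weakly prefers $G_\mu$, hence so does $v$, so $\mu\le M^{FI}(v)$. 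Case (ii) is symmetric. Thus it suffices to show that $F-G_\mu$ is single crossing for every $\mu\ge0$ in case (i), and $G_\mu-F$ in case (ii).

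\textbf{Investment case.} Since $\mathbf 0\in B$ and each $b$ is nondecreasing, $c=\max_b b$ is nondecreasing and $c\ge0$. Moreover, ZAZI together with strict monotonicity of $u$ forces $\min c\le 0$ in a suitable sense; we will not need this. The map $t(w)=w+c(w)-\mu$ is nondecreasing, so $G_\mu(m)=F(\tau(m))$ with $\tau(m)=\sup\{w:t(w)\le m\}$. Hence $F(m)-G_\mu(m)=F(m)-F(\tau(m))$ has the sign of $m-\tau(m)$ (up to flat regions of $F$), equivalently the sign of $t(m)-m=c(m)-\mu$ evaluated near the cutoff. Because $c(m)-\mu$ is nondecreasing, it changes sign at most once from negative to positive. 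Thus $F-G_\mu$ is negative below and positive above some point, which is single crossing. We need to handle ties: on regions where $c(w)=\mu$ the difference is zero, and where $F$ is flat it is zero. Both are compatible with the definition ($\psi>0$ at $m$ implies $\psi\ge0$ at larger $m'$).

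\textbf{Insurance case.} Each $b$ is nonincreasing, so $c$ is nonincreasing. Since each $w+b(w)$ is nondecreasing and $c$ is a maximum of such $b$, $w+c(w)=\max_b(w+b(w))$ is nondecreasing, so $t$ is again nondecreasing. The same computation gives that $G_\mu(m)-F(m)$ has the sign of $\mu-c(m)$ near the cutoff, which is nondecreasing in $m$. Hence $G_\mu-F$ is single crossing.

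The main obstacle is making the ``sign near the cutoff'' step rigorous without assuming continuity or strict monotonicity of $t$. We will handle it directly. For $m<m'$ with $F(m)>G_\mu(m)$, there exists $w\le m$ with $t(w)>m$, i.e.\ $c(w)>\mu+m-w\ge\mu$. Then for any $w'\le m'$ with $w'\ge w$ we get $t(w')\ge w'+c(w)-\mu>w'$. This shows $\{t\le m'\}\subseteq(-\infty,m']$, so $G_\mu(m')\le F(m')$; the points $w'<w$ are covered because $t(w')\le m'$ forces $w'\le m'$ anyway. The insurance case is the mirror argument with the inequalities reversed. The distribution-freeness claim follows since no property of $F$ was used.
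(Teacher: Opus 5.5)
Your proposal is correct and follows the paper's own route: Jewitt's single-crossing theorem reduces both parts to showing that $F-G_\mu$ (investment) or $G_\mu-F$ (insurance) is single crossing, which follows from monotonicity of $c$ and, for insurance, of $w+c(w)$, and your direct argument makes rigorous the crossing step that the paper only asserts. One small fix: in the investment case the points $w'<w$ lie in $(-\infty,m']$ simply because $w'<w\le m<m'$ (not because $t(w')\le m'$ forces it), whereas in the insurance ``mirror'' those points need $t(w')\le t(w)\le m\le m'$, which is exactly where monotonicity of $w+c(w)$ is used.
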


\begin{proof}
Suppose first that every asset is an investment. Then $c$ is nondecreasing, so $T_\mu$ is increasing and
$
T_\mu(w)-w=c(w)-\mu
$
is nondecreasing. Consequently, the distributions of $T_\mu(W)$ and $W$ have the following \textit{single-crossing property}: once the CDF of $W$ lies above that of $T_\mu(W)$, it remains above it at all higher wealth levels. By the standard single-crossing characterization of comparative risk aversion of \textcite{jewitt1987}, if a more risk-averse decision maker prefers $T_\mu(W)$ to $W$, then every less risk-averse decision maker does as well. Hence $M^{FI}(u)$ decreases with risk aversion.

Suppose instead that every asset is an insurance. Then $c$ is nonincreasing, while
$
w+c(w)=\max_{b\in B}\{w+b(w)\}
$
is nondecreasing. Thus $T_\mu$ is increasing and $T_\mu(w)-w$ is nonincreasing. The preceding single-crossing comparison is reversed: once the CDF of $T_\mu(W)$ lies above that of $W$, it remains above it at all higher wealth levels. Hence, if a less risk-averse decision maker prefers $T_\mu(W)$ to $W$, every more risk-averse decision maker does as well. Therefore $M^{FI}(u)$ increases with risk aversion.
\end{proof}

Complete information eliminates the difficulty created by background risk in the general model.  For investments, the gain from adapting the asset choice is increasing in background wealth; for insurance, it is decreasing; this is enough to ensure single crossing. With imperfect information, asset choice instead depends on a noisy signal, and the resulting conditional comparisons must be aggregated across background-wealth states. In that case, further distributional assumptions are needed to ensure aggregation preserves comparative statics.


\section{Proof of Proposition~\ref{prop:mustar}}
\label{app:prop:mustar}

We begin by establishing existence of optimal strategies. 
\begin{lemma}
    \label{lemma:opt}
    Let $u \in U$. The integral $\int u(w + b(w)) \,\mathrm{d}F(w)$ exists and lies in $\mathbb{R} \cup \{-\infty\}$ for all $b \in B$, and the maximization problem 
    \begin{equation*}
        \max_{b \in B} \int u(w + b(w)) \,\mathrm{d}F(w)
    \end{equation*}
    admits a solution. Given $x \in X$, the integral $\int u(w + b(w)-\mu) f_x(w) \,\mathrm{d}w$ exists and lies in $\mathbb{R} \cup \{-\infty\}$ for all $b \in B$, and the maximization problem 
    \begin{equation*}
        \max_{b \in B} \int u(w + b(w)-\mu) f_x(w) \,\mathrm{d}w
    \end{equation*}
    admits a solution.
\end{lemma}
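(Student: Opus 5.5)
The proof has two parts. First I show that each integral is well defined with values in $\mathbb{R}\cup\{-\infty\}$. Then I get existence of a maximizer from upper semicontinuity of the objective on the $L^1(F)$-compact set $B$.

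\textbf{Well-definedness.} The growth condition $\limsup_{m\to\infty}u(m)/m<\infty$ together with monotonicity of $u$ gives constants $a,c\ge 0$ with $u(m)\le a+c|m|$ for all $m$. For $m\le 0$, monotonicity gives $u(m)\le u(0)$. For $m$ large, the limsup gives a linear bound. In between, $u$ is bounded above by monotonicity.

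Hence $u(w+b(w))^+\le a+c|w|+c|b(w)|$. The right-hand side is $F$-integrable, because $\int|w|\,\mathrm dF<\infty$ and $b$ is $F$-integrable. So the positive part of $u(w+b(w))$ has finite integral, and the integral exists in $\mathbb{R}\cup\{-\infty\}$.

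For the posterior problem, $f_x=h_w(x)f/h(x)\le f/h(x)$ with $h(x)>0$. The same domination therefore holds against $f_x$, with $w$ replaced by $w-\mu$, which is harmless.

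\textbf{Upper semicontinuity.} Let $V(b)=\int u(w+b(w))\,\mathrm dF(w)$, and let $b_n\to b$ in $L^1(F)$. I will show $\limsup_n V(b_n)\le V(b)$.

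Pass to a subsequence realizing the limsup, and then to a further subsequence converging $F$-a.e. and dominated. The standard partial converse of dominated convergence gives $g\in L^1(F)$ with $|b_{n_k}|\le g$ a.e.

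Next I need continuity of $u$. Absolute continuity on bounded intervals gives this, so $u(w+b_{n_k}(w))\to u(w+b(w))$ a.e.

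Now apply reverse Fatou. The positive parts are bounded by $a+c|w|+c\,g(w)$, which is integrable. Reverse Fatou (Fatou applied to $a+c|w|+cg-u(\cdot)$) then gives $\limsup V(b_{n_k})\le V(b)$. This remains valid when values equal $-\infty$.

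The posterior objective is handled identically, with density $f_x$ and the shift $-\mu$. Convergence in $L^1(F)$ implies convergence in $L^1(F_x)$, since $f_x\le f/h(x)$.

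\textbf{Existence.} An upper semicontinuous function on a compact metric space attains its maximum in $\mathbb{R}\cup\{-\infty\}$. Take a maximizing sequence, extract a convergent subsequence using compactness of $B$, and apply upper semicontinuity. If $\sup V=-\infty$, every $b$ is a maximizer. So both problems admit solutions.

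\textbf{Main obstacle.} The main point of care is the upper bound $u(m)\le a+c|m|$ and the domination that makes reverse Fatou applicable. The key observation is that the growth condition only controls $u$ from above. This is precisely why values of $-\infty$ must be allowed, and why the argument yields upper semicontinuity rather than continuity.
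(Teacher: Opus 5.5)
Your proposal is correct and follows essentially the same route as the paper: monotonicity plus the growth condition give a linear upper bound on $u$, which yields well-definedness; upper semicontinuity in $L^1(F)$ comes from an a.e.-convergent subsequence and Fatou; compactness of $B$ then gives a maximizer, and the posterior problems are handled through $f_x\le f/h(x)$. The only difference is technical: you take an integrable dominating function $g$ from the partial converse of dominated convergence and apply Fatou to $a+c|w|+cg-u$, whereas the paper uses the Lipschitz majorant $\bar u(m)=\max\{u(m'),\delta m\}$, whose composition with $w+b_n(w)$ converges in $L^1$, and applies Fatou to $\bar u(\cdot)-u(\cdot-\mu)\ge 0$.
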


\begin{proof}[Proof of \Cref{lemma:opt}]
    For each $x \in X$, let $F_x$ be the CDF of $f_x$. Fix $F' \in \{F\} \cup\{F_x : x \in X\}$ and $\mu \ge 0$.
    It suffices to show that, given any $b \in B$, the integral $\int u (w + b(w)-\mu) \,\mathrm{d}F'(w)$ exists and lies in $\mathbb{R} \cup \{-\infty\}$, and that the map $b \mapsto \int u(w+b(w)-\mu)\,\mathrm{d}F'(w)$ is upper-semicontinuous with respect to the $L^1(F)$ norm, as this guarantees that the maxmisation problem \begin{equation*}
        \max_{b \in B} \int u(w+b(w)-\mu)\,\mathrm{d}F'(w)
    \end{equation*} 
    admits a solution, since $B$ is compact.

    Any $F$-integrable map is also $F'$-integrable and convergence in $L^1(F)$ implies convergence in $L^1(F')$ since, given any $x \in X$,  $f_x(w)\,\mathrm{d}w = h_w(x) \,\mathrm{d}F(w)/h(x)$ and $w \mapsto h_w(x)$ is bounded.

    Choose $\delta > \limsup_{m \to \infty} u(m)/m$ and $m' > 0$ such that $u(m)/m \le \delta$ for all $m \ge m'$.
    Since $u$ is increasing, $u$ lies pointwise weakly below the map $\bar u(m) = \max\{u(m'),\delta m\}$. Given $b \in B$, the map $w \mapsto \bar u(w + b(w))$ is $F$-integrable since $b$ and the identity function are; hence it is $F'$-integrable. Then, the integral $\int u(w + b(w)-\mu) \,\mathrm{d}F'(w)$ exists and lies in $\mathbb{R} \cup \{-\infty\}$, since $u(m-\mu) \le \bar u(m)$ for all $m \in \mathbb{R}$.

    To establish upper-semicontinuity, suppose that $b_n \to b$ in $L^1(F)$ and that $\int u(w+b_n(w)-\mu)\,\mathrm{d}F'(w)$ converges.
    Since $\bar u$ is Lipschitz, $\bar u(w+b_n(w))\to\bar u(w+b(w))$ in $L^1(F')$, and their integrals converge.
    Moreover, after passing to a subsequence if necessary, we may assume without loss that $b_n \to b$ $F'$-a.e. Let $v : \mathbb{R} \to \mathbb{R}$ be given by $v(m) = \bar u(m) - u(m-\mu)$. $v$ is continuous, so that $v(w+b_n(w)) \to v(w + b(w))$ $F'$-a.e. 
    Then 
    \begin{align*}
        &\lim_{n \to \infty} \int u(w+b_n(w)-\mu) \,\mathrm{d}F'(w) 
        \\&\quad \le \lim_{n \to \infty} \int \bar u(w + b_n(w)) \,\mathrm{d}F'(w) - \liminf_{n\to\infty} \int v(w + b_n(w)) \,\mathrm{d}F'(w) 
        \\&\quad\le \int \bar u(w + b(w)) \,\mathrm{d}F'(w) - \int \lim_{n \to \infty} v(w + b_n(w)) \,\mathrm{d}F'(w) 
        \\&\quad = \int u(w+b(w)-\mu)\,\mathrm{d}F'(w)
    \end{align*}
    where the inequality follows from Fatou's lemma, since $v$ is positive.
\end{proof}

\begin{proof}[Proof of \Cref{prop:mustar}]
For all $u \in U$ and $\mu \ge 0$,  
\begin{equation*}
    V(u,\mu) = \max_{\beta \in B^X }\int u \,\mathrm{d}G^\mu_\beta
\end{equation*}
is well defined by \Cref{lemma:opt}.
Fix $u \in U$.
Since $u$ is increasing and $F$-integrable, $\int u \,\mathrm{d}F > \lim_{m \to -\infty} u(m)$.
Then, it suffices to show that $V(u,\mu) < V(u,\mu')$ for all $0 \le \mu' < \mu$ such that $V(u,\mu)> -\infty$, that $V(u,0) \ge \int u \,\mathrm{d}F$ and that $\lim_{\mu \to \infty} V(u,\mu) \le \lim_{m \to -\infty} u(m)$.
Indeed, this guarantees that $M(u) = \sup\{ \mu \in \mathbb{R}: V(u,\mu) \ge \int u \,\mathrm{d}F\}$ is well defined and lies in $\mathbb{R}$, that $V(u,\mu) > \int u \,\mathrm{d}F$ for all $0\le \mu < M(u)$, and $V(u,\mu) < \int u \,\mathrm{d}F$ for all $\mu > M(u)$.

Fix $0 \le \mu' < \mu$ such that $V(u,\mu) > -\infty$, and let $\beta$ achieve the maximum at $\mu$.
Since $X$ is finite, $V(u,\mu) \in \mathbb{R}$ by \Cref{lemma:opt}.
Then, 
\begin{equation*}
    V(u,\mu) = \int u \,\mathrm{d}G^\mu_\beta < \int u \,\mathrm{d}G^{\mu'}_\beta \le V(u,\mu'),
\end{equation*}
where the strict inequality holds since $u$ is increasing.

Since $\mathbf{0} \in B$, $V(u,0) \ge \int u \,\mathrm{d}F$. To prove that $\lim_{\mu\to\infty}V(u,\mu)=\lim_{m\to-\infty}u(m)$, fix a sequence $\mu_n\to\infty$ and choose $\beta^n$ optimal at $\mu_n$. Set $E_x^n=\{w:w+\beta_x^n(w)>\mu_n/2\}$ for all $x \in X$ and $n \in \mathbb{N}$, and $E^n=\bigcup_{x\in X}E_x^n$. Let $c = \sup\left\{\int |w+\beta_x^n(w)|\,\mathrm dF(w):n\in\mathbb N,\ x\in X\right\}$ and $\nu_F$ be the probability measure induced with $F$.
Since $\int |w| \,\mathrm{d}F < \infty$ and $B$ is compact in $L^1(F)$, the family of maps $(w \mapsto w + b(w) : b \in B)$ is uniformly integrable; that is, 
$c < \infty$ and for any sequence of measurable sets $(D^n)_{n \in \mathbb{N}}$ such that $\nu_F(D^n) \to 0$, $\sup_{b \in B} \int_{D^n} |w + b(w)| \,\mathrm{d}F(w) \to 0$.
Since $c < \infty$, Markov's inequality yields $\nu_F(E^n)\le2|X|c/\mu_n\to0$.
Choose $\delta>0$ and $C\ge0$ such that $u(m)\le C+\delta \max\{0,m\}$ for every $m\in\mathbb R$. Then

\begin{equation*}
    V(u,\mu_n)\le u(-\mu_n/2)(1-\nu_F(E^n))+C\nu_F(E^n)+\delta\sum_{x\in X}\int_{E^n}\max\{0,w+\beta_x^n(w)\}\,\mathrm dF(w).
\end{equation*}
The second and third terms vanish as $n \to \infty$ (the latter by uniform integrability), while the first tends to $\lim_{m\to-\infty}u(m)$. Since $\mu_n$ was arbitrary, the desired limit follows.
\end{proof}

\section{Proofs for \texorpdfstring{\Cref{sec:proof-roadmap,sec:single-crossing-aggregation}}{Sections 4.3 and 4.4}}\label{app:proof-roadmap}

\begin{proof}[Proof of \Cref{lemma:ptp_char}]
By Theorem 1 of \textcite{jewitt1989}, for any $u_1,u_2 \in U$ and any strategy $\beta \in B^X$, 
\begin{equation}
    \label{eq:ptp_decr}
        \int u_1 \,\mathrm{d}G^\mu_\beta > \int u_1 \,\mathrm{d}F \quad \text{implies} \quad \int u_2 \,\mathrm{d}G^\mu_\beta \ge \int u_2 \,\mathrm{d}F
\end{equation}
if either of the following hold: (i) $u_1$ is more risk-averse than $u_2$ and $F-G^\mu_\beta$ is single crossing, or (ii) $u_2$ is more risk-averse than $u_1$ and $G^\mu_\beta-F$ is single crossing.
 
Now, to prove the first part, let $u_1,u_2\in U$ satisfy \eqref{eq:ZAZI}, with $u_1$ more risk averse than $u_2$. If $M(u_1)=0$, then $M(u_1)\le M(u_2)$. Otherwise, fix $0\le\mu<M(u_1)$ and let $\beta$ be optimal for $u_1$ at price $\mu$ with $F-G^\mu_\beta$ single crossing. Since $\mu<M(u_1)$, $u_1$ strictly prefers purchasing information by \Cref{prop:mustar}, so $\int u_1\,\mathrm dG^\mu_\beta>\int u_1\,\mathrm dF$. By \eqref{eq:ptp_decr}, $\int u_2\,\mathrm dG^\mu_\beta\ge\int u_2\,\mathrm dF$, and therefore $u_2$ weakly prefers purchasing information at price $\mu$. Thus $\mu \le M(u_2)$ by \Cref{prop:mustar}. Since this holds for every $\mu<M(u_1)$, $M(u_1)\le M(u_2)$.

The second part follows by the same argument, taking $u_2$ to be more (instead of less) risk averse than $u_1$, and choosing $\beta$ such that $G^\mu_\beta-F$ is  single crossing (instead of $F-G^\mu_\beta$).
\end{proof}

\begin{proof}[Proof of \Cref{lemma:mon}]
Fix $u\in U$ and $\mu \ge 0$, and set
$$
X=\{x_1<\cdots<x_K\}
$$
for some $K \in \mathbb{N}$.
By \Cref{lemma:opt}, $u$ admits an optimal strategy $\beta$ and $\int u \,\mathrm{d}G^\mu_\beta \in \mathbb{R} \cup \{-\infty\}$.
The result is trivial if either $K = 1$ or $\int u \,\mathrm{d}G^\mu_\beta = -\infty$, hence suppose that $K \ge 2$ and $\int u \,\mathrm{d}G^\mu_\beta$ is finite.

Choose $y_0<\cdots<y_K$ in $\mathbb{R}$ and set $Y = (y_0,y_K]$. For each $w\in\mathbb R$, let $P(\cdot,w)$ be the CDF with support $Y$ and density
$$
p_w(y)
=
\frac{h_w(x_k)}{y_k-y_{k-1}}
\qquad
\text{for }y\in(y_{k-1},y_k].
$$
Since the experiment $\alpha$ is TP2, the map $(w,y)\mapsto p_w(y)$ is TP2.

Let
$$
\{b_1,\ldots,b_n\}
=
\{\beta_x:x\in X\},
\qquad
b_1\preceq\cdots\preceq b_n,
$$
and define $L_i : \mathbb{R} \to \mathbb{R}$ by 
$$
L_i(w)=-u(w+b_i(w)-\mu)
$$
for each $i \in \{1,\dots,n\}$.
Since $u$ is increasing and $b_{i+1}-b_i$ is single crossing,
$L_i-L_{i+1}$ is single crossing for each $i<n$.

View $\beta$ as the decision procedure $Y \to \{1,\dots,n\}$ that chooses, for each $k \in \{1,\dots,K\}$ and $y \in (y_{k-1},y_k]$, $i \in \{1,\dots,n\}$ such that $\beta_{x_k} = b_i$.
Since $P(\cdot|w)$ admits a density for all $w \in \mathbb{R}$, Theorem~1 of
\textcite{karlin1956} delivers a (deterministic) nondecreasing procedure $\phi : Y \to \{1,\dots,n\}$ whose risk
is weakly lower at every $w \in \mathbb{R}$, i.e., 
\begin{equation*}
    \sum_{k = 1}^K \frac{h_w(x_k)}{y_k-y_{k-1}} \int_{y_{k-1}}^{y_k} u(w+b_{\phi(y)}(w)-\mu) \,\mathrm{d}y \ge \sum_{k = 1}^K h_w(x_k)u(w+\beta_{x_k}(w)-\mu).
\end{equation*}
Integrating with respect
to $F$ gives
\begin{align}
\nonumber
&\sum_{k=1}^K
\frac{h(x_k)}{y_k-y_{k-1}}
\int_{y_{k-1}}^{y_k}
\int u(w+b_{\phi(y)}(w)-\mu)\,\mathrm dF_{x_k}(w)\,\mathrm dy
\\
\label{eq:mon_aggr}
&\qquad\ge
\sum_{k=1}^K h(x_k)
\int u(w+\beta_{x_k}(w)-\mu)\,\mathrm dF_{x_k}(w).
\end{align}
For every $k\in \{1,\dots,K\}$ and $y\in(y_{k-1},y_k]$,
\begin{equation}
\label{eq:mon_y}
\int u(w+b_{\phi(y)}(w)-\mu)\,\mathrm dF_{x_k}(w)
\le
\int u(w+\beta_{x_k}(w)-\mu)\,\mathrm dF_{x_k}(w),
\end{equation}
because $\beta$ is optimal for $u$, $h(x_k) > 0$ and $\int u \,\mathrm{d}G^\mu_\beta$ is finite. Then \eqref{eq:mon_aggr} implies that, for each $k \in \{1,\dots,K\}$, \eqref{eq:mon_y}
holds with equality for a.e.\ $y \in (y_{k-1},y_k]$, since $h(x_k) > 0$. 
Consequently, we may choose $z_k \in (y_{k-1},y_k]$ for each $k \in \{1,\dots,K\}$ such that the strategy $\beta' \in B^X$ given by $\beta'_{x_k} = b_{\phi(z_k)}$ satisfies $\int u \,\mathrm{d}G^\mu_{\beta'} = \int u \,\mathrm{d}G^\mu_\beta$.
Thus, $\beta'$ is optimal for $u$, and it is monotone since $\phi$ is nondecreasing. 
\end{proof}
The proof of \Cref{lemma:sc_random} relies on \Cref{lemma:sc_incr} from \Cref{sec:single-crossing-aggregation}. We begin by stating the classical result from \textcite{karlin1956}, then use it to prove \Cref{lemma:sc_incr}, and finally we prove \Cref{lemma:sc_random}.

\begin{lemma}[Karlin and Rubin, 1956]
\label{lemma:sc}
Let $\phi:\mathbb R\to\mathbb R$ be single crossing and let $\kappa:\mathbb R^2\to\mathbb R_+$ be TP2. Then
$$
m\longmapsto
\int\phi(\ell)\kappa(\ell,m)\,\mathrm{d}\ell
$$
is single crossing, provided $\ell\mapsto\phi(\ell)\kappa(\ell,m)$ is integrable for every $m\in\mathbb R$.
\end{lemma}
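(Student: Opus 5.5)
The plan is a direct two-sided integration argument that exploits TP2 through a single cross-multiplied inequality, with no appeal to earlier results of the paper. Write $\Phi(m)=\int\phi(\ell)\kappa(\ell,m)\,\mathrm d\ell$ and fix $m\le m'$ with $\Phi(m)>0$; we must show $\Phi(m')\ge0$.

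\textbf{Step 1: locate the crossing point of $\phi$.} Set $\ell_0=\inf\{\ell:\phi(\ell)>0\}$, with the convention $\ell_0=+\infty$ if $\phi\le0$ everywhere. In that case $\Phi(m)\le0$, a contradiction, so $\ell_0<+\infty$.

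By single crossing, $\phi\le0$ on $(-\infty,\ell_0)$ and $\phi\ge0$ on $(\ell_0,\infty)$. The single point $\ell_0$ has Lebesgue measure zero and can be ignored.

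Define, for $t\in\{m,m'\}$,
$$
P(t)=\int_{\ell>\ell_0}\phi(\ell)\kappa(\ell,t)\,\mathrm d\ell\ge0,
\qquad
N(t)=\int_{\ell<\ell_0}\bigl(-\phi(\ell)\bigr)\kappa(\ell,t)\,\mathrm d\ell\ge0 .
$$
Both are finite by the integrability hypothesis, and $\Phi(t)=P(t)-N(t)$. The assumption $\Phi(m)>0$ gives $P(m)>N(m)\ge0$, so in particular $P(m)>0$.

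\textbf{Step 2: the key inequality $N(m)P(m')\ge N(m')P(m)$.} For $\ell<\ell_0<\ell'$, TP2 applied with $\ell\le\ell'$ and $m\le m'$ gives
$$
\kappa(\ell,m)\kappa(\ell',m')\ge\kappa(\ell,m')\kappa(\ell',m).
$$
Multiply this by the nonnegative weight $(-\phi(\ell))\,\phi(\ell')$. Then integrate over $\{\ell<\ell_0\}\times\{\ell'>\ell_0\}$.

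Tonelli's theorem applies because all integrands are nonnegative, and the resulting products are finite since $P$ and $N$ are finite. The integrated inequality is exactly $N(m)P(m')\ge N(m')P(m)$.

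\textbf{Step 3: case split on $N(m)$.} If $N(m)=0$, the inequality reads $0\ge N(m')P(m)$. Since $P(m)>0$, this forces $N(m')=0$, hence $\Phi(m')=P(m')\ge0$.

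If $N(m)>0$, divide by $N(m)$:
$$
P(m')\ge N(m')\,\frac{P(m)}{N(m)}\ge N(m'),
$$
where the last step uses $P(m)/N(m)>1$ and $N(m')\ge0$. Hence $\Phi(m')\ge0$ in both cases, which proves single crossing.

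\textbf{Main obstacle.} The only delicate point is avoiding division by quantities that may vanish. The kernel can be zero on parts of the support, so one cannot argue via a ratio such as $\kappa(\ell,m')/\kappa(\ell,m)$. The cross-multiplied form of TP2, together with the separate treatment of the cases $N(m)=0$ and $N(m)>0$, handles this cleanly. The remaining measure-theoretic care is to justify the double integral by Tonelli and to note that ignoring the single point $\ell_0$ is harmless.
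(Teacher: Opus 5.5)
Your proof is correct. The paper gives no argument of its own for this lemma: it is quoted from \textcite{karlin1956} and used as a black box inside the proof of Lemma~\ref{lemma:sc_incr}, so your proposal supplies a self-contained proof rather than reproducing one.

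The standard route pivots at the crossing point. Set $c=\kappa(\ell_0,m')/\kappa(\ell_0,m)$. TP2 then gives $\kappa(\ell,m')\le c\,\kappa(\ell,m)$ for $\ell<\ell_0$ and $\kappa(\ell,m')\ge c\,\kappa(\ell,m)$ for $\ell>\ell_0$. Hence $\phi(\ell)\kappa(\ell,m')\ge c\,\phi(\ell)\kappa(\ell,m)$ pointwise, and so $\Phi(m')\ge c\,\Phi(m)$. That argument is shorter and even quantitative. However, it requires $\kappa(\ell_0,m)>0$, and otherwise needs extra work when the kernel vanishes around the crossing point.

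Your inequality $N(m)P(m')\ge N(m')P(m)$ comes from integrating TP2 against $(-\phi(\ell))\phi(\ell')$ over $\{\ell<\ell_0<\ell'\}$. It never divides by the kernel. As a result, zeros of $\kappa$ and the degenerate case $\ell_0=-\infty$ (where $N\equiv 0$) need no separate treatment.

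For the Tonelli step, the one point worth making explicit is product measurability. Each integrand is a product of a function of $\ell$ alone and a function of $\ell'$ alone. Each factor is measurable because $\phi\,\kappa(\cdot,t)$ is measurable by the integrability hypothesis, and indicators of half-lines are Borel.

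Finally, your argument uses only integrability of $\phi\,\kappa(\cdot,m)$ and never normalizes $\kappa$. This is exactly the generality the paper needs. In the proof of Lemma~\ref{lemma:sc_random}\ref{item:sc_random:logconvex}, the extended kernels $\kappa_n$ need not be integrable in $\ell$; only their products with $\phi'$ are, thanks to the support property. A version of the lemma stated for normalized densities would therefore not apply directly there.
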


\Cref{lemma:sc_incr} yields \Cref{lemma:sc_random}\ref{item:sc_random:logconcave} by taking the aggregation kernel from a log-concave density of background wealth. This provides the aggregation step in the proof of \Cref{theorem:investment}. To obtain \Cref{lemma:sc_random}\ref{item:sc_random:logconvex}, which involves downward-log-convex background risk, \Cref{lemma:sc_incr} is applied only to a bounded-above interval of wealth levels, after extending the density to obtain a TP2 kernel; the remaining region is handled by a direct distributional comparison. Thus \Cref{theorem:investment,theorem:insurance} share the same single-crossing principle but require different aggregation arguments.

\begin{proof}[Proof of \Cref{lemma:sc_incr}]
Define
$$
\Phi(\ell,m):=\int_{\mathbb R}\phi(k,\ell)\kappa(k,m)\,\mathrm dk,\qquad \ell,m\in I.
$$
Fix $\ell,m\in I$ with $\ell\leq m$ and $\Phi(\ell,\ell)>0$. By \Cref{lemma:sc}, the map $m'\mapsto\Phi(\ell,m')$ is single crossing, so $\Phi(\ell,m)\geq0$. Moreover, $\ell>\underline m$, since otherwise $\phi(k,\ell)\leq0$ for every $k$, contradicting $\Phi(\ell,\ell)>0$. If $m<\bar m$, then $\underline m<\ell\leq m<\bar m$, and monotonicity in the second argument of $\phi$ gives
$$
\Phi(m,m)\geq\Phi(\ell,m)\geq0.
$$
If $m\geq\bar m$, then $\phi(k,m)\geq0$ for every $k$, so again $\Phi(m,m)\geq0$. Thus $\Phi(\ell,\ell)>0$ and $\ell\leq m$ imply $\Phi(m,m)\geq0$, proving that $m\mapsto\Phi(m,m)$ is single crossing.
\end{proof}
\begin{proof}[Proof of \Cref{lemma:sc_random}]
We establish each of the two parts by relying on \Cref{lemma:sc_incr}.
For \ref{item:sc_random:logconcave}, let $f_W$ denote
the density of $W$, and define
$$
\phi(\ell,m)
=
F_{Y^1|W}(\ell|m-\ell)-F_{Y^2|W}(\ell|m-\ell)
$$

and
$$
\kappa(\ell,m)=f_W(m-\ell).
$$
The integrability condition in \Cref{lemma:sc_incr} holds 
because $|\phi|\le1$.
Moreover, for $i\in\{1,2\}$,
\begin{align*}
F_{Y^i+W}(m)
&=
\int F_{Y^i|W}(m-w|w)f_W(w)\,\mathrm dw\\
&=
\int F_{Y^i|W}(\ell|m-\ell)f_W(m-\ell)\,d\ell,
\end{align*}
where the second equality follows from the change of variable
$\ell=m-w$. Hence
$$
F_{Y^1+W}(m)-F_{Y^2+W}(m)
=
\int\phi(\ell,m)\kappa(\ell,m)\,d\ell.
$$

For every $m$, the map $\phi(\cdot,m)$ is single crossing by hypothesis. For every $\ell$, the map $\phi(\ell,\cdot)$ is
nonpositive on $\{w \in \mathbb{R}: w \le \underline w\}$, nondecreasing on
$(\underline w,\bar w)$, and nonnegative on $\{w \in \mathbb{R} : w \ge \bar w\}$ by the common
hypothesis of the lemma. Since $f_W$ is log concave, $\kappa$ is
TP2. Therefore, \Cref{lemma:sc_incr} implies that
$$
F_{Y^1+W}-F_{Y^2+W}
$$
is single crossing.

It remains to prove \ref{item:sc_random:logconvex}.
Clearly, it suffices to show that $F_{Y^1+W}-F_{Y^2+W}$ is nonnegative on $\{w \in \mathbb{R} : w \ge \bar w\}$ and single crossing on $I = (-\infty,\bar w)$.
To this end, let once again $f_W$ denote the density of $W$, and define $\phi' : \mathbb{R}^2 \to \mathbb{R}$ by 
$$
\phi'(\ell,m)
=
F_{Y^1|W}(-\ell|\ell+m)-F_{Y^2|W}(-\ell|\ell+m).
$$
For $i\in\{1,2\}$,
\begin{align*}
F_{Y^i+W}(m)
&=
\int F_{Y^i|W}(m-w|w)f_W(w)\,\mathrm dw\\
&=
\int F_{Y^i|W}(-\ell|\ell+m)f_W(\ell+m)\,d\ell,
\end{align*}
where the second equality follows from the change of variable
$\ell=w-m$, so that 
$$
F_{Y^1+W}(m)-F_{Y^2+W}(m)
=
\int\phi'(\ell,m)f_W(\ell+m)\,d\ell.
$$
For every $m\ge\bar w$, the common hypothesis of the lemma implies that
$\phi'(\ell,m)\ge0$ for every $\ell$. Hence
$F_{Y^1+W}(m)-F_{Y^2+W}(m)\ge0$ for every $m\ge\bar w$.

It remains to show that $D:=F_{Y^1+W}-F_{Y^2+W}$ is single crossing on
$I:=(-\infty,\bar w)$. Let $q=\log f_W$. Choose $r_n\uparrow\omega$ and a
finite subgradient $a_n\in\partial q(r_n)$, and define the convex function
$$
q_n(t)=
\begin{cases}
q(t),&t\le r_n,\\
q(r_n)+a_n(t-r_n),&t>r_n.
\end{cases}
$$
Set $c_n:=\int_{-\infty}^{\omega}e^{q_n(t)}\,\mathrm dt$ and
$f_W^n(t):=c_n^{-1}e^{q_n(t)}\mathbf 1_{\{t\le\omega\}}$. Since the affine
continuation is a supporting line to $q$, one has $q_n\le q$ on
$(-\infty,\omega)$. Moreover, $q_n(t)=q(t)$ eventually for every fixed
$t<\omega$. Dominated convergence therefore gives $c_n\to1$ and
$\|f_W^n-f_W\|_1\to0$.

Extend $\log f_W^n$ beyond $\omega$ by setting
$\widetilde q_n(t):=q_n(t)-\log c_n$ for every $t\in\mathbb R$, and let
$\kappa_n(\ell,m):=e^{\widetilde q_n(m+\ell)}$. Convexity of
$\widetilde q_n$ implies that $\kappa_n$ is TP2. The support
condition gives $\phi'(\ell,m)=0$ whenever $m\in I$ and
$m+\ell>\omega$. Consequently, for every $m\in I$,
$$
\Phi_n(m)
:=
\int\phi'(\ell,m)\kappa_n(\ell,m)\,d\ell
=
\int\phi'(\ell,m)f_W^n(m+\ell)\,d\ell.
$$
The assumed single-crossing property of $y\mapsto-\phi'(-y,m)$ implies that
$\ell\mapsto\phi'(\ell,m)$ is single crossing, while the common hypotheses
give the required monotonicity in $m$. The integrability condition follows
from $|\phi'|\le1$ and the support property. Hence
\Cref{lemma:sc_incr} implies that every $\Phi_n$ is single crossing on $I$.

Finally, $|\phi'|\le1$ gives
$\sup_{m\in I}|\Phi_n(m)-D(m)|\le\|f_W^n-f_W\|_1\to0$. If $m\le m'$ and
$D(m)>0$, then $\Phi_n(m)>0$ for all sufficiently large $n$, so
$\Phi_n(m')\ge0$ and therefore $D(m')\ge0$. Thus $D$ is single crossing on
$I$. Since $D$ is nonnegative on $[\bar w,\infty)$, it is single crossing
on $\mathbb R$.
\end{proof}
\section{Worked Investment and Insurance Illustrations}\label{sec:worked examples}

The following pair of examples makes the preceding distributional argument explicit. In both cases the signal shows whether background wealth lies above or below the common payoff-crossing point of the asset menu. Because of this the optimal signal-contingent strategy is transparent. In turn, that allows the induced wealth distribution $G^\mu_\beta$ to be computed in closed form.
We recall the one-parameter family
\begin{equation}
    \label{eq:uq-worked-example}
    u_q(m)=m-q(-m)^+,\qquad q\geq0.
\end{equation}
Each $u_q$ is increasing, concave, and Lipschitz. Moreover, if $q'>q$, then $u_{q'}$ is more risk averse than $u_q$ in the sense of \Cref{sec:environment}.
The parameter $q$ therefore orders this family by risk aversion; below we evaluate each crossing at the corresponding willingness-to-pay price.

\medskip
\noindent\emph{Investment.}
Let $W\sim N(0,1)$ and set $w_I^*=1/2$. Consider the investment menu
$$
B_I=\{\lambda b_I:\lambda\in[0,1]\},
\qquad
b_I(w)=w-w_I^*.
$$
The binary signal reveals whether $W<w_I^*$ or $W\geq w_I^*$. This is an MLR-ordered experiment. After the high signal, $b_I(W)\geq0$ throughout the posterior support, whereas after the low signal, $b_I(W)\leq0$. Hence, for every increasing utility and every information price $\mu$, an optimal strategy is
$$
\lambda_L^*=0,
\qquad
\lambda_H^*=1.
$$
Thus higher signals induce weakly larger investment positions, as in \Cref{lemma:mon}. The status quo is also optimal before information for every $q\geq0$: the prior objective is concave in $\lambda$, and its right derivative at $\lambda=0$ is
$$
-\frac12-q\left(\frac{1}{\sqrt{2\pi}}+\frac14\right)<0.
$$
The example therefore satisfies ZAZI for the entire family \eqref{eq:uq-worked-example}.

After paying $\mu$, terminal wealth is
$$
T_I(W)=
\begin{cases}
W-\mu, & W<w_I^*,\\
2W-w_I^*-\mu, & W\geq w_I^*.
\end{cases}
$$
Since this map is increasing and continuous, its distribution can be obtained by inversion. Writing $F=\Phi$ for the standard-normal CDF,
\begin{equation}
    \label{eq:G-investment-worked}
G_I^\mu(m)=
\begin{cases}
F(m+\mu), & m<w_I^*-\mu,\\[1mm]
F\!\left(\dfrac{m+\mu+w_I^*}{2}\right), & m\geq w_I^*-\mu.
\end{cases}
\end{equation}

For $m<w_I^*-\mu$, $F(m)-G_I^\mu(m)\leq0$, with strict inequality if $\mu>0$. For $m\geq w_I^*-\mu$, its sign is the sign of
$$
m-\frac{m+\mu+w_I^*}{2}.
$$
Thus $F-G_I^\mu$ is single crossing. If $\mu>0$, its unique strict sign change occurs at
\begin{equation}
\label{eq:investment-crossing-worked}
m_I^{SC}=w_I^*+\mu,
\end{equation}
from negative to positive. If $\mu=0$, the difference is zero below $w_I^*$ and nonnegative above. This is the orientation required in the first part of \Cref{lemma:ptp_char}.

\medskip
\noindent\emph{Insurance.}
For a mirror-image example satisfying the support assumption of \Cref{theorem:insurance}, let background wealth have the reflected-Lomax distribution
$$
F(w)=\frac{1}{(2-w)^3},\qquad w\leq1,
$$
with $F(w)=1$ for $w>1$. Its density is $f(w)=3(2-w)^{-4}$ on $(-\infty,1]$, so $\log f$ is strictly convex and the support is unbounded below. Set $w_P^*=-1/2$, $a=3/5$, and consider
$$
B_P=\{\lambda b_P:\lambda\in[0,1]\},
\qquad
b_P(w)=a(w_P^*-w).
$$
Because $a\in(0,1)$, $b_P$ is nonincreasing and $w+b_P(w)$ is increasing, so this is an insurance menu. Let the signal reveal whether $W\leq w_P^*$ or $W>w_P^*$. After the low signal the insurance payoff is nonnegative throughout the posterior support, while after the high signal it is nonpositive. Thus an optimal strategy is
$$
\lambda_L^*=1,
\qquad
\lambda_H^*=0,
$$
so higher signals induce weakly smaller insurance positions. For the family \eqref{eq:uq-worked-example}, the prior objective is concave in $\lambda$, and its right derivative at zero is
$$
-\frac35+\frac{3q}{80}.
$$
Hence ZAZI holds for every $q\leq16$, in particular for all the values reported below.

Terminal wealth after purchasing information at price $\mu$ is
$$
T_P(W)=
\begin{cases}
(1-a)W+aw_P^*-\mu, & W\leq w_P^*,\\
W-\mu, & W>w_P^*.
\end{cases}
$$
Again the map is increasing and continuous, and therefore
\begin{equation}
    \label{eq:G-insurance-worked}
G_P^\mu(m)=
\begin{cases}
F\!\left(\dfrac{m+\mu-aw_P^*}{1-a}\right), & m<w_P^*-\mu,\\[2mm]
F(m+\mu), & m\geq w_P^*-\mu.
\end{cases}
\end{equation}
In the lower region, the sign of $G_P^\mu(m)-F(m)$ is the sign of
$$
\frac{m+\mu-aw_P^*}{1-a}-m,
$$
which vanishes at
\begin{equation}
\label{eq:insurance-crossing-worked}
m_P^{SC}=w_P^*-\frac{\mu}{a}.
\end{equation}
Above $w_P^*-\mu$, $G_P^\mu(m)-F(m)=F(m+\mu)-F(m)\geq0$. Thus $G_P^\mu-F$ is single crossing. If $\mu>0$, its unique strict sign change is from negative to positive at $m_P^{SC}$; if $\mu=0$, it is negative below $w_P^*$ and zero above. This is the orientation required in the second part of \Cref{lemma:ptp_char}.

Let $M_I(u_q)$ and $M_P(u_q)$ denote willingness to pay in the investment and insurance examples. The following table reports these values and the associated crossing points. The scalar indifference equations are easily solved numerically.
\begin{center}
\begin{tabular}{c|cc|cc}
$q$ & $M_I(u_q)$ & $m_I^{SC}$ & $M_P(u_q)$ & $m_P^{SC}$\\
\hline
$0.5$ & $0.1563$ & $0.6563$ & $0.0676$ & $-0.6126$\\
$1$   & $0.1296$ & $0.6296$ & $0.0847$ & $-0.6412$\\
$2$   & $0.0970$ & $0.5970$ & $0.1131$ & $-0.6885$\\
$4$   & $0.0648$ & $0.5648$ & $0.1534$ & $-0.7557$
\end{tabular}
\end{center}
Thus willingness to pay falls with risk aversion in the investment example and rises with risk aversion in the insurance example. The optimal strategy is independent of $q$; across rows, $q$ affects the induced distribution only through the price $\mu=M(u_q)$.

These examples use stronger assumptions than the main theorems require. The signal reveals which side of the common payoff-crossing point has occurred, so one action is optimal throughout each posterior support; the distribution $G^\mu_\beta$ can be calculated directly, and the single-crossing comparison follows. As a result, the examples illustrate cleanly the first two elements in the proof: they clarify the role of the monotone strategy and the single-crossing wealth comparison. The tail-shape assumptions play no role in establishing these comparisons.

With more general MLR signals, posterior supports may overlap. The conditional wealth comparisons must then be aggregated across background-wealth realizations to obtain the unconditional comparison. The tail assumptions ensure that this aggregation preserves the relevant single crossing.

\section{Additional Lemmas for Section~\ref{sec:main-proofs}}
\label{app:main-proofs}

The following auxiliary lemmas establish some technical results used in the proofs of the main theorems. The first derives first-order stochastic dominance from TP2. The latter two analyze the conditional wealth distributions induced by monotone strategies.

\begin{lemma}[TP2 and stochastic monotonicity]
\label{lemma:tp2-fosd}
Let $X=\{x_1<\cdots<x_K\}$, and suppose that the kernel
$(w,x)\mapsto h_w(x)$ is TP2. Then, whenever $w\le w'$, the distribution
$h_{w'}$ first-order stochastically dominates $h_w$.
\end{lemma}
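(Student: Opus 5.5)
The claim is that for $w\le w'$ and every $k$, $\sum_{j\ge k}h_{w'}(x_j)\ge\sum_{j\ge k}h_w(x_j)$. I would prove this directly from the TP2 inequality by a standard summation argument. The plan is to compare upper tails against lower tails. Fix $w\le w'$ and an index $k\in\{2,\dots,K\}$; the case $k=1$ is trivial. Define
\[
A=\sum_{j\ge k}h_w(x_j),\quad A'=\sum_{j\ge k}h_{w'}(x_j),\quad C=\sum_{i<k}h_w(x_i),\quad C'=\sum_{i<k}h_{w'}(x_i).
\]
Since each $h_w$ is a probability mass function, $A+C=A'+C'=1$.

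For every pair with $i<k\le j$ we have $x_i<x_j$. The TP2 hypothesis, applied with $w\le w'$ and $x_i\le x_j$, gives
\[
h_w(x_i)\,h_{w'}(x_j)\ \ge\ h_w(x_j)\,h_{w'}(x_i).
\]
Summing this over all $i<k$ and $j\ge k$ yields $C\,A'\ge A\,C'$. Substituting $C=1-A$ and $C'=1-A'$ gives
\[
(1-A)A'\ \ge\ A(1-A'),
\]
which simplifies to $A'\ge A$.

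This holds for every $k$, which is exactly first-order stochastic dominance of $h_{w'}$ over $h_w$ on the ordered set $X$. Zero probabilities cause no trouble, because the TP2 inequality is stated multiplicatively and no division is ever needed. I do not expect any real obstacle here. The only point that needs care is that the cross-multiplied double sum factorizes into the product of tail sums, and that step is immediate.
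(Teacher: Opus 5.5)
Your proof is correct. It takes a different, though closely related, route from the paper's.

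The paper works with the pointwise differences $b_i-a_i=h_{w'}(x_i)-h_w(x_i)$. It uses the TP2 inequality $a_ib_j\ge a_jb_i$ to show that $b_i>a_i$ forces $b_j\ge a_j$ for every $j>i$. So the difference sequence changes sign at most once, from nonpositive to nonnegative. Combined with $\sum_i(b_i-a_i)=0$, this makes every partial sum $\sum_{i\le k}(b_i-a_i)$ nonpositive.

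You instead sum the TP2 inequality over the block $i<k\le j$. This amounts to showing that the two-cell coarsening of the kernel at each cut $k$ is itself TP2, $CA'\ge AC'$. Normalization then gives $A'\ge A$ in one line.

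What your route buys is brevity and a purely multiplicative argument. It avoids the small case split hidden in the paper's implication, which needs division by $a_i$ and a separate check when $a_i=0$. What the paper's route buys is an intermediate single-crossing property of the pmf difference. That is the discrete form of the standard MLR-implies-FOSD argument and fits the single-crossing framework used throughout the paper.
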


\begin{proof}
Fix $w\le w'$ and set $a_i=h_w(x_i)$ and $b_i=h_{w'}(x_i)$. TP2 gives $a_i b_j\ge a_j b_i$ for $i<j$. Hence $b_i>a_i$ implies $b_j\ge a_j$ for every $j>i$, so $(b_i-a_i)_{i=1}^K$ changes sign at most once, from nonpositive to nonnegative. Since $\sum_{i=1}^K(b_i-a_i)=0$, it follows that $\sum_{i=1}^k b_i\le\sum_{i=1}^k a_i$ for every $k<K$, which is the claimed first-order stochastic dominance.
\end{proof}

\begin{lemma}\label{lemma:investment-conditional}
Let $B$ be an investment menu with threshold $w^*\in\mathbb R$ and complete ordering $\succeq$, and let $\beta\in B^X$ be
monotone relative to $\succeq$.
Let $\chi$ denote the signal generated
by the experiment, and define
$$
Y^1=0,
\qquad
Y^2=\beta_\chi(W)-\mu.
$$
Then versions of $F_{Y^1|W}$ and $F_{Y^2|W}$ can be chosen so as to satisfy
the hypotheses of Lemma~\ref{lemma:sc_random}\ref{item:sc_random:logconcave}.%
\end{lemma}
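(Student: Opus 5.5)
The plan is to write the two conditional distributions explicitly and check the hypotheses of Lemma~\ref{lemma:sc_random}\ref{item:sc_random:logconcave} directly, with $\underline w=w^*$ and $\bar w=+\infty$. Given $W=s$, the signal $\chi$ has law $h_s$ and $Y^2=\beta_\chi(s)-\mu$. So I would take the versions
$$
F_{Y^1|W}(y|s)=\mathbf 1_{\{y\ge 0\}},\qquad
F_{Y^2|W}(y|s)=\sum_{x\in X}h_s(x)\,\mathbf 1_{\{\beta_x(s)-\mu\le y\}},
$$
which are measurable in $s$ because each $s\mapsto h_s(x)$ is measurable and each $\beta_x$ is monotone. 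Write $\phi(y,w)=F_{Y^1|W}(y|w-y)-F_{Y^2|W}(y|w-y)$, and set $s=w-y$ throughout.

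First I would simplify the menu. The status quo $\mathbf 0$ lies in $B$, so every $b\in B$ is comparable to it. If $\mathbf 0\succeq b$, then $b\ge0$ on $(-\infty,w^*]$ and $b\le 0$ on $[w^*,\infty)$. Since $b$ is nondecreasing, this forces $b\equiv 0$. Hence every asset in $B$ satisfies $b\le0$ below $w^*$ and $b\ge 0$ above $w^*$. This gives the sign structure of $\phi$:
\begin{itemize}
\item if $y<0$, then $\phi(y,w)=-F_{Y^2|W}(y|s)\le 0$;
\item if $y\ge0$, then $\phi(y,w)=\Pr(\beta_\chi(s)-\mu>y\mid W=s)\ge0$.
\end{itemize}
So for each fixed $w$, the map $y\mapsto\phi(y,w)$ is nonpositive on $y<0$ and nonnegative on $y\ge 0$, and is therefore single crossing.

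Next I would check the condition in $w$ for each fixed $y$. On $w\le w^*$ we need $\phi\le 0$. For $y<0$ this is already shown. For $y\ge0$ we have $s=w-y\le w^*$, so $\beta_\chi(s)\le 0\le y+\mu$, which gives $F_{Y^2|W}=1$ and $\phi=0$.

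On $w>w^*$ we need $\phi(y,\cdot)$ to be nondecreasing, and this is the main step. The key claim is that $c\mapsto\Pr(\beta_\chi(s)>c\mid W=s)$ is nondecreasing in $s$ on $[w^*,\infty)$, for every $c$. I would prove it as follows.
\begin{itemize}
\item For $s\ge w^*$, monotonicity of $\beta$ together with Definition~\ref{def:investment} makes $x\mapsto\beta_x(s)$ nondecreasing. Hence $\{x:\beta_x(s)>c\}$ is an upper set of $X$.
\item Each $\beta_x$ is nondecreasing in $s$, so this upper set grows as $s$ increases.
\item By Lemma~\ref{lemma:tp2-fosd}, $h_{s'}$ first-order dominates $h_s$ for $s'\ge s$, so $h_{s'}$ puts weakly more mass on any upper set.
\end{itemize}
Combining the three points proves the claim.

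With the claim in hand, I would split on the sign of $y$.
\begin{itemize}
\item If $y<0$ and $w>w^*$, then $s=w-y>w^*$ and $s$ increases with $w$. So $-F_{Y^2|W}(y|s)=\Pr(\beta_\chi(s)>y+\mu)-1$ is nondecreasing in $w$.
\item If $y\ge 0$, then $\phi=0$ while $s\le w^*$. Once $s\ge w^*$, $\phi=\Pr(\beta_\chi(s)>y+\mu)\ge0$ and it is nondecreasing. Hence $\phi$ is nondecreasing on all of $(w^*,\infty)$.
\end{itemize}
The requirement on $\{w\ge\bar w\}$ is vacuous because $\bar w=+\infty$.

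The main obstacle is that below $w^*$ the ranking of assets reverses. There, higher signals select assets that pay less, so monotonicity in $s$ fails. The choice $\underline w=w^*$ avoids this: for $y\ge0$ the sign argument shows $\phi$ is identically zero whenever $s\le w^*$, and for $y<0$ the state $s$ automatically exceeds $w^*$ whenever $w>w^*$.
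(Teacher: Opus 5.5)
Your proposal is correct and is essentially the paper's proof. It uses the same versions of the conditional distributions, the same thresholds $\underline w=w^*$ and $\bar w=+\infty$, the same observation that every asset ranks above $\mathbf 0$ (so all payoffs are nonpositive below $w^*$), and the same first-order-dominance step from Lemma~\ref{lemma:tp2-fosd}. The difference is only presentational: you work with survival probabilities on the upper sets $\{x:\beta_x(s)>c\}$ and split on the sign of $y$, whereas the paper works with the CDF, shows that $x\mapsto\mathbf 1_{\{\beta_x(z')-\mu\le y\}}$ is nonincreasing, and splits on whether $z'=w'-y$ exceeds $w^*$.
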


\begin{proof}[Proof of \Cref{lemma:investment-conditional}]
Choose versions such that
\begin{align*}
F_{Y^1|W}(y|z)&=\mathbf 1_{\mathbb R_+}(y),\\
F_{Y^2|W}(y|z)&=\sum_{x\in X}h_z(x)\mathbf 1_{\{\beta_x(z)-\mu\le y\}},
\end{align*}
and write
$$
\Delta(y,w)=\mathbf 1_{\mathbb R_+}(y)-\sum_{x\in X}h_{w-y}(x)\mathbf 1_{\{\beta_x(w-y)-\mu\le y\}}.
$$
For every $w$, $\Delta(y,w)\le0$ when $y<0$ and $\Delta(y,w)\ge0$ when $y\ge0$, so $y\mapsto\Delta(y,w)$ is single crossing.

Every nonzero asset is ranked above $\mathbf{0}$: if $\mathbf{0}\succeq b$, then $b\ge0$ below $w^*$ and $b\le0$ above $w^*$, which together with monotonicity of $b$ implies $b=\mathbf{0}$. Hence $b(z)\le0$ for every $b\in B$ and $z\le w^*$. Fix $y$. If $w\le w^*$ and $y<0$, then $\Delta(y,w)\le0$; if $y\ge0$, then $w-y\le w^*$ and every indicator equals one, so $\Delta(y,w)=0$. Thus $\Delta(y,\cdot)$ is nonpositive on $(-\infty,w^*]$.

Now fix $w^*\le w<w'$, set $z=w-y$ and $z'=w'-y$, and let $I_x(t)=\mathbf 1_{\{\beta_x(t)-\mu\le y\}}$. Since every asset is nondecreasing, $I_x(z)\ge I_x(z')$. Moreover, $x\mapsto I_x(z')$ is nonincreasing. Indeed, if $z'>w^*$, this follows from monotonicity of $\beta$ and the investment-menu ordering; if $z'\le w^*$, then $y=w'-z'>0$ and every indicator equals one. Since $z\le z'$, \Cref{lemma:tp2-fosd} yields
\begin{align*}
F_{Y^2|W}(y|z)
&=\sum_{x\in X}h_z(x)I_x(z)\\
&\ge\sum_{x\in X}h_z(x)I_x(z')\\
&\ge\sum_{x\in X}h_{z'}(x)I_x(z')
=F_{Y^2|W}(y|z').
\end{align*}
Therefore $\Delta(y,w')\ge\Delta(y,w)$. The hypotheses of \Cref{lemma:sc_random}\ref{item:sc_random:logconcave} hold with $\underline w=w^*$ and $\bar w=\infty$.
\end{proof}
\begin{lemma}\label{lemma:insurance-conditional}
Let random wealth $W$ have support $(-\infty,\omega]$, let $B$ be an insurance menu with threshold $w^*\in(-\infty,\omega)$ and witnessing complete ordering $\succeq$, and let $\beta\in B^X$ be monotone relative to the reverse of $\succeq$. Let $\chi$ denote the signal generated
by the experiment, and define
$$
Y^1=\beta_\chi(W)-\mu,
\qquad
Y^2=0.
$$
Then versions of $F_{Y^1|W}$ and $F_{Y^2|W}$ can be chosen so as to satisfy
the hypotheses of Lemma~\ref{lemma:sc_random}\ref{item:sc_random:logconvex}.%
\end{lemma}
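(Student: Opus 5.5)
The plan mirrors the proof of \Cref{lemma:investment-conditional}, with the roles of $Y^1$ and $Y^2$ and the direction of the menu order interchanged. First, fix versions of the conditional distributions. For $z\le\omega$ set
$$
F_{Y^1|W}(y|z)=\sum_{x\in X}h_z(x)\mathbf 1_{\{\beta_x(z)-\mu\le y\}},
\qquad
F_{Y^2|W}(y|z)=\mathbf 1_{\mathbb R_+}(y).
$$
For $z>\omega$, which lies outside the support of $W$, set both equal to $\mathbf 1_{\mathbb R_+}(y)$. This choice is legitimate because conditional distributions are only pinned down $F$-a.e. It makes the equality requirement of \Cref{lemma:sc_random}\ref{item:sc_random:logconvex} automatic: if $y<w-\omega$, then $z=w-y>\omega$.

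Write $\Delta(y,w)=F_{Y^1|W}(y|w-y)-\mathbf 1_{\mathbb R_+}(y)$. The single-crossing requirement on $y\mapsto-\Delta(y,w)$ is immediate. For $y<0$ we have $-\Delta=-F_{Y^1|W}\le0$, and for $y\ge0$ we have $-\Delta=1-F_{Y^1|W}\ge0$.

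Next, establish two structural facts about the menu, as in the investment case.
\begin{itemize}
\item[(a)] Every $b\in B$ satisfies $b\succeq\mathbf 0$. If instead $\mathbf 0\succeq b$, then $b\le0$ below $w^*$ and $b\ge0$ above $w^*$. Since $b$ is nonincreasing, this forces $b=\mathbf 0$. Hence $b\ge0$ on $(-\infty,w^*]$ and $b\le0$ on $[w^*,\infty)$.
\item[(b)] Since $t\mapsto t+b(t)$ is nondecreasing, for $z\le w^*<t$ we get $z+b(z)\le t+b(t)\le t$. Letting $t\downarrow w^*$ yields $z+b(z)\le w^*$ for all $z\le w^*$.
\end{itemize}
These two facts give the nonnegativity region, and I would take $\bar w=w^*$. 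Fix $w\ge w^*$ and $y\ge0$, and set $z=w-y$.
\begin{itemize}
\item If $z\ge w^*$, then $\beta_x(z)\le0\le y+\mu$ by (a).
\item If $z<w^*$, then by (b), $\beta_x(z)-\mu\le w^*-z-\mu=w^*-w+y-\mu\le y$.
\end{itemize}
In both cases every indicator equals one, so $\Delta(y,w)=0$. For $y<0$ we have $\Delta\ge0$ trivially. Thus $\Delta(y,\cdot)\ge0$ on $[w^*,\infty)$.

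The remaining task is monotonicity of $w\mapsto\Delta(y,w)$ on $(\underline w,w^*)$, where I would try $\underline w=-\infty$. Fix $w<w'<w^*$, and put $z=w-y$, $z'=w'-y$ and $I_x(t)=\mathbf 1_{\{\beta_x(t)-\mu\le y\}}$. Since every asset is nonincreasing, $I_x(z)\le I_x(z')$. If in addition $x\mapsto I_x(z)$ is nondecreasing, then \Cref{lemma:tp2-fosd}, applied with $z\le z'$, gives
$$
\sum_x h_z(x)I_x(z)\le\sum_x h_{z'}(x)I_x(z)\le\sum_x h_{z'}(x)I_x(z'),
$$
which is the required monotonicity. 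The needed monotonicity in $x$ holds when $z\le w^*$. There, $\beta$ is monotone for the reverse order, so higher signals select lower-ranked insurance, which pays weakly less below $w^*$.

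I expect the main obstacle to be the case $z>w^*$. This forces $y<0$, since $w<w^*$. Below the crossing point the order of payoffs across signals flips, so $x\mapsto I_x(z)$ may be nonincreasing. My first attempt would exploit that on this region $\beta_x(z)\le0$ and that $t+\beta_x(t)$ is nondecreasing, aiming to show that all indicators coincide there. Failing that, I would lower $\bar w$, or raise $\underline w$ to a point where $\Delta(y,\cdot)\le0$ is forced, so that the problematic configuration falls into a region where only a sign condition is required.
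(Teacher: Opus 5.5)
Your setup is largely correct: the versions, the single crossing of $y\mapsto-\Delta(y,w)$, facts (a) and (b), nonnegativity on $[w^*,\infty)$, and the stochastic-dominance chain when $z\le w^*$. The gap is the choice $\bar w=w^*$. The case you flag at the end is not a technicality: $w\mapsto\Delta(y,w)$ can genuinely fail to be nondecreasing on $(-\infty,w^*)$. Consider the following instance.
\begin{itemize}
\item Take $w^*=0$, $\mu=\tfrac12$, $\omega>2$, and $B=\{\mathbf 0,b\}$ with $b(t)=\max\{-3,\min\{1,-t\}\}$.
\item Take a binary experiment with $h_z(x_1)=1/(1+e^{z})$, and the strategy $\beta_{x_1}=b$, $\beta_{x_2}=\mathbf 0$, which is monotone for the reverse order.
\item For $y=-2$ and $w\in[-\tfrac12,0)$ we have $z=w+2\in[\tfrac32,2)$ and $b(z)=-z$. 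Hence $I_{x_1}(z)=1$ and $I_{x_2}(z)=0$, so $\Delta(-2,w)=1/(1+e^{w+2})$, which is strictly decreasing.
\end{itemize}
This also kills your first repair, since the indicators differ across signals. Raising $\underline w$ cannot work either. For $y\in[-\mu,0)$ the signal $x_2$ gives $\Delta(y,w)\ge h_{w-y}(x_2)>0$ at every $w$ with $w-y\le\omega$. So no real $\underline w$ is admissible, and $\underline w=-\infty$ is forced. The only way out is to lower $\bar w$, and the proposal says neither where to put it nor why that restores monotonicity.

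The missing idea is the threshold
$$\bar w=\inf\{w\in\mathbb R:\ b(w)\le\mu\ \text{for every } b\in B\}.$$
\begin{itemize}
\item It satisfies $\bar w\le w^*-\mu$, because $(w^*-\mu)+b(w^*-\mu)\le w^*+b(w^*)=w^*$. In the example above $\bar w=-\tfrac12$, so the offending interval lies where only nonnegativity is required.
\item \emph{Nonnegativity on $[\bar w,\infty)$.} Insurance payoffs are nonincreasing and $1$-Lipschitz, so the defining set is a closed upper interval and $\beta_x(w)\le\mu$ for $w\ge\bar w$. For $y\ge0$ and $z=w-y$, we get $z+\beta_x(z)\le w+\beta_x(w)\le w+\mu$. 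So every indicator equals one and $\Delta(y,w)=0$.
\item \emph{Monotonicity below $\bar w$, case $z'\ge w^*$.} Pick $b_0$ with $b_0(w')>\mu$. Completeness yields $b\succeq\beta_x$ with $b(w')>\mu$: either $b_0$, or $\beta_x$ itself, which then dominates $b_0$ below $w^*$. Then $\beta_x(z')\ge b(z')$ and $z'+b(z')\ge w'+b(w')>w'+\mu$, so $\beta_x(z')-\mu>y$ and every indicator at $z'$ vanishes.
\item The chain $\sum_x h_z(x)I_x(z)\le\sum_x h_z(x)I_x(z')\le\sum_x h_{z'}(x)I_x(z')$ then goes through.
\end{itemize}

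Finally, your redefinition of the versions above $\omega$ is legitimate but not free. It makes $\Delta(y,\cdot)$ drop to $0$ at $w=\omega+y$. So for $y<0$, monotonicity below $\bar w$ requires $F_{Y^1|W}(y|w-y)=0$ just before the drop, a case your argument does not check. With $\bar w=w^*$ this fails, e.g.\ for $\beta\equiv\mathbf 0$ and $\mu>\omega-w^*$. With the corrected $\bar w$ it follows from the vanishing argument above. The paper instead keeps the formula everywhere and verifies the equality condition directly, via $\ell+\beta_x(\ell)-\mu\ge w^*-\mu\ge\bar w>w$.
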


\begin{proof}[Proof of \Cref{lemma:insurance-conditional}]
Choose versions such that
\begin{align*}
F_{Y^1|W}(y|z)&=\sum_{x\in X}h_z(x)\mathbf 1_{\{\beta_x(z)-\mu\le y\}},\\
F_{Y^2|W}(y|z)&=\mathbf 1_{\mathbb R_+}(y),
\end{align*}
and write
$$
\Delta(y,w)=\sum_{x\in X}h_{w-y}(x)\mathbf 1_{\{\beta_x(w-y)-\mu\le y\}}-\mathbf 1_{\mathbb R_+}(y).
$$
For every $w$, $-\Delta(y,w)\le0$ when $y<0$ and $-\Delta(y,w)\ge0$ when $y\ge0$, so $y\mapsto-\Delta(y,w)$ is single crossing.

Every nonzero asset is ranked above $\mathbf{0}$: if $\mathbf{0}\succeq b$, then $b\le0$ below $w^*$ and $b\ge0$ above $w^*$, which together with monotonicity of $b$ implies $b=\mathbf{0}$. Hence every $b\in B$ is nonnegative below $w^*$, nonpositive above $w^*$, and vanishes at $w^*$. Moreover, every insurance payoff is continuous: if $s<t$, then $0\le b(s)-b(t)\le t-s$ because $b$ is nonincreasing and $w\mapsto w+b(w)$ is nondecreasing.

Define
$$
\bar w=\inf\{w\in\mathbb R:b(w)\le\mu\text{ for every }b\in B\}.
$$
The set is a closed upper interval. For every $b\in B$,
$$
w^*-\mu+b(w^*-\mu)\le w^*+b(w^*)=w^*,
$$
so $b(w^*-\mu)\le\mu$ and $\bar w\le w^*-\mu$.

Fix $y$. If $w\ge\bar w$ and $y<0$, then $\Delta(y,w)\ge0$. If $y\ge0$, set $z=w-y$. Since $w\mapsto w+\beta_x(w)$ is nondecreasing and $\beta_x(w)\le\mu$,
$$
z+\beta_x(z)\le w+\beta_x(w),
$$
hence $\beta_x(z)-\mu\le y$ for every $x$. Thus all indicators equal one and $\Delta(y,w)=0$.

Now fix $w<w'<\bar w$, set $z=w-y$ and $z'=w'-y$, and let $I_x(t)=\mathbf 1_{\{\beta_x(t)-\mu\le y\}}$. Since every insurance payoff is nonincreasing, $I_x(z)\le I_x(z')$. We claim that $x\mapsto I_x(z')$ is nondecreasing. If $z'<w^*$, this follows from monotonicity of $\beta$ relative to the reverse menu order. If $z'\ge w^*$, then $y=w'-z'\le0$. Since $w'<\bar w$, choose $b_0\in B$ with $b_0(w')>\mu$. For each $x$, completeness allows us to choose $b\succeq\beta_x$ with $b(w')>\mu$: take $b=b_0$ if $b_0\succeq\beta_x$ and $b=\beta_x$ otherwise. Since $z'\ge w^*$ and $w'<w^*$,
$$
\beta_x(z')\ge b(z')
\quad\text{and}\quad
b(z')-y=z'+b(z')-w'\ge b(w')>\mu.
$$
Hence $\beta_x(z')-\mu>y$, so every indicator equals zero.

Since $z\le z'$, \Cref{lemma:tp2-fosd} gives
\begin{align*}
F_{Y^1|W}(y|z)
&=\sum_{x\in X}h_z(x)I_x(z)\\
&\le\sum_{x\in X}h_z(x)I_x(z')\\
&\le\sum_{x\in X}h_{z'}(x)I_x(z')
=F_{Y^1|W}(y|z').
\end{align*}
Thus $\Delta(y,w')\ge\Delta(y,w)$.

Finally, fix $w<\bar w$ and $y<w-\omega$, and set $\ell=w-y>\omega>w^*$. Since every asset vanishes at $w^*$ and total wealth is nondecreasing,
$$
\ell+\beta_x(\ell)-\mu\ge w^*+\beta_x(w^*)-\mu=w^*-\mu\ge\bar w>w.
$$
Thus $\beta_x(\ell)-\mu>y$ for every $x$, while $y<0$. Both conditional distribution functions therefore vanish at $(y,\ell)$, so $\Delta(y,w)=0$. All the hypotheses of \Cref{lemma:sc_random}\ref{item:sc_random:logconvex} hold with $\underline w=-\infty$.
\end{proof}
\section{Proofs of Section \ref{sec:reversals}}\label{app:reversals}

The two reversal proofs use different tail comparisons and calibrations. The investment proof uses adjacent-mass geometry and calibrates a one-parameter family of menus by the intermediate value theorem. The insurance proof uses the extreme-tail implication of super-exponentiality and sets the information price equal to the expected gross insurance payoff of a risk-neutral benchmark. A local increase in risk aversion then produces each reversal.

\subsection{Ingredients of the Investment Reversal}\label{app:adjacent-mass}
For adjacent intervals of lengths $a$ and $\mu$, $P_{a,\mu}(m)$ is the share of the mass between $m-a$ and $m+\mu$ that lies above $m$. The next lemma shows that this share increases as the intervals move right under strict log convexity and decreases under strict log concavity; both directions are also used in \Cref{sec:assumption-diagnostic}.

\begin{lemma}[Left-tail geometry]\label{lemma:left-tail-geometry} Suppose that $f$ is continuously differentiable and strictly positive on $(-\infty,\omega)$. Fix $a,\mu>0$, and define
$$
P_{a,\mu}(m)
=
\frac{F(m+\mu)-F(m)}
     {F(m+\mu)-F(m-a)}
$$
whenever $m+\mu<\omega$.

\begin{enumerate}[label=(\roman*)]
\item If $\log f$ is strictly convex on $(-\infty,\omega)$, then $P_{a,\mu}$ is strictly increasing.

\item If $\log f$ is strictly concave on $(-\infty,\omega)$, then $P_{a,\mu}$ is strictly decreasing.
\end{enumerate}
\end{lemma}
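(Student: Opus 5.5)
Monotonicity of $P_{a,\mu}$ is equivalent to monotonicity of the ratio
$$
R(m)=\frac{F(m)-F(m-a)}{F(m+\mu)-F(m)}=\frac{A^-(m)}{A^+(m)},
$$
since $P_{a,\mu}=1/(1+R)$. Under strict log convexity we need $R$ strictly decreasing, and under strict log concavity strictly increasing. Both masses are strictly positive wherever $m+\mu<\omega$, because $f>0$ on $(-\infty,\omega)$. The plan is to write each mass as an integral of $f$ over a fixed-length window anchored at $m$:
$$
A^-(m)=\int_0^a f(m-s)\,\mathrm ds,\qquad A^+(m)=\int_0^\mu f(m+t)\,\mathrm dt.
$$
We then differentiate in $m$, which is legitimate because $f$ is $C^1$ and the windows are compact.

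The sign of $R'$ is the sign of $\frac{\mathrm d}{\mathrm dm}\log A^-(m)-\frac{\mathrm d}{\mathrm dm}\log A^+(m)$. Write $g=(\log f)'$, which is continuous. Then
$$
\frac{\mathrm d}{\mathrm dm}\log A^-(m)=\frac{\int_0^a g(m-s)f(m-s)\,\mathrm ds}{\int_0^a f(m-s)\,\mathrm ds},
$$
which is a weighted average of $g$ over $[m-a,m]$ with strictly positive weights $f$. Similarly, $\frac{\mathrm d}{\mathrm dm}\log A^+(m)$ is a weighted average of $g$ over $[m,m+\mu]$.

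If $\log f$ is strictly convex, $g$ is strictly increasing. Then $g(m-s)<g(m)$ for $s\in(0,a]$ and $g(m+t)>g(m)$ for $t\in(0,\mu]$. Because the weights are positive on sets of positive measure, the first average is strictly below $g(m)$ and the second is strictly above $g(m)$. Hence $R'(m)<0$, so $R$ is strictly decreasing and $P_{a,\mu}$ is strictly increasing.

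Under strict log concavity $g$ is strictly decreasing, all the inequalities reverse, and $P_{a,\mu}$ is strictly decreasing.

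The only delicate point is the requirement $m+\mu<\omega$. It keeps both windows inside the region where $f$ is positive and $C^1$, so that $g$ is defined and the differentiation under the integral is justified. For a strict conclusion on an interval, a strictly negative (or positive) derivative at every point suffices. Alternatively, one could avoid derivatives by using the TP2/RR2 property of the kernel $f(m+t)$ in $(m,t)$, but the derivative argument is the most direct.
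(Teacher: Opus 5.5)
Your proposal is correct and is essentially the paper's proof: both write $\frac{\mathrm d}{\mathrm dm}\log(A^+/A^-)$ as the difference of the $f$-weighted averages of $(\log f)'$ over the adjacent windows $[m,m+\mu]$ and $[m-a,m]$, and sign it using the monotonicity of $(\log f)'$. Your use of $g(m)$ as a pivot separating the two averages is just a slightly more explicit way to get the strict inequality than the paper's appeal to $(\log f)'$ being nondecreasing and nonconstant on $[m-a,m+\mu]$.
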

\begin{proof}
Set $A^+(m)=F(m+\mu)-F(m)$ and $A^-(m)=F(m)-F(m-a)$. Since $P_{a,\mu}=A^+/(A^++A^-)$, it has the same monotonicity as $A^+/A^-$. Writing $r=(\log f)'=f'/f$, we obtain
$$
\frac{\mathrm d}{\mathrm dm}\log\frac{A^+(m)}{A^-(m)}=\frac{\int_m^{m+\mu}r(t)f(t)\,\mathrm dt}{\int_m^{m+\mu}f(t)\,\mathrm dt}-\frac{\int_{m-a}^{m}r(t)f(t)\,\mathrm dt}{\int_{m-a}^{m}f(t)\,\mathrm dt}.
$$
The two terms are the $f(t)\,\mathrm dt$-weighted averages of $r$ over the adjacent intervals $[m,m+\mu]$ and $[m-a,m]$. If $\log f$ is strictly convex, then $r$ is nondecreasing and nonconstant on $[m-a,m+\mu]$, so the first average is strictly larger than the second. Thus $A^+/A^-$, and hence $P_{a,\mu}$, is strictly increasing. Under strict log concavity the inequalities are reversed, so $P_{a,\mu}$ is strictly decreasing.
\end{proof}

\subsection{Proof of Proposition \ref{prop:invest-log-convex}}\label{app:invest-log-convex}

The proof proceeds in three steps. First, we construct a one-parameter family of binary investment menus under a fixed extreme TP2 experiment, with the same strategy strictly optimal throughout the family and the status quo strictly optimal under the prior. Second, we calibrate the parameters so that the value of information at the prescribed price varies continuously across the family and crosses zero, allowing the asset to be selected by the intermediate value theorem. Third, Lemma~\ref{lemma:left-tail-geometry} determines the sign of the distributional comparison generated by the calibrated asset, and a local perturbation of the benchmark utility completes the proof.

\paragraph{Construction of the calibration family.}
Since $f$ is integrable on $(-\infty,\omega)$ and $\log f$ is strictly convex, there exists $w_*<\omega$ at which $f$ is locally strictly increasing. For parameters $\mu>0$ and $\bar w$ satisfying
$$
w_*<\bar w<\omega-\mu,
$$
set
$$
q
=
P_{\mu,\mu}(w_*)
=
\frac{F(w_*+\mu)-F(w_*)}
     {F(w_*+\mu)-F(w_*-\mu)}.
$$
For $\varepsilon,\delta\in(0,1)$, define
$$
v_\varepsilon(w)
=
\begin{cases}
w,&w\le \bar w,\\
\bar w+\varepsilon(w-\bar w),&w>\bar w,
\end{cases}
$$
and
$$
u_{\varepsilon,\delta}(w)
=
\begin{cases}
v_\varepsilon(w),&w\le w_*,\\
v_\varepsilon(w_*)
+\delta\bigl(v_\varepsilon(w)-v_\varepsilon(w_*)\bigr),&w>w_*.
\end{cases}
$$
Both utilities belong to $U$, and $u_{\varepsilon,\delta}$ is more risk averse than $v_\varepsilon$.

For $\widehat w<w_*-\mu$ and $L>0$, consider the binary experiment
$$
h_w(1)
=
\begin{cases}
0,&w\le\widehat w,\\
q,&w>\widehat w,
\end{cases}
\qquad
h_w(0)=1-h_w(1).
$$
The experiment is TP2.
For each $c\in[\mu,2\mu]$, define
$$
b_c(w)=
\begin{cases}
-L,&w<\widehat w,\\
0,&w=\widehat w,\\
c,&w>\widehat w.
\end{cases}
$$
Let $B_c=\{\mathbf{0},b_c\}$. Since $b_c$ is nondecreasing, nonpositive below $\widehat w$, nonnegative above $\widehat w$, and vanishes at $\widehat w$, $B_c$ is an investment menu with threshold $\widehat w$.
Consider the strategy
$$
\beta^c_0=\mathbf{0},
\qquad
\beta^c_1=b_c.
$$
\paragraph{Calibration of the parameters.}

Choose $\mu>0$ sufficiently small that $w_*+\mu<\omega$ and
$q=P_{\mu,\mu}(w_*)>\frac12.$

Consider first the auxiliary capped utility
$v_0(w)=\min\{w,\bar w\},$
where $\bar w$ will be selected below. At $c=2\mu$, first consider the limiting experiment obtained by sending $\widehat w$ to $-\infty$. Under the strategy $\beta^{2\mu}$, terminal wealth then equals $W+\mu$ with probability $q$ and $W-\mu$ with probability $1-q$. Its distribution $\widetilde G$ satisfies
$$
\widetilde G(m)
=
qF(m-\mu)+(1-q)F(m+\mu).
$$
For every $m$ such that $m+\mu<\omega$,
$\widetilde G(m)>F(m)$ iff $q<P_{\mu,\mu}(m).$
Since $q=P_{\mu,\mu}(w_*)$ and $P_{\mu,\mu}$ is strictly increasing, $\widetilde G(m)<F(m)$ for every $m<w_*$, while $\widetilde G(m)>F(m)$
for every $w_*<m<\omega-\mu$.
In particular,
$\int_{-\infty}^{w_*}\bigl(F(m)-\widetilde G(m)\bigr)\,\mathrm dm>0.
$

Choose $\bar w\in(w_*,\omega-\mu)$ sufficiently close to $w_*$ that $\int_{-\infty}^{\bar w}
\bigl(F(m)-\widetilde G(m)\bigr)\,\mathrm dm>0.
$
Integration by parts gives $\int v_0\,d(\widetilde G-F)>0.$

We next choose the cutoff $\widehat w$. Let $G^{\widehat w}_{2\mu}$ denote the terminal-wealth distribution induced by the actual experiment and the strategy $\beta^{2\mu}$. The actual and limiting constructions differ only when $W\le\widehat w$. Since $v_0$ is $1$-Lipschitz,
$$
\left|
\int v_0\,\mathrm dG^{\widehat w}_{2\mu}
-
\int v_0\,d\widetilde G
\right|
\le
2q\mu F(\widehat w).
$$
Choose $\widehat w<w_*-\mu$ sufficiently far to the left that $\int v_0\,d(G^{\widehat w}_{2\mu}-F)>0.$

Having fixed $\widehat w$, choose $L>0$ sufficiently large that
$$
-LF(\widehat w)+2\mu\bigl(1-F(\widehat w)\bigr)<0.
$$
For $v_\varepsilon$, the utility gain from $b_c$ equals $-L$ on $(-\infty,\widehat w)$, equals zero at $\widehat w$, and is at most $2\mu$ on $(\widehat w,\infty)$.
Thus the displayed bound makes the status quo strictly optimal under the prior, uniformly over $c\in[\mu,2\mu]$ and $\varepsilon\in[0,1]$. The unnormalized gain after signal $0$ is bounded above by $-LF(\widehat w)+2\mu(1-q)(1-F(\widehat w))$, which is smaller and therefore negative. After signal $1$, the posterior is supported on $(\widehat w,\infty)$, where $b_c=c>0$, so $b_c$ is strictly optimal. Hence $\beta^c$ is the unique optimal informed strategy throughout the family.

For $c\in[\mu,2\mu]$, let $G_c$ denote the terminal-wealth distribution induced by $\beta^c$, and define
$S_\varepsilon(c)
=
\int v_\varepsilon\,d(G_c-F).$
At $c=\mu$, purchasing information leaves wealth unchanged after signal $1$ and lowers it by $\mu$ after signal $0$. Hence $S_0(\mu)<0.$
By the choice of $\widehat w$, $S_0(2\mu)>0.$
Continuity in $c$ therefore allows us to choose
$\mu<c_-<c_+<2\mu$ such that $S_0(c_-)<0<S_0(c_+).$

We now choose $\varepsilon$. For every $c\in[c_-,c_+]$ and every $m\in[w_*,\bar w]$, one has
$m+\mu-c\ge w_*-\mu>\widehat w.$
Hence
$G_c(m)=qF(m+\mu-c)+(1-q)F(m+\mu).$
Consequently, $G_c(m)>F(m)$ iff $q<P_{c-\mu,\mu}(m).$

Since $c<2\mu$, one has $c-\mu<\mu$. The map $a\mapsto P_{a,\mu}(m)$ is strictly decreasing, and Lemma~\ref{lemma:left-tail-geometry}(i) gives
$P_{c-\mu,\mu}(m)> P_{\mu,\mu}(m)\ge P_{\mu,\mu}(w_*) =
q.$

Thus $G_c(m)>F(m)$
throughout $[w_*,\bar w]\times[c_-,c_+]$. By continuity and compactness,
$$
A
:=
\min_{c\in[c_-,c_+]}
\int_{w_*}^{\bar w}
\bigl(G_c(m)-F(m)\bigr)\,\mathrm dm
>0.
$$
Moreover, since the terminal wealth induced by $\beta^c$ has uniformly bounded first moments over $c\in[c_-,c_+]$,
$$
K
:=
\sup_{c\in[c_-,c_+]}
\int_{\bar w}^{\infty}
|G_c(m)-F(m)|\,\mathrm dm
<\infty.
$$
Choose $\varepsilon>0$ sufficiently small that $\varepsilon K<A$ and
$S_\varepsilon(c_-)<0<S_\varepsilon(c_+).$

Since $c\mapsto b_c$ is continuous in $L^1(F)$ and $v_\varepsilon$ is Lipschitz continuous, $S_\varepsilon$ is continuous on $[c_-,c_+]$. Hence there exists $c^*\in(c_-,c_+)$ such that
$S_\varepsilon(c^*)=0.$

For the menu $B_{c^*}$, the status quo is strictly optimal under the prior and $\beta^{c^*}$ is the unique optimal informed strategy. Hence $M(v_\varepsilon)=\mu$.

It remains to choose $\delta$. Write $G=G_{c^*}$ and $v=v_\varepsilon$. Since
$u_{\varepsilon,\delta}(w)
= v(w) + (\delta-1)\bigl(v(w)-v(w_*)\bigr)^+,
$ integration by parts gives
\begin{align*}
\int \bigl(v-v(w_*)\bigr)^+\,d(G-F)
&=
-\int_{w_*}^{\bar w}\bigl(G(m)-F(m)\bigr)\,\mathrm dm\\
&\qquad
-\varepsilon\int_{\bar w}^{\infty}
\bigl(G(m)-F(m)\bigr)\,\mathrm dm\\
&\le
-A+\varepsilon K
<0.
\end{align*}
Since $\int v\,d(G-F)=0$, it follows that, for every $\delta<1$,
$\int u_{\varepsilon,\delta}\,d(G-F)>0.
$
Finally, choose $\delta<1$ sufficiently close to $1$ that the strict prior optimality of the status quo is preserved, and set $v=v_\varepsilon$ and $u=u_{\varepsilon,\delta}$. Then both utilities satisfy \eqref{eq:ZAZI}, $u$ is more risk averse than $v$, and $M(v)=\mu$. The strategy $\beta^{c^*}$ yields strictly positive information surplus for $u$ at price $\mu$, so \Cref{prop:mustar} gives $M(u)>\mu=M(v)$. Since $B_{c^*}$ is an investment menu and $\alpha$ is TP2, this proves \Cref{prop:invest-log-convex}.
\subsection{Proof of Proposition~\ref{prop:insurance-super-exponential}}
\label{app:insurance-super-exponential}

We first note that super-exponentiality implies that
$\lim_{m\to-\infty}F(m+\Delta)/F(m)=+\infty$ for every $\Delta>0$.
Indeed, fix $K>0$. For all sufficiently negative $m$, $(\log f)'(s)\ge K$ whenever $s\le m+\Delta$. Hence $f(s+\Delta)\ge e^{K\Delta}f(s)$ for every $s\le m$. Integrating with respect to $s$ over $(-\infty,m]$ gives $F(m+\Delta)\ge e^{K\Delta}F(m)$. Since $K$ is arbitrary, the claim follows.

We next construct the benchmark decision problem. Fix $q\in(0,1)$ and $a,\ell>0$. For $z\in\mathbb R$, define
$b_z(w)=\max\{-\ell,\min\{a,z-w\}\}$. The payoff $b_z$ is nonincreasing, $w+b_z(w)$ is nondecreasing, and $b_z$ is nonnegative below $z$ and nonpositive above $z$. Thus $B_z=\{\mathbf{0},b_z\}$ is an insurance menu with threshold $z$. Moreover, $b_z(w)\to-\ell$ as $z\to-\infty$ for every $w$, while $|b_z|\le\max\{a,\ell\}$. Hence, by dominated convergence, $\int b_z\,\mathrm dF\to-\ell$. Choose $z$ sufficiently negative that $\int b_z\,\mathrm dF<0$, write $b=b_z$ and $B=B_z$, and choose $\widehat w<\min\{z-a,\omega\}$.

Let $X=\{0,1\}$ and define the experiment $\alpha$ by $h_w(0)=q\mathbf 1_{\{w\le\widehat w\}}$ and $h_w(1)=1-h_w(0)$. Since $w\mapsto h_w(0)$ is nonincreasing, this binary experiment is TP2. Set $p=F(\widehat w)$. Because $\widehat w<\omega$ and $f$ is strictly positive on $(-\infty,\omega)$, $p\in(0,1)$. Hence both signals have positive ex ante probability. Let $v(w)=w$, set $\mu=pqa\in(0,a)$, and consider the strategy $\beta_0=b$ and $\beta_1=\mathbf{0}$.

Because $v$ is linear, the common information price does not affect the action comparisons. Then, at the prior, the expected net gain of choosing $b$ is $\int b\,\mathrm dF<0$, so that $v$ satisfies \eqref{eq:ZAZI}. Since the expected net gain after signal $0$ is $a > 0$, the expected net gain after signal $1$ is strictly negative and, hence, $\beta$ is optimal for $v$. Since signal $0$ has ex-ante probability $pq$, it follows that the expected net gain of acquiring information (and acting optimally) is $pqa-\mu = 0$, so that $M(v)=\mu$.

We now identify the extreme-tail deterioration that greater risk aversion will amplify. Let $G=G^\mu_\beta$. For every $m<\widehat w-\mu$,
$$
G(m)=qF(m+\mu-a)+(1-q)F(m+\mu),
\qquad
\frac{G(m)}{F(m)}
\ge
(1-q)\frac{F(m+\mu)}{F(m)}
\longrightarrow+\infty.
$$
Thus there exists $t<\widehat w-\mu$ such that $G(m)-F(m)>0$ for every $m<t$. Since $F$ and $G$ have finite first moments, $I:=\int_{-\infty}^t(G(m)-F(m))\,\mathrm dm$ is finite and strictly positive.

For $\lambda>1$, define $u_\lambda(w)=w-(\lambda-1)(t-w)^+$. This utility is increasing, Lipschitz continuous, and concave, hence more risk averse than $v$. Since $F$ and $G$ have the same mean and $(t-w)^+=\int_{-\infty}^t\mathbf 1_{\{w\le m\}}\,\mathrm dm$, integration by parts gives
$$
\int u_\lambda\,\mathrm d(G-F)
=
-(\lambda-1)\int_{-\infty}^t\bigl(G(m)-F(m)\bigr)\,\mathrm dm
=
-(\lambda-1)I<0.
$$
At $\lambda=1$, the prior action comparison and both posterior action comparisons are strict. Their expected-utility differences are continuous in $\lambda$ by dominated convergence, so the same signs hold for some $\lambda>1$ sufficiently close to one. For $u=u_\lambda$, the status quo remains uniquely optimal under the prior and $\beta$ remains the unique optimal informed strategy at price $\mu$. Thus $u$ satisfies \eqref{eq:ZAZI} but strictly prefers not to purchase information at $\mu=M(v)$, and \Cref{prop:mustar} yields $M(u)<M(v)$.
\section{Proofs of Lemmas Used in Examples}
\begin{lemma}[Two-exponential ratio]\label{lemma:two-exponential-ratio}
Let $a,b>0$, $c\geq0$, and $0\leq\lambda_0<\lambda_1$. For $r>0$, set $X_\lambda(r)=ae^{-cr}+be^{-\lambda r}$ and
$$
j(r)=-\frac{1}{r}\log\frac{X_{\lambda_1}(r)}{X_{\lambda_0}(r)}.
$$
If $\lambda_1\leq c$, then $j$ is strictly increasing; if $c\leq\lambda_0$, then $j$ is strictly decreasing.
\end{lemma}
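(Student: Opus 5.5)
The plan is to write $j$ as an integral over $\lambda$ of a function whose monotonicity in $r$ can be read off directly. For fixed $r>0$, the map $\lambda\mapsto\log X_\lambda(r)$ is smooth, with
$$
\frac{\partial}{\partial\lambda}\log X_\lambda(r)=\frac{-rbe^{-\lambda r}}{ae^{-cr}+be^{-\lambda r}}.
$$
By the fundamental theorem of calculus,
$$
\log\frac{X_{\lambda_1}(r)}{X_{\lambda_0}(r)}=-r\int_{\lambda_0}^{\lambda_1}\frac{be^{-\lambda r}}{ae^{-cr}+be^{-\lambda r}}\,\mathrm d\lambda .
$$
The prefactor $-1/r$ in the definition of $j$ then cancels the factor $-r$ exactly. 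Multiplying numerator and denominator of the integrand by $e^{\lambda r}$, I get the representation
$$
j(r)=\int_{\lambda_0}^{\lambda_1}\pi_\lambda(r)\,\mathrm d\lambda,\qquad \pi_\lambda(r):=\frac{b}{b+ae^{(\lambda-c)r}}\in(0,1).
$$
This is the key step. Once it is in place, the comparison reduces to the sign of $\lambda-c$ on the integration range.

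Next I would fix $\lambda$ and study $r\mapsto\pi_\lambda(r)$. Since $a,b>0$, $\pi_\lambda$ is a strictly decreasing function of $e^{(\lambda-c)r}$. If $\lambda<c$, then $e^{(\lambda-c)r}$ is strictly decreasing in $r$, so $\pi_\lambda$ is strictly increasing. If $\lambda>c$, then $\pi_\lambda$ is strictly decreasing. If $\lambda=c$, then $\pi_\lambda$ is constant.

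The two cases then follow by integrating over $\lambda$. If $\lambda_1\le c$, every $\lambda$ in the open interval $(\lambda_0,\lambda_1)$ satisfies $\lambda<c$; at most the single endpoint $\lambda_1=c$ gives equality. Hence for $r<r'$ we have $\pi_\lambda(r')-\pi_\lambda(r)>0$ for all $\lambda\in(\lambda_0,\lambda_1)$. Because $\lambda_0<\lambda_1$, this set has positive length, so integrating a continuous, strictly positive function over it gives $j(r')>j(r)$. If $c\le\lambda_0$, every interior $\lambda$ satisfies $\lambda>c$, and the same argument yields that $j$ is strictly decreasing.

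There is no real obstacle beyond spotting this representation. The only points to check are that differentiation under $\log$ is legitimate, which it is since $X_\lambda(r)>0$ and everything is smooth in $\lambda$. The other is that strictness survives integration, which it does because the integrand difference is continuous in $\lambda$ and strictly signed on an interval of positive length. The hypotheses $c\ge0$ and $\lambda_0\ge0$ are not needed for the argument; they only reflect the intended application to \Cref{lemma:example1,lemma:example2}.
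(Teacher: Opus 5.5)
Your proof is correct. It reaches the result by a different arrangement of what is, underneath, the same computation.

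The paper differentiates in $r$ first. It writes $j'(r)=\bigl(\Phi_{\lambda_1}(r)-\Phi_{\lambda_0}(r)\bigr)/r^2$ with $\Phi_\lambda=\log X_\lambda-r\,\partial_rX_\lambda/X_\lambda$. It then computes $\partial_\lambda\Phi_\lambda(r)=abr^2(c-\lambda)e^{-(c+\lambda)r}/X_\lambda(r)^2$ by a quotient-rule calculation, and integrates over $[\lambda_0,\lambda_1]$ to sign $j'$. You integrate in $\lambda$ first, obtaining $j(r)=\int_{\lambda_0}^{\lambda_1}\pi_\lambda(r)\,\mathrm d\lambda$, and never differentiate in $r$.

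The two are related by interchanging $\partial_r$ and $\int\mathrm d\lambda$. Since $\partial_\lambda\log X_\lambda=-r\pi_\lambda$, one checks $\partial_\lambda\Phi_\lambda=r^2\,\partial_r\pi_\lambda$. So the paper's integrand is exactly $r^2$ times the $r$-derivative of yours, and both proofs rest on the sign of $c-\lambda$.

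Your version buys economy and transparency. The only computation is a one-line $\lambda$-derivative, and strictness follows from the logistic weights $\pi_\lambda$ being strictly monotone in $r$. The representation also gives extra information for free: $0<j(r)<\lambda_1-\lambda_0$, and, by dominated convergence, $j(r)\to(\lambda_1-\lambda_0)\,b/(a+b)$ as $r\downarrow0$. In the two motivating examples this limit is the risk-neutral willingness to pay, $(1-p)H$ and $pL$ respectively. The paper's version instead produces an explicit formula for $j'$, which is the natural starting point if one wants quantitative information about the rate of change, at the cost of a more laborious derivative calculation.
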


\begin{proof}
Set $\Phi_\lambda(r)=\log X_\lambda(r)-r\partial_rX_\lambda(r)/X_\lambda(r)$. Direct differentiation gives
$$
j'(r)=\frac{\Phi_{\lambda_1}(r)-\Phi_{\lambda_0}(r)}{r^2},
\qquad
\partial_\lambda\Phi_\lambda(r)=\frac{ab r^2(c-\lambda)e^{-(c+\lambda)r}}{X_\lambda(r)^2}.
$$
Thus $\Phi_\lambda(r)$ is strictly increasing in $\lambda$ below $c$ and strictly decreasing above $c$. Since $\lambda_0<\lambda_1$, integrating over $[\lambda_0,\lambda_1]$ gives the two strict conclusions.
\end{proof}

\begin{proof}[Proof of \Cref{lemma:example1}]
Let $q=1-p$ and $d=k_H-k_L$. Factoring $e^{-rk_L}$ shows that the ratio inside the logarithm is $X_{d+H}(r)/X_d(r)$ for $X_\lambda(r)=p+qe^{-\lambda r}$. Apply \Cref{lemma:two-exponential-ratio} with $a=p$, $b=q$, $c=0$, $\lambda_0=d$, and $\lambda_1=d+H$.
\end{proof}

\begin{proof}[Proof of \Cref{lemma:example2}]
Let $q=1-p$ and $\Delta=k_H-k_L$. Factoring $e^{-rk_L}$ shows that the ratio inside the logarithm is $X_L(r)/X_0(r)$ for $X_\lambda(r)=qe^{-\Delta r}+pe^{-\lambda r}$. Since $0<L<\Delta$, apply \Cref{lemma:two-exponential-ratio} with $a=q$, $b=p$, $c=\Delta$, $\lambda_0=0$, and $\lambda_1=L$.
\end{proof}

\begin{proof}[Proof of \Cref{lemma:example3}]
Set $\delta=k_H-k_L$ and, for $a,r>0$, let $\varphi_r(a)=a/(e^{ar}-1)$. Then
$$
\frac{\mathrm d}{\mathrm dr}\log \rho(r)=-\delta+\varphi_r(H_E)-\varphi_r(L_P).
$$
The inequalities $1-e^{-ar}<ar<e^{ar}-1$ imply
$$
\frac{e^{-ar}}{r}<\varphi_r(a)<\frac{1}{r}.
$$
Hence
$$
\varphi_r(H_E)-\varphi_r(L_P)<\frac{1-e^{-L_Pr}}{r}<L_P,
$$
so $(\log \rho)'(r)<-\delta+L_P<0$. Since $\rho>0$, it is strictly decreasing. Moreover,
$$
\lim_{r\downarrow0}\rho(r)=\frac{1-p}{p}\frac{H_E}{L_P},
\qquad
\lim_{r\to\infty}\rho(r)=0.
$$
If the first limit is at most one, strict decrease gives $\rho(r)<1$ for every $r>0$. If it exceeds one, continuity and strict decrease give a unique $r_0>0$ such that $\rho(r_0)=1$, with $\rho(r)>1$ for $r<r_0$ and $\rho(r)<1$ for $r>r_0$.
\end{proof}

\end{appendices}

\end{document}